\documentclass[a4paper,11pt]{article}
\pdfoutput=1

\usepackage{tcs}
\addbibresource{master.bib}

\DeclareUnicodeCharacter{1EF3}{\'{y}}
\begin{document}

\title{
High-Temperature Gibbs States are Unentangled \\ and Efficiently Preparable}
\author{
Ainesh Bakshi \\
\texttt{ainesh@mit.edu} \\
MIT
\and
Allen Liu \\
\texttt{cliu568@mit.edu} \\
MIT
\and
Ankur Moitra \\
\texttt{moitra@mit.edu} \\
MIT
\and
Ewin Tang \\
\texttt{ewin@berkeley.edu} \\
UC Berkeley
}
\date{}

\maketitle

\begin{abstract}

We show that thermal states of local Hamiltonians are separable above a constant temperature. 
Specifically, for a local Hamiltonian $H$ on a graph with degree $\degree$, its Gibbs state at inverse temperature $\beta$, denoted by $\rho =e^{-\beta H}/ \tr(e^{-\beta H})$, is a classical distribution over product states for all $\beta < 1/(c\degree)$, where $c$ is a constant.
This proof of \emph{sudden death of thermal entanglement} resolves the fundamental question of whether many-body systems can exhibit entanglement at high temperature.

Moreover, we show that we can efficiently sample from the distribution over product states.
In particular, for any $\beta < 1/( c \degree^2)$, we can prepare a state $\eps$-close to $\rho$ in trace distance with a depth-one quantum circuit and $\poly(n, 1/\eps)$ classical overhead.
\footnote{In independent and concurrent work, Rouzé, França, and Alhambra~\cite{rfa24} obtain an efficient quantum algorithm for preparing high-temperature Gibbs states via a dissipative evolution.}

\end{abstract}

\thispagestyle{empty}
\clearpage
\newpage

\microtypesetup{protrusion=false}
\tableofcontents{}
\thispagestyle{empty}
\microtypesetup{protrusion=true}
\clearpage
\setcounter{page}{1}

\section{Introduction}

A central motivation behind the study of quantum many-body systems is to understand the behavior of entanglement, i.e.\ non-classical correlations.
Spurred by experimental breakthroughs in quantum simulation and quantum phases of matter, a rich body of work has sprung around understanding entanglement in Gibbs states, which model quantum systems at thermal equilibrium (see~\cite{alhambra22} and references therein).
The goal of this literature is to quantify the allowable entanglement of these states as dictated by the locality structure of their underlying Hamiltonians.

We ask perhaps the simplest and most fundamental question in this domain: 
\begin{quote}
\begin{center}
\emph{At what temperatures can many-body systems be entangled?}
\end{center}
\end{quote}

The study of the relationship between temperature and entanglement dates back to the start of modern quantum information.
However, the wealth of ideas generated since then has yielded surprisingly little insight on this question.
Computational investigations suggest that specific models are unentangled above a constant temperature \cite{abv01,gkvb01,amd15}, but these results only study entanglement measures like concurrence and negativity, which can evaluate to zero for entangled states.
As for rigorous results, a simple argument about the geometry of entanglement shows that Gibbs states are unentangled, i.e.\ separable, at some finite temperature, but this temperature is exponentially large in the system size and therefore unphysical for many-body systems~\cite{gb03}.

Recent research also sheds little light on this question.
Over the past two decades, an exciting new suite of techniques has been developed to prove results about correlations in Gibbs states~\cite{alhambra22}.
However, these results, including thermal area laws~\cite{wvhc08,kaa21}, bounds on conditional mutual information~\cite{kb19}, local indistinguishability~\cite{kgkre14}, efficient state preparation~\cite{bk18,bcglpr23}, and efficient learning algorithms~\cite{AAKS21,blmt24}, only bound entanglement through proxy correlation measures that combine both classical and quantum correlations.\footnote{
    In the referenced work, such measures include mathematical quantities like two-point correlation functions, mutual information and conditional mutual information, and locality of effective Hamiltonians, along with operational and algorithmic notions like rapid thermalization (or low-depth state preparation more broadly) and efficient learning from local observables.
}
Consequently, these results can only prove lack of entanglement at \emph{long range}, when classical correlations are sufficiently small.
This remains true even for results which assume the Gibbs state is above a critical temperature~\cite{bk18,hms20,hkt21}.
The existing literature reflects a gap in our set of tools: we only have exiguous
 methods for controlling entanglement independently of classical correlations.
Taken together, this body of work might even lead one to believe that quantum correlations, like classical correlations, can exist at any constant temperature.

In stark contrast to prior work, we show that above some constant temperature, the Gibbs state of any local Hamiltonian exhibits \emph{zero} entanglement.

\begin{theorem}[Informal version of \cref{thm:gibbs-separable}]
    \label{thm:separable-informal}
    Consider a system of $\qubits$ qudits on a constant-dimensional lattice, and a Hamiltonian $H = \sum_a H_a \in \C^{d^\qubits \times d^\qubits}$ where every term is local with respect to the lattice.
    Then there is a constant critical temperature such that, above this temperature, the Gibbs state $\rho = e^{-\beta H} / \tr(e^{-\beta H})$ is separable, i.e.\ expressible as a mixture over product states.
\end{theorem}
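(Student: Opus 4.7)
The plan is to construct an explicit representation of $\rho$ as a mixture of product states by starting from the trivially separable maximally mixed state and locally perturbing it via a high-temperature cluster expansion of $e^{-\beta H}$.

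Concretely, I would first write $e^{-\beta H} = \sum_W K_W$, where $W$ ranges over connected clusters of interaction terms and $K_W$ is an operator supported only on the qudits touched by $W$. For $\beta < 1/(c\degree)$, standard cluster-expansion arguments give absolute convergence in operator norm and a geometric bound $\sum_{W \ni v} \|K_W\|$ for clusters passing through any fixed qudit $v$, organized by cluster size $|W|$ with decay rate $O(\beta \degree)$. For each cluster, I would expand $K_W$ (tensored with $I/d^{\qubits-|W|}$ on the complement) as a signed combination of product states by decomposing its restriction to $W$ in a product operator basis, e.g.\ a generalized Pauli basis. Dividing by $Z = \tr(e^{-\beta H})$ would then yield $\rho$ as a signed combination of product states of total weight $1$.

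The main obstacle is converting this signed combination into a genuine probability distribution, since a separable decomposition requires nonnegative weights. My strategy would be to leverage the maximally mixed state $I/d^\qubits$ as a positive reservoir of weight and absorb the negative cluster contributions by locally redistributing weight among product states that differ only on the qudits of $W$. The quantitative challenge is to show that the cumulative per-qudit negative weight coming from all clusters that touch a given qudit remains smaller than the local reservoir, uniformly over qudits and over clusters. This balance is what should force the specific threshold $\beta < 1/(c\degree)$: the geometric series $\sum_k (C \beta \degree)^k$ must converge and stay small enough to be swallowed by the available positive mass. Carrying this out likely requires a layered or inductive construction that orders clusters by size and carefully tracks how overlapping clusters jointly deform the product-state distribution so as to avoid double-counting.
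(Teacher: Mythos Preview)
Your plan has a real gap at the absorption step, and it is not just bookkeeping. In the additive expansion $e^{-\beta H}=\sum_W K_W$, the total perturbation weight away from the identity is $\sum_{W\neq\varnothing}\|K_W\|=\Theta(\qubits)$ for constant $\beta$: already the size-one clusters contribute $\sum_a \beta\|H_a\|=\Theta(\beta\,\terms)=\Theta(\qubits)$, whereas the identity reservoir you want to borrow from has weight $O(1)$. Concretely, if one processes clusters by size as you suggest, the running object after the first layer is $I-\beta H$, which is not even positive semi-definite once $\qubits$ is large, so no ordering by cluster size can keep the intermediate objects inside the separable cone. Your per-qudit accounting---bounding $\sum_{W\ni v}\|K_W\|$ uniformly in $v$---is a true statement but does not yield separability, because a cluster on several qudits contributes a \emph{correlated} perturbation across those qudits, not independent single-site perturbations that could be charged to disjoint local budgets.

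The paper's argument has a different architecture, and that architecture is the content of the proof. Rather than expanding $e^{-\beta H}$ globally around $I$, it peels off one site at a time,
\[
e^{-\beta H}=e^{-\frac{\beta}{2}H^{([2,\qubits])}}\Bigl(e^{\frac{\beta}{2}H^{([2,\qubits])}}e^{-\beta H}e^{\frac{\beta}{2}H^{([2,\qubits])}}\Bigr)e^{-\frac{\beta}{2}H^{([2,\qubits])}},
\]
and shows via a propagator series that the bracketed factor---the correction from removing the terms touching site $1$---is an $O(1)$ quasi-local perturbation of identity, \emph{independent of $\qubits$}. Decomposing it as a mixture of operators $I+cA$ (with $A$ a product of Hamiltonian terms) pins site $1$; one then recurses on the remaining Gibbs-like factor on $[2,\qubits]$. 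The subtle point is that $|c|$ can grow along the recursion, and controlling it requires two ideas your sketch lacks: the next site to pin is chosen \emph{adaptively} from $\supp(A)$, and one maintains the potential $|c|\le(1-\gamma)^{|S\cap\supp(A)|}(\beta/\beta_c)^{\deg A}$ throughout, where $S$ is the set of still-unpinned sites. This sequential, adaptive structure is precisely what converts a hopeless $\Theta(\qubits)$ global absorption into $\qubits$ sequential $O(1)$ steps.
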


With this result, we witness a \emph{``sudden death of thermal entanglement''}: there is a constant critical temperature, above which correlations are purely classical.
Alternatively formulated, this suggests that an entangled state, if weakly coupled to a bath at high temperature and allowed to thermalize, will become fully unentangled at a \emph{finite} time.
To the best of our knowledge, this result is the first to show strong bounds on short-range entanglement at constant temperature.

Our structural result for high-temperature Gibbs states arises from an \emph{algorithmic} technique, which we apply to the computational task of preparing quantum Gibbs states.
This task, known as quantum Gibbs sampling, has been studied extensively~\cite[Table~1]{ckbg23}, dating back to the work of Temme, Osborne, Vollbrecht, Poulin, and Verstraete~\cite{tovpv11}.
However, remarkably little is known about when Gibbs states can be prepared efficiently.
Despite a wealth of approaches and proposals, efficient algorithms have only been rigorously established in fairly restricted settings, such as for Hamiltonians with constant operator norm~\cite{gslw18}, commuting Hamiltonians~\cite{kb16}, 1D Hamiltonians~\cite{bk18,bcglpr23}, or under strong assumptions, like the eigenstate thermalization hypothesis~\cite{cb21}.
On the other hand, sampling from the Gibbs distribution at low temperature is known to be NP-hard, even in the classical setting~\cite{sly10, ss14, gsv16}.
Thus, there is a natural target: Are all high-temperature Gibbs states efficiently preparable?

Our second main result resolves this question.

\begin{theorem}[Informal version of \cref{thm:main-gibbs}]
    \label{thm:preparable-informal}
    Consider a system of $\qubits$ qubits on a constant-dimensional lattice, and a Hamiltonian $H = \sum_a H_a \in \C^{2^\qubits \times 2^\qubits}$ where every term is local with respect to the lattice.
    Then there is a constant critical temperature such that, above this temperature, the Gibbs state $\rho = e^{-\beta H} / \tr(e^{-\beta H})$ can be prepared to $\eps$ error with a depth-1 quantum circuit and $\poly(\qubits, \log(1/\eps))$ classical overhead.
\end{theorem}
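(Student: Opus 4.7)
The plan is to leverage the structural separability theorem (\cref{thm:separable-informal}) to reduce the quantum preparation task to a purely classical sampling problem. Any product state $|\phi_1\rangle \otimes \cdots \otimes |\phi_\qubits\rangle$ can be prepared by a depth-one circuit that applies a single-qubit unitary to each qubit in parallel starting from $|0\rangle^{\otimes \qubits}$. At the given temperature, the separability theorem yields a representation $\rho = \sum_{\vec\phi} \mu(\vec\phi)\, |\phi\rangle\langle\phi|$ with $|\phi\rangle = \bigotimes_i |\phi_i\rangle$ a product state and $\mu$ a classical distribution over configurations. Thus preparing $\rho$ to $\eps$ trace distance reduces to classically sampling $\vec\phi \sim \mu$ to total-variation distance $\eps$ in $\poly(\qubits,\log(1/\eps))$ time, followed by a depth-one unitary.

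To build the sampler, I would extract from the proof of separability an explicit, sampleable representation of $\mu$. The factor-of-$\degree$ gap between the separability threshold $\beta < 1/(c\degree)$ and the preparation threshold $\beta < 1/(c\degree^2)$ strongly suggests that a high-temperature \emph{cluster expansion} of $e^{-\beta H}$ is the underlying tool: expand $e^{-\beta H}$ as a convergent sum over connected collections of interaction terms (polymers), where Koteck\'y--Preiss-type criteria give absolute convergence exactly in the regime $\beta\|H_a\|\degree \lesssim 1$, matching the stated bound. The cluster weights decay exponentially in the total cluster size, so truncating to polymers of size $O(\log(\qubits/\eps))$ costs only $\eps$ in trace distance while leaving $\poly(\qubits,\log(1/\eps))$ many relevant clusters, all supported on logarithmically sized neighborhoods.

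Given such a truncated expansion, sampling proceeds in two stages. First, sample a collection of polymers from the truncated marginal; since each polymer has support of size $O(\log(\qubits/\eps))$, its weight can be computed exactly in $\poly(\qubits,\log(1/\eps))$ time by brute-force enumeration within its neighborhood. Second, within each polymer, invoke the separability theorem \emph{locally} on the corresponding reduced operator (which is a bounded-dimensional mixed state) to produce a finite product-state decomposition, and sample from that. Outside the support of any sampled polymer, the reduced state is essentially the maximally mixed state and each qubit can be drawn independently from a single-qubit 2-design. Concatenating these samples gives $\vec\phi$, which the depth-one circuit then realizes.

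The main obstacle is the positivity issue in the cluster expansion: individual polymer contributions are signed operators coming from an alternating series, so they cannot be interpreted as probability weights on the nose. The crux of the argument will be reconciling the signed cluster expansion with the genuine separable mixture guaranteed by \cref{thm:separable-informal}: showing that after the cancellations implicit in the expansion, the induced measure on product states is nonnegative and efficiently sampleable. I expect this requires a constructive refinement of the separability proof that operates cluster-by-cluster and handles overlap between nearby clusters via a correlation-decay-style aggregation, replacing signed contributions by bona fide local product-state mixtures whose weights can be computed from the cluster data.
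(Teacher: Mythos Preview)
Your proposal has a genuine gap: it misses the central difficulty, which is \emph{normalization}. The separability proof in the paper is already constructive and algorithmic: it gives an efficient randomized procedure that outputs a product operator $\sigma$ (in fact, a stabilizer product state up to scaling) with $\E[\sigma] = e^{-\beta H}$. But this does not immediately yield a sampler for $\rho$: if you output the normalized state $\sigma/\tr(\sigma)$, then $\E[\sigma/\tr(\sigma)] \neq e^{-\beta H}/\tr(e^{-\beta H})$, because different outcomes $\sigma$ have different traces. To hit $\rho$ in expectation you must sample each $\sigma$ with probability rescaled by $\tr(\sigma)/\tr(e^{-\beta H})$, and computing this ratio is a partition-function estimation problem. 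The paper solves this by a Sinclair--Jerrum-style sampling-to-weak-counting reduction: it builds the tree of possible execution histories of the separability algorithm, equips it with two weight functions (the ``natural'' sampling weight and the ``true'' trace-corrected weight), shows via the structural bounds of the separability argument that their ratio is bounded along every edge, and then runs a rapidly mixing Markov chain on the tree using constant-factor partition-function estimates from cluster expansion. None of this structure is present in your plan.

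Your proposed route---sample a collection of polymers from a truncated cluster expansion of $e^{-\beta H}$, then locally separate each polymer---is not the paper's decomposition and does not obviously work. The separability decomposition is not ``one product-state mixture per polymer, tensored together''; it is a sequential, adaptive site-by-site pinning, where the next site is chosen based on the current Pauli perturbation's support and the coefficient is controlled by a potential function. Your polymer picture would have to confront exactly the signed-weight obstacle you flag, and the fix you sketch (``correlation-decay-style aggregation'') is not a concrete mechanism. Finally, your reading of the extra factor of $\degree$ is off: both separability and preparation use essentially the same critical $\beta_c = \Theta(1/(\degree\locality))$; the additional $\degree$ in the informal preparation statement is there only to make the running-time exponent $6 + \log(\degree)/\log(\beta_c/\beta)$ a constant independent of $\degree$.
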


The task of preparing a Gibbs state is a natural place to look for quantum speedups.
However, since our algorithm is almost entirely classical, we can conclude that this task offers no super-polynomial quantum speedup for temperatures above a fixed constant.
On the other hand, at temperatures below a fixed constant, preparing a Gibbs state is NP-hard, so this task is also unsuitable for finding advantage, at least in the worst case.
Going forward, finer-grained control of the separability and computational thresholds across different classes of Hamiltonians appears crucial to understanding Gibbs sampling, both as a testbed for quantum thermalization and as a candidate for quantum advantage.

\subsection{Results}

We now formally state our results.
For this section, we state our results for systems on qubits $(d = 2)$.
We consider a class of Hamiltonians which are local with respect to an underlying graph, which we call \emph{low-intersection} Hamiltonians.

\begin{definition}[Hamiltonian] \label{def:hamiltonian}
    A \emph{Hamiltonian} on $\qubits$ qubits is an operator $H \in \mathbb{C}^{2^\qubits \times 2^\qubits}$ that we consider as a sum of $\terms$ local \emph{terms} $H_a$, with $H = \sum_{a=1}^\terms H_a$.
    We also refer to these qubits as \emph{sites}.
    For normalization, we assume that the terms have bounded operator norm, $\norm{H_a}_{\op} \leq 1$.
    We say this Hamiltonian is \emph{$\locality$-local} if every term $H_a$ is supported on at most $\locality$ qubits: $\abs{\supp(H_a)} \leq \locality$.

    We say the Hamiltonian has Pauli terms if, for every $a \in [\terms]$, $H_a = \lambda_a E_a$ for $E_a \in \locals = \braces{\id, \sigma_x, \sigma_y, \sigma_z}^{\otimes \qubits}$ a tensor product of Pauli matrices (see \cref{def:paulis}) and $-1 \leq \lambda_a \leq 1$.
\end{definition}

\begin{definition}[Low-intersection Hamiltonian]
    \label{def:low-intersection-ham}
    For an $\qubits$-qubit Hamiltonian $H = \sum_{a=1}^\terms H_a$, we define its underlying \emph{dual interaction graph} $\graph$ to have vertices labeled by $\braces{1,2,\dots,\terms}$ and an edge between $a$ and $b$ if and only if $\supp(H_a) \cap \supp(H_b) \neq \varnothing$.
    We say $H$ has degree $\degree$ if the degree of every vertex in $\graph$ is at most $\degree$.\footnote{
        In the classical literature, it is more conventional to define the degree of a spin system to be the degree of the \emph{interaction graph}, the most number of terms which act on the same site.
        To be consistent with the quantum information literature, we use dual interaction graph degree, and note that this degree is at most $\locality$ times the degree of the interaction graph.
            }
    We call a Hamiltonian $H$ a \emph{$(\degree, \locality)$-low-intersection Hamiltonian} if $H$ has locality $\locality$ and degree $\degree$.
\end{definition}

Low-intersection Hamiltonians generalize geometrically local Hamiltonians in low-dimensional spaces, which is the type of Hamiltonian often considered in physically motivated settings.
For example, a $\locality$-local Hamiltonian which is geometrically local with respect to a $D$-dimensional lattice is a $(\locality(2D)^{\locality-1}, \locality)$-low intersection Hamiltonian.

Our structural result states that low-intersection Gibbs states are separable at high temperature.

\begin{theorem}[High-temperature Gibbs states are separable]
    \label{thm:separable}
    Let $H$ be a $(\degree, \locality)$-low-intersection Hamiltonian on $\qubits$ qubits with Pauli terms.
    Then, for $ \beta < 1/(100\degree \locality) $, the corresponding Gibbs state $\rho = e^{-\beta H} /\tr\parens{ e^{-\beta H} }$ is separable.

    In particular, $\rho$ can be expressed as a distribution over stabilizer product states,\footnote{
        The single-qubit stabilizer states, written here as $\frac12(\id \pm \sigma_z)$, $\frac12(\id \pm \sigma_x)$, and $\frac12(\id \pm \sigma_y)$, may be more familiar as $\{\ketbra{0}{0}, \ketbra{1}{1}, \ketbra{+}{+}, \ketbra{-}{-}, \ketbra{\ii}{\ii}, \ketbra{-\ii}{-\ii}\}$.
    } i.e.\ $ A_1 \otimes A_2 \otimes \ldots \otimes A_n $ such that for each $j \in [n]$, $A_j \in \Set{ \tfrac12(I \pm \sigma_x), \tfrac12(I \pm \sigma_y),  \tfrac12(I \pm \sigma_z)}$.
\end{theorem}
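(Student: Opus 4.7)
My plan is to establish the claim through an explicit convex decomposition: construct a candidate distribution $\mu(\vec P, \vec s)$ on product stabilizer states so that $\rho = \sum_{\vec P, \vec s} \mu(\vec P, \vec s) \bigotimes_i \pi_{P_i, s_i}$, where $\pi_{P, s} := \tfrac12 (I + s P)$, and then verify $\mu \ge 0$ using a cluster expansion adapted to the Pauli structure of $H$. A natural ansatz is motivated by the single-qubit identity $I + 3 s P = 6 \pi_{P, s} - 2I$: matching Pauli coefficients against $\rho = 2^{-\qubits} \sum_Q \tr(\rho Q)\,Q$ yields
\[
\mu(\vec P, \vec s) \;=\; \frac{1}{6^{\qubits}} \tr\!\left(\rho \prod_{i=1}^{\qubits} (I + 3 s_i P_i)\right) \;=\; \frac{1}{6^{\qubits}} \sum_{S \subseteq [\qubits]} 3^{|S|} \Big(\prod_{i \in S} s_i\Big) \tr\!\left(\rho\, Q_{\vec P, S}\right),
\]
where $Q_{\vec P, S}$ denotes the Pauli with $P_i$ at sites $i \in S$ and $I$ elsewhere. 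A direct computation confirms that this $\mu$ is automatically normalized and that $\sum \mu(\vec P, \vec s)\,\bigotimes_i \pi_{P_i, s_i} = \rho$ as an operator identity; the only remaining condition is nonnegativity.

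For nonnegativity I leverage the hypothesis that $H$ has Pauli terms. Taylor-expanding $e^{-\beta H}$, each correlator $\tr(\rho\, Q_{\vec P, S})$ becomes a weighted sum over ordered sequences of Hamiltonian terms whose Pauli products equal $Q_{\vec P, S}$ up to a scalar phase. Grouping such sequences into connected polymers in the dual interaction graph yields an abstract polymer model; a Koteck\'y--Preiss-type criterion applies at $\beta < 1/(100 \degree \locality)$, guaranteeing absolute convergence. Substituting back into the formula for $\mu$ and performing a linked-cluster (cumulant) resummation, I expect to rewrite
\[
\mu(\vec P, \vec s) \;=\; \frac{1}{6^{\qubits}} \exp\!\bigl(\mathcal{F}(\vec P, \vec s)\bigr),
\]
where $\mathcal{F}$ is a convergent sum of connected cluster contributions. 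Since $\mathcal{F}$ is real, the exponential is manifestly positive, proving $\mu \ge 0$ and hence separability of $\rho$ into the claimed stabilizer product mixture.

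The main obstacle is executing this linked-cluster resummation. A naive term-by-term bound on $|\tr(\rho\, Q_{\vec P, S})|$ produces corrections of size $(1 + O(\beta \degree))^{\qubits}$ that grow exponentially in $\qubits$; only by reorganizing into connected pieces and tracking cancellations between overlapping polymers can one obtain a finite, $\qubits$-independent bound on the exponent $\mathcal{F}$. The noncommutativity of distinct Pauli terms and the $\pm i$ phases from their products make the combinatorics more delicate than the classical Mayer expansion, but the Koteck\'y--Preiss framework together with the reality of $\tr(\rho Q)$ for Hermitian $Q$ should force the phases to combine into real, convergent contributions to $\mathcal{F}$. This connected-expansion analysis is the heart of the argument, and it is also precisely what would enable efficient classical sampling from $\mu$, which is essential for the algorithmic preparation result in \cref{thm:preparable-informal}.
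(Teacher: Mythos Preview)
Your explicit formula $\mu(\vec P, \vec s) = 6^{-\qubits}\tr\bigl(\rho \prod_i (I+3s_i P_i)\bigr)$ does reconstruct $\rho$, so everything reduces to showing $\mu \ge 0$, and exponentiating a convergent linked-cluster expansion of $\log\mu$ is a natural plan. The gap is quantitative and not merely a matter of filling in details. The single-site operators $A_i = I + 3s_iP_i$ have $\frac12\tr(P_i A_i) = 3s_i$, so in the polymer model for $\log\tr\bigl(e^{-\beta H}\prod_i A_i\bigr)$ each connected multiset $\gamma$ of Hamiltonian terms picks up an extra factor bounded only by $3^{|\supp(\gamma)|} \le 3^{\locality|\gamma|}$ when traced against $\prod_{i\in\supp(\gamma)}A_i$; already a size-one polymer $\{a\}$ with $E_a$ matching $\vec P$ on its support has weight of order $3^{\locality}\beta$ rather than $\beta$. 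A Koteck\'y--Preiss verification therefore closes only when $3^{\locality}\beta\degree$ is below a universal constant, which is exponentially worse in $\locality$ than the claimed $\beta < 1/(100\degree\locality)$. The factor $3$ is intrinsic to your ansatz---it is exactly the overcompleteness ratio needed for stabilizer projectors to reproduce arbitrary single-qubit Pauli coefficients---so it cannot be removed by regrouping; you would need a cancellation mechanism beyond absolute-value KP bounds, and none is indicated.

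The paper sidesteps this entirely with an adaptive, algorithmic construction rather than a closed-form $\mu$. It peels off one site at a time via the propagator $e^{-\frac\beta2 H^{(S)}}e^{\frac\beta2 H^{(S\setminus R)}}$, samples a single monomial from its convergent series, and maintains a random configuration $\bigotimes_j(I+c_jX_j)$ (each $X_j$ a Pauli) whose expectation equals $e^{-\beta H}$. The crux is a potential argument (\cref{lem:coeff-potential}) showing that $|c_j|\le(1-\tfrac{3}{5\locality})^{|S\cap\supp(X_j)|}(\beta/\beta_c)^{t_j}$ is preserved at every step, which relies on choosing the next site to pin \emph{inside} the support of the active monomial so that whenever $|c_j|$ can grow its unpinned support must shrink. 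This adaptivity is what buys the $1/\locality$ rather than $3^{-\locality}$ dependence, and your static expansion has no analogue of it.
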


Assuming that the terms are Pauli can be done without loss of generality: an arbitrary $H_a$ supported on $\locality$ qubits can be expanded into the Pauli basis, inflating the number of terms by at most a factor of $4^\locality$.
We also show an analogous result for qudits (\cref{thm:gibbs-separable}).

A simple argument shows that there is \emph{some} temperature for which Gibbs states are separable: the Gibbs state at infinite temperature ($\beta = 0$) is the maximally mixed state, $\id/d^\qubits$, which is in the interior of the convex hull of product states~\cite{bcjlps99}.
So, as $\beta$ tends to zero, $\rho$ will eventually enter the interior of this convex hull, making it separable.
However, this only implies separability for $\beta < \exp(-c n)$ for a fixed constant $c$, as Gurvits and Barnum show that this is the radius of a ``separable ball'' around the maximally mixed state~\cite{gb03}.
Such arguments do not use the locality structure of the Hamiltonian, and without using it, this $\beta$ is tight~\cite{as06}.
We show that Gibbs states are separable above a temperature that is independent of system size: this distinction is important, as for real-world materials, $n$ is very large and temperatures which depend on $n$ are non-physical.

Additionally, prior work on thermal area laws shows that the mutual information between two subsystems $L$ and $R$ of a Gibbs state is bounded by a linear function in the size of the surface area of $L$, $\beta \abs{\partial L}$~\cite{wvhc08,kaa21}.
Here, mutual information serves as a proxy for entanglement, mirroring area laws for entanglement entropy in ground states~\cite{hastings07,aklv13,aag22}.
On the other hand, our results treat entanglement directly and demonstrate that, for any temperature higher than a fixed constant, it is identically zero.

\begin{remark}[Entanglement death under an external field]
    Kuwahara and Hatano~\cite{kh11} show that for a two-qubit system with a weak interaction term, $H = \beta h_{12} + h_1 + h_2$, the corresponding thermal state $e^H/\tr(e^H)$ can still exhibit entanglement provided that the size of the one-local terms, $\norm{h_1 + h_2}_{\op}$, is correspondingly large, scaling as $\log(1/\beta)/\beta$.
    This shows that our assumption that $\abs{\lambda_a} \leq 1$ cannot be removed for one-local terms. It also shows that although increasing the external field may not enhance correlations, it influences the extent to which those correlations can be represented classically.
        \end{remark}

Next, we state our result on efficiently preparing Gibbs states. 

\begin{theorem}[High-temperature Gibbs states are efficiently preparable]
\label{thm:main-gibbs}
Given $\eps \in (0, 1)$, a $(\degree, \locality)$-low-intersection Hamiltonian $H$ on $\qubits$ qubits with Pauli terms, and a $\beta <  \beta_c = 1/(\const\degree \locality)$, where $\const$ is a fixed universal constant, consider the Gibbs state $\rho = e^{-\beta H }/ \tr\Paren{ e^{-\beta H} }$.
Then, there exists a classical randomized algorithm which runs in time
\[
    \bigOt*{\qubits^{ 6 +\frac{\log(\degree)}{\log(\beta_c/\beta)} } \cdot \log^3(1/\eps) \cdot  \poly(\locality, \degree)}.
\]
The algorithm outputs a classical description of a product state $\wh{\rho} = A_1 \otimes \dots \otimes A_n$ where every $A_j$ is an eigenvector of a Pauli matrix, $A_j \in \Set{ \tfrac12(I \pm \sigma_x), \tfrac12(I \pm \sigma_y),  \tfrac12(I \pm \sigma_z)}$,
such that the mixture over $\wh{\rho}$ is close to $\rho$ in trace distance,
\[
    \Norm{ \rho - \E[\wh{\rho}]}_{1 }  \leq \eps,
\]
where the expectation is only over the randomness of the algorithm.
\end{theorem}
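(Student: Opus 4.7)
The plan is to invoke Theorem~\ref{thm:separable} to reduce the quantum preparation task to a classical sampling task, and then construct an efficient classical sampler for the resulting distribution over stabilizer product states. By separability, $\rho = \sum_{s} q(s) \bigotimes_{j=1}^\qubits A_{s_j}$, where each $A_{s_j}$ is one of the six single-qubit Pauli eigenstates indexed by $s_j$. If I can classically draw $s$ from some $\widetilde{q}$ with $\norm{\widetilde{q} - q}_1 \leq \eps$, a depth-$1$ circuit that applies the appropriate single-qubit rotation at each site produces the desired mixture to within trace distance $\eps$. So it suffices to build a classical sampler for $q$.

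The core of the algorithm is to compute, approximately, the conditional marginals of $q$. To that end, I would extract from the proof of Theorem~\ref{thm:separable} an explicit formula for $q(s)$ of the form $q(s) = W(s)/Z$, where $W(s)$ admits a cluster expansion indexed by connected subgraphs of the dual interaction graph $\graph$. The natural source of such a formula is the series $e^{-\beta H} = \sum_{k \geq 0} (-\beta)^k H^k / k!$: expanding each $H^k$ into the Pauli basis and regrouping by connected components of the multiset of Hamiltonian indices that appear yields a polymer-style expansion whose local weights are manifestly computable. For $\beta < 1/(\const\degree\locality)$, the total weight of clusters of size $\ell$ rooted at any fixed vertex is bounded by $(\beta/\beta_c)^\ell$, giving absolute convergence and well-controlled truncation error.

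I would then sample from $q$ by sequential conditional sampling: fix an order on the qubits and, at step $j$, approximate each conditional marginal $q(s_j = \sigma \mid s_{<j})$ by the truncated cluster expansion, then draw $s_j$ from that approximation. Truncating to clusters of size at most $\ell$ restricts attention to a radius-$\ell$ neighborhood of qubit $j$ in $\graph$, so each marginal is computable in time $\degree^{O(\ell)} \cdot \poly(\locality,\degree)$. Choosing $\ell = O(\log(\qubits/\eps)/\log(\beta_c/\beta))$ bounds the total trace-distance error by $\eps$ via a union bound over the $\qubits$ sampling steps, giving the runtime $\qubits^{O(1) + \log(\degree)/\log(\beta_c/\beta)} \cdot \log^{O(1)}(1/\eps) \cdot \poly(\locality,\degree)$ claimed in the theorem; the $\qubits^6$ factor would arise from repeated linear algebra on small neighborhoods along the sweep and from the precision needed to track the normalization ratios between successive conditional marginals.

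The main technical obstacle is passing from the \emph{existential} separable decomposition of Theorem~\ref{thm:separable} to a \emph{constructive} one whose marginals admit a \emph{local} cluster expansion with the correct combinatorial weights. Separable decompositions are highly non-unique, and the one produced by Theorem~\ref{thm:separable} is likely defined only implicitly via a recursive or randomized rounding scheme; I need to show that a specific decomposition can be chosen so that each conditional marginal depends, up to $\eps$-error, only on a small neighborhood in $\graph$, and that the resulting ``local weights'' satisfy the Kotecký--Preiss-type bounds required for absolute convergence. Matching the threshold $\beta_c = 1/(\const\degree\locality)$ from Theorem~\ref{thm:separable} rather than a weaker one losing extra powers of $\degree$ or $\locality$ is delicate, and is where the precise exponent $\log(\degree)/\log(\beta_c/\beta)$ in the runtime is determined.
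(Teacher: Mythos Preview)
Your proposal has a genuine gap at exactly the point you flag as ``the main technical obstacle,'' and you do not close it. The decomposition $q$ produced by Theorem~\ref{thm:separable} is defined by an \emph{adaptive} randomized procedure: which site is pinned next, and therefore which stabilizer states appear, depends on all previous random choices (see the potential argument in Lemma~\ref{lem:coeff-potential} and the branching in \cref{algo:single-step-2}). There is no closed-form $q(s) = W(s)/Z$ with a polymer-style $W$; the expansion you sketch by writing $e^{-\beta H} = \sum_k (-\beta)^k H^k/k!$ and regrouping by connected components gives an expansion of the \emph{matrix} $e^{-\beta H}$, not of this particular separable decomposition, and separable decompositions are non-unique. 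So the step ``the conditional marginal $q(s_j \mid s_{<j})$ admits a local cluster expansion'' is asserted but not established, and it is not clear how to establish it for this $q$.

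The paper bypasses this entirely. Instead of computing marginals of $q$, it runs the separability algorithm itself as a sampler; this samples from a distribution $\omega$ that is wrong by a factor of $\tr(\sigma(\mathcal{X}))$ at each leaf. It then corrects to the right distribution $\kappa$ via a Jerrum--Sinclair-style random walk on the \emph{sample tree} of the algorithm (\cref{def:ham-sample-tree}, \cref{lem:sample-from-tree}), where each step only requires a \emph{constant-factor} estimate of $\tr(Q_v)$, obtained from the partition-function estimator of Theorem~\ref{thm:estimation-log-partition} with $\eta = O(1)$. The key structural input is that the coefficients $\abs{c} \leq 1/2$ throughout (Lemma~\ref{lem:coeff-potential} at the smaller $\beta_c$), which keeps the weight ratios $\kappa/\omega$ bounded across edges and gives $\Omega(1/n)$ conductance. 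The $n^6$ exponent comes from the $O(n^4)$ mixing-time bound times the partition-function query cost, not from neighborhood linear algebra; and the $\poly\log(1/\eps)$ dependence comes from truncating the propagator series at $t_{\max} = O(\log(n/\eps))$ and from the number of Markov chain epochs, not from $\eps$-accurate marginal computation.
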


\cref{thm:preparable-informal} follows from this theorem by considering $\beta < \beta_c / \degree$, so that the exponent on $\qubits$, $6 + \frac{\log(\degree)}{\log(\beta_c/\beta)} < 7$, is a constant.
The output $\wh{\rho}$ can be prepared on a quantum computer with one layer of single-qubit gates.

With this algorithm, one can prepare a copy of $\rho$ by running our randomized algorithm, taking the classical description of $\wh{\rho}$, and preparing it with a depth-one quantum circuit.  Note that preparing $\rho$ in expectation is equivalent to preparing $\rho$, since one can just ``forget the algorithm's steps'' to get a copy of $\rho$ without classical side correlations.
Even stronger, if one performs the algorithm coherently, it outputs a purification of $\rho$.

\paragraph{On temperature.}
Our bound on the critical temperature for the algorithm cannot be significantly improved.
Work of Sly and Sun~\cite{ss14} shows that approximately sampling from the anti-ferromagnetic Ising model on a $\Delta$-regular graph is NP-hard at the ``uniqueness threshold'' for the $\Delta$-regular tree, which is at $\beta = \bigTheta{1/\Delta} = \bigTheta{1/\degree}$~\cite{sst14}.
This hardness statement for classical Gibbs sampling implies hardness for the more general problem of quantum Gibbs sampling.

\paragraph{On quantum advantage.}
Our results show that any reasonable downstream computational task derived from high-temperature Gibbs sampling can be performed efficiently on a classical computer, and so is not a good candidate for quantum advantage.
In some sense, this strengthens prior work on algorithmic cluster expansion, which gives efficient classical algorithms for various tasks associated with high-temperature Gibbs states, including computing partition functions~\cite{mh21}, estimating local observables $\tr(O\rho)$, and sampling from $\rho$ measured in a product basis~\cite{yl23}.
Indeed, our structural result shows that such states are almost entirely classical, giving a simple explanation for why these efficient classical algorithms are possible.

\subsection{Related work}

\paragraph{Locality in high-temperature Gibbs states.}
Several works in the quantum information literature focus on high-temperature Gibbs states.
These bound correlation measures which do not distinguish quantum and classical, including covariance of observables~\cite{hms20} and local indistinguishability~\cite{kgkre14,bk18}.
Our work, which shows a lack of quantum correlation even in the presence of classical correlations, is a significant departure from this prior work.

\paragraph{Sudden death of entanglement.}
Sudden death of entanglement is the phenomenon that two entangled qubits, when subject to environmental noise, does not exhibit exponentially decaying entanglement with time, as classical correlations do, but rather become entirely disentangled after a finite amount of time~\cite{ye09}.
The body of literature on ESD (entanglement sudden death) focuses on analyzing two-qubit systems under various noise models~\cite{ye04,fmb05,aj08} and experimental demonstrations of ESD~\cite{amhswrd07}.
Its study as a phenomenon of many-body systems is more limited, likely because even defining a measure of entanglement for mixed states is non-trivial~\cite{hhhh09}, and separability is difficult to detect for large systems.

For many-body systems, prior work studies the sudden death of \emph{entanglement negativity}, an entanglement monotone which can be computed efficiently~\cite{vw02}, in specific spin systems, either through heuristic arguments or numerical calculations~\cite{abv01,gkvb01,amd15,sdhs16,hc18}.
States with zero entanglement negativity need not be separable~\cite{hhh98}.
So, our work is vastly more general and formally proves separability of Gibbs states at high temperature, a much stronger result than lack of entanglement negativity.

\paragraph{Other kinds of quantum systems.}
There are natural quantum systems which exhibit entanglement at even infinite temperature: a recently studied example is systems with non-abelian symmetries~\cite{lprs24,mlwhs24}.
From the perspective of statistical physics, our setting of considering states $\rho \propto e^{-\beta H}$ is closest to the grand canonical ensemble, where there is no particle nor energy conservation; when conservation is enforced, entanglement appears able to persist to high temperature, suggesting a possible difference between these statistical ensembles.

\paragraph{Geometry of entanglement.}
Prior arguments for the separability of Gibbs states use properties about the geometry of entanglement~\cite{bz06}, and in particular, the distance from the maximally mixed state to the closest non-separable state.
This distance is exponentially small in the number of subsystems $\qubits$, as shown by Gurvits and Barnum~\cite{gb03}, and subsequent work has further strengthened and generalized this result~\cite{as06,hildebrand07,wk23}.
As previously observed, this suffices to show separability for Gibbs states for exponentially small $\beta$~\cite{cr06,pw24}.
Since our result shows separability at constant temperature, it implies that sudden death of thermal entanglement occurs at physically reasonable temperatures, even for many-body systems.

\paragraph{Classical Gibbs sampling.}
Classical Gibbs sampling is, comparatively, well-understood and researchers have characterized sharp phase transitions wherein there is some critical temperature, above which sampling the Gibbs state is computationally efficient and below which it is computationally hard \cite{sly10, ss14, gsv16}.
The fact that there are such wide gaps in our understanding of quantum Gibbs sampling is especially surprising given the diverse range of techniques we have for classical spin systems, such as path coupling \cite{bd97}, canonical paths \cite{jsv04}, correlation decay \cite{weitz06}, abstract polymer models \cite{kp86}, zero-free regions \cite{barvinok16}, spectral independence \cite{alg21}, and stochastic localization \cite{ce22}.
Our results can be thought of as a sampling-to-counting reduction for quantum systems.
Thus, we give an algorithmic alternative to directly working with Lindbladians of dissipative evolutions.

\paragraph{Concurrent work on high-temperature Gibbs sampling.}
In concurrent and independent work, Rouzé, França, and Alhambra~\cite{rfa24} prove that the dissipative evolution studied by Chen, Kastoryano, and Gilyén~\cite{ckg23} has a constant spectral gap at high temperature, showing that this evolution is an efficient Gibbs sampling algorithm at high temperature.
The techniques are significantly different than ours, controlling the evolution by viewing it as a perturbation of an infinite-temperature dissipation.
We do not analyze such an evolution.

\paragraph{Cluster expansion and abstract polymer models.}
Our approach is based on cluster expansion, which allows quantities like the log-partition function and marginals of high-temperature Gibbs states to be expressed as exponentially decaying Taylor series.
This tool has been used to show a variety of efficient algorithms around Gibbs states and real-time evolution, including the computation of partition functions~\cite{mh21}, learning of high-temperature Gibbs states~\cite{hkt21}, and sampling from the measurement distribution of a high-temperature Gibbs state~\cite{yl23}.

Among these, the latter sampling result of Yin and Lucas bears the most similarity to our result, using a sampling-to-counting reduction with cluster expansion to give a classical algorithm to sample from the measurement probabilities of a Gibbs state in, say, the computational basis.
Our work also implies an efficient algorithm for this task, and achieves a stronger result by tackling the additional challenge of performing these arguments ``without measuring''.  

\section{Technical overview}

We now describe the key technical ingredients used to obtain \cref{thm:separable,thm:main-gibbs}.

To prove separability, we construct a randomized algorithm which outputs an un-normalized state $\sigma$ such that, over the randomness of the algorithm, $\E[\sigma] = e^{-\beta H}$.
For small enough $\beta$, $\sigma$ is separable, thereby proving that $e^{-\beta H}$ and $\rho$ are separable.
Our algorithm is a careful sampling-to-counting reduction: we start with all $\qubits$ sites ``unpinned''.
Then, we sample a separable state on one site according to the marginal of $e^{-\beta H}$, and are left with a posterior-like state on the rest of the sites.
We then iterate, sampling a neighboring site according to the marginal of the posterior state, and so on until we have ``pinned'' all of the sites, at which point we output the resulting state as $\sigma$.
Notice that the idea of pinning site by site only makes sense if $e^{-\beta H}$ is indeed separable.

The main technical challenge is controlling the effect of the pinned sites on the un-pinned sites: if these correlations grow too large, they can push us out of the hull of separable states, which in the algorithm corresponds to the probabilities in the distribution becoming negative.
We show that this does not occur via a potential argument.
Pinned sites only affect nearby sites, so for a choice of $\beta$ sufficiently small in terms of degree, over the course of the algorithm, sites stay close to maximally mixed, and therefore we can pin them without issue.

To turn the proof of separability into a state preparation algorithm, we need to account for normalization: for the separability argument, we were not concerned with mild, constant-factor errors in sampling probabilities, provided that these did not lead to a failure in separability.
In particular, since we need to output a quantum state, $\sigma / \tr(\sigma)$, to get $\rho = e^{-\beta H} / \tr(e^{-\beta H})$ in expectation, the probability the algorithm outputs $\sigma$ is off by a factor of $\tr(\sigma) / \tr(e^{-\beta H})$.
This can be fixed by adjusting the sampling probabilities for pinning by an appropriate factor which corresponds to an expectation value of a local observable with respect to $\rho$; this is the ``counting'' of the sampling-to-counting reduction.
Doing this directly achieves a sub-optimal running time, so we instead use an alternative reduction goes from sampling to weak approximate counting, inspired by a similar algorithm of Sinclair and Jerrum~\cite{sj89}.

Throughout this section, we use the following notation.
We take $H = \sum_{a = 1}^\terms H_a$ to be a $(\degree, \locality)$-low-intersection Hamiltonian, and let $\beta< 1/(100 \degree \locality)$.
We also consider this Hamiltonian in various restricted forms: for a set $S \subseteq [\qubits]$, $H_{(S)}$ is the sum over terms $H_a$ whose support $\supp(H_a)$ intersects $S$; and $H^{(S)}$ is the sum over terms $H_a$ whose support is in $S$ (see \cref{def:restricted-ham}).
So, for example, $H_{(\braces{j})}$ is the sum over terms which contain $j$ in their support, and we abbreviate this as $H_{(j)}$; and $H - H_{(j)} = H^{([n] \setminus \braces{j})}$.

\subsection{Gibbs states are unentangled} \label{subsec:tech-unentangled}

The separability argument proceeds as follows.
To show that $e^{-\beta H}$ is separable, we consider writing it as
\begin{align} \label{eq:first-decomp}
    e^{-\beta H}
    = e^{-\frac{\beta}{2}H^{([2, \qubits])}} \underbrace{\parens[\big]{e^{\frac{\beta}{2}H^{([2, \qubits])}} e^{-\beta H} e^{\frac{\beta}{2}H^{([2,\qubits])}}}}_{\text{\eqref{eq:first-decomp}.(1)}} e^{-\frac{\beta}{2}H^{([2,\qubits])}},
\end{align}
where $[2, \qubits] = [\qubits] \setminus \braces{1}$.
This separates the exponential into the outside terms, which have no dependence on site 1, and \eqref{eq:first-decomp}.(1), which models the ``correction'' needed to go from the Gibbs state on $[2, \qubits]$ to the full Gibbs state.
We will see that the expression \eqref{eq:first-decomp}.(1) is separable.
In particular, this is enough to show that $e^{-\beta H}$ is separable between site 1 and the rest of the sites: because \eqref{eq:first-decomp}.(1) is separable, we can write it as a sum of positive semi-definite product operators $\sum_a X_a = \sum_a X_{a,1} \otimes X_{a,[2,\qubits]}$, and then conclude\footnote{
    Here, we overload notation and use $e^{-\frac{\beta}{2}H^{([2,\qubits])}}$ to denote both the $2^\qubits \times 2^\qubits$ operator that is $\id$ on the first site, and the $2^{\qubits-1} \times 2^{\qubits-1}$ operator with the first site factored out.
}
\begin{align}
    e^{-\beta H}
    &= e^{-\frac{\beta}{2}H^{([2, \qubits])}} \parens[\Big]{\sum_a X_{a,1} \otimes X_{a,[2,\qubits]}} e^{-\frac{\beta}{2}H^{([2,\qubits])}} \nonumber \\
    &= \sum_a X_{a,1} \otimes \parens[\big]{e^{-\frac{\beta}{2}H^{([2, \qubits])}} X_{a,[2,\qubits]} e^{-\frac{\beta}{2}H^{([2,\qubits])}}} \,. \label{eq:sep-one-site}
\end{align}
We say here that we have ``pinned'' site 1: we have written $e^{-\beta H}$ as a convex combination of operators where the state on site 1 is fixed and separate from the Gibbs-like state on the remaining sites.
This suggests a path to iterating this argument: within this sum, separate another site $j$ from $[2, \qubits]$, and recurse, continuing to argue that the intermediate operators are separable.
For example, to do this for site $j = 2$, we would write the expression in the parenthetical as
\begin{align} \label{eq:next-site}
    e^{-\frac{\beta}{2}H^{([2, \qubits])}} X_{a,[2,\qubits]} e^{-\frac{\beta}{2}H^{([2,\qubits])}}
    = e^{-\frac{\beta}{2}H^{([3, \qubits])}} \underbrace{\parens[\big]{e^{\frac{\beta}{2}H^{([3, \qubits])}} e^{-\frac{\beta}{2}H^{([2, \qubits])}} X_{a,[2,\qubits]} e^{-\frac{\beta}{2}H^{([2, \qubits])}} e^{\frac{\beta}{2}H^{([3, \qubits])}}}}_{\text{\eqref{eq:next-site}.(1)}} e^{-\frac{\beta}{2}H^{([3,\qubits])}},
\end{align}
and argue that \eqref{eq:next-site}.(1) is separable.

To execute this argument, there are two considerations to handle.
First, we need to understand how to choose an appropriate set of $X_a$'s; the decomposition of a separable operator into product operators is not unique, and there are some such operators for which \eqref{eq:next-site}.(1) is not separable.
Our argument gives a decomposition where $X_{a,[2,\qubits]} \propto \id + c_a P_a$ where $P_a$ is a tensor product of Paulis and $c_a$ is exponentially small in the support size of $P_a$, $\abs{c_a} \leq (1-\frac{3}{5\locality})^{\abs{\supp(P_a)}}$.
Second, we need to decide how to pin sites in the recursion.
The naive approach is to do this non-adaptively, which would effectively amount to analyzing the decomposition
\begin{align*}
    e^{-\beta H} =
    \parens[\Big]{\prod_{i=1}^\qubits (e^{-\frac{\beta}{2}H^{([i, \qubits])}}e^{\frac{\beta}{2}H^{([i+1, \qubits])}})}^\dagger
    \parens[\Big]{\prod_{i=1}^\qubits (e^{-\frac{\beta}{2}H^{([i, \qubits])}}e^{\frac{\beta}{2}H^{([i+1, \qubits])}})}
\end{align*}
from the inside out.
It turns out to be natural for us to choose the next site to pin, $j$, based on the $X_a$; in particular, we choose $j$ to be in the support of $X_a$.
Essentially, every time we pin a site, it can increase the size of the correlation, $\abs{c_a}$; but provided we remove a site involved in the correlation each time, the correlation never grows large enough to generate entanglement.

\paragraph{Intuition for separability.}
Understanding when an operator is separable is difficult in general~\cite{hhhh09}.
Our proof works with a sufficient condition for separability: that the operator is sufficiently close to the identity.
The identity operator $\id$ is separable because $\id = \id_1 \otimes \dots \otimes \id_\qubits$.
Moreover, perturbations of the identity operator by a tensor product of Paulis $P = P_1 \otimes \dots \otimes P_\qubits$ are also separable: $\id + P$ is separable because it is a sum over $\ketbra{\psi}{\psi}$, where $\ket{\psi} = \ket{\psi_1}\otimes \dots \otimes \ket{\psi_\qubits}$ is an eigenvector of $P$ with eigenvalue $+1$.
Because these eigenvectors are product, $\id + P$ is separable.

We can use this to build more general sufficient conditions for separability:\footnote{
    In fact, all these conditions suffice for writing the operator as a positive sum of (density matrices of) stabilizer product states, not just PSD product operators.
} for any (Hermitian) operator $Y$, $\id + Y$ is separable provided the perturbation is exponentially small in support size, $\norm{Y}_{\op} \leq 0.25^{\abs{\supp(Y)}}$.
To see this, write $Y = \sum_{P}\frac{1}{2^\qubits}\tr(YP) P$ where the sum is over all Paulis $P$ such that $\supp(P) \subseteq \supp(Y)$.
This sum has $4^{\abs{\supp(Y)}}$ summands.
Then, we can write $\id + Y = \sum_P \frac{1}{4^{\abs{\supp(Y)}}}(\id + \frac{4^{\abs{\supp(Y)}}}{2^\qubits}\tr(YP) P)$, where $\frac{4^{\abs{\supp(Y)}}}{2^\qubits} \tr(YP) \leq 4^{\abs{\supp(Y)}}\norm{Y}_{\op}$.
$\id + c P$ is separable provided $\abs{c} \leq 1$; so, $\id + Y$ is separable provided $\norm{Y}_{\op} \leq 4^{-\abs{\supp(Y)}}$.

A generalization gives us our main piece of intuition: \emph{an operator is separable if it is a quasi-local perturbation of the identity}.
Formally, an operator $X$ is separable if we can write it as
\begin{align} \label{eq:quasilocal-is-sep}
    X = \id + \sum_{t=1}^\infty \sum_{b \in \mathcal{I}^{(t)}} Y_b
\end{align}
where $\abs{\supp(Y_b)} \leq t$ for every $b \in \mathcal{I}^{(t)}$; and for each $t = 1,2,\dots$, the operators at that level have bounded norm, $\nu_t \coloneqq \sum_{b \in \mathcal{I}^{(t)}} \norm{Y_b}_{\op} \leq 0.1^{t}$.
The argument is similar, but we phrase it in an algorithmic way, like the separability proof does.
Consider sampling a random index $t \in \braces{1,2,\dots}$ with probability $2^{-t}$, and then sampling a random term $b \in \mathcal{I}^{(t)}$ with probability $\frac{\norm{Y_b}_{\op}}{\nu_t}$.
Then the random matrix $\id + 2^t\nu_t \frac{Y_b}{\norm{Y_b}_{\op}}$ is equal to $X$ in expectation, and is separable, because the perturbation has sufficiently small operator norm.
So, $X$ is a convex combination of separable operators, and thus is separable.

\paragraph{Sampling a term.}
The expression from \eqref{eq:first-decomp}.(1) is separable for the same reason: this operator, which models the correlation between one site and the rest of the sites, is a quasi-local perturbation of identity.
To understand this, we consider the Taylor expansion of $e^{-\beta H} \cdot e^{\beta\parens{ H- H_{(j)} }}$.
For our choice of $\beta$, this series converges exponentially fast and we can sample a single term from it in an unbiased manner.

Expanding the Taylor series, we show that we can write
\begin{equation}
\label{eqn:series-expansion}
    e^{-\beta H} \cdot e^{\beta\parens{ H-H_{(j)} }  } = \sum_{t=0}^{\infty} \frac{\beta^t}{t!} f_t\parens{ H, H_{(j)}}, 
\end{equation}
where $f_t\parens{ H, H_{(j)}}$ is a matrix-valued, degree-$t$ polynomial satisfying the following recurrence relation:
\begin{align}
    f_{0}\parens{ H, H_{(j)}} &= \id \nonumber \\
    f_{t+1}\parens{ H, H_{(j)}} &= -\bracks{H, f_t\parens{ H, H_{(j)}} } - f_t\parens{ H, H_{(j)}}  H_{(j)},
\label{eqn:recursive-expansion}
\end{align}
From the structure of the recurrence, we prove inductively that we can write $f_t\parens{ H, H_{(j)}} = \sum_{b \in \mathcal{Q}^{(t)}} \mu_{b} H_{b_1}\dots H_{b_t}$, where $\mathcal{Q}^{(t)}$ are the tuples of terms $b = (b_1,\dots,b_t) \in [\terms]^t$ such that the corresponding support, $\supp(H_{b_1}) \cup \dots \cup \supp(H_{b_t})$, is connected and contains $j$.
So, $f_t\parens{ H, H_{(j)}}$ is $t\locality$-local.
Further, 
\begin{align} \label{eq:tech-mu-bound}
    \sum_{b \in \mathcal{Q}^{(t)}} \abs{\mu_b} \leq (3(\degree+1))^t t! \,.
\end{align}
So, for our choice of $\beta$, the series in \eqref{eqn:series-expansion} decays exponentially with $t$.
This is proven in \cref{thm:low-deg-approx}.
So, the full series expansion of \eqref{eqn:series-expansion} is
\[
    e^{-\beta H} \cdot e^{\beta\parens{ H-H_{(j)} }} =  I + \sum_{t = 1}^{\infty} \sum_{b \in \mathcal{Q}^{(t)}} \frac{\beta^t}{t!}\mu_{(b_1,\dots,b_t)} H_{b_1} \dots H_{b_t}.
\]
This is promising with regards to eventually proving separability, as this form is similar to that of \eqref{eq:quasilocal-is-sep}.
As with that series, we can write this series as an expectation of simple operators close to the identity.
Because the series decays exponentially, we can do this: let $\id + c E$ be a random variable, where, for any $b$ appearing in the series, $c = \frac{\beta^t}{t!} \cdot (2^t (3(\degree + 1))^t t!)$ and $E = \frac{\mu_b}{\abs{\mu_b}} H_{b_1}\dots H_{b_t}$ with probability $\frac{1}{2^t}\frac{\abs{\mu_{b}}}{(3(\degree+1))^t t!}$ (and $cE = 0$ with whatever probability mass remains).
By \eqref{eq:tech-mu-bound}, this probability distribution is well defined, and by construction, $\E[\id + cE] = e^{-\beta H} \cdot e^{\beta\parens{ H-H_{(j)} }}$.
Further, as desired, the operator norm of $cE$ is exponentially small in its support size.

\paragraph{Pinning the first site.} Next, we show that we can pin the Gibbs state at a site, say site $1$, which means we fix the $2\times 2$ density matrix on this site. We begin by decomposing the Gibbs state as in \eqref{eq:first-decomp}:
\begin{equation}
\label{eqn:decomposing-gibbs-state}
    e^{-\beta H}
    = e^{-\frac{\beta}{2}H^{([2, \qubits])}}
    \underbrace{ \parens[\big]{e^{-\frac{\beta}{2} H } \cdot e^{ \frac{\beta}{2}H^{([2,\qubits])}} }^\dagger}_{\eqref{eqn:decomposing-gibbs-state}.(1) }
    \cdot \underbrace{ \parens[\big]{e^{-\frac{\beta}{2} H } \cdot e^{ \frac{\beta}{2}H^{([2,\qubits])}} }}_{\eqref{eqn:decomposing-gibbs-state}.(2) }
    e^{-\frac{\beta}{2}H^{([2,\qubits])}}.
\end{equation}
Since $H^{([2,\qubits])} = H - H_{(1)}$, the two terms \eqref{eqn:decomposing-gibbs-state}.(1) and \eqref{eqn:decomposing-gibbs-state}.(2) are operators in the form appearing in \eqref{eqn:series-expansion}, with $\beta$ replaced by $\beta/2$. Therefore, we can invoke the aforementioned sampling primitive twice to obtain independent, unbiased samples $I + c_1 E_1$ and $I +  c_2 E_2$ such that, by linearity of expectation,
\begin{align*}
    &\expecf{}{ \Paren{ I + c_1 E_1  }^{\dagger} \cdot \Paren{ I + c_2 E_2 } }
    = e^{ \frac{\beta}{2}H^{([2,\qubits])}} e^{-\beta H} e^{ \frac{\beta}{2}H^{([2,\qubits])}},
\end{align*}
where the expectation is over the randomness of the sampling.
We would like to factor out site 1 from the rest of the sites for this decomposition, but $\Paren{I + c_1 E_1}^{\dagger}\Paren{I + c_2 E_2}$ is not Hermitian.
To make it Hermitian, we average it with its Hermitian conjugate, and then write it as a sum over simpler expressions:
\begin{multline}\label{eq:cases1}
    \frac{1}{2} \left( \Paren{ I +   c_1 E_1 }^\dagger  \Paren{ I +   c_2 E_2 }  + \Paren{ I +   c_2 E_2 }^\dagger  \Paren{ I +   c_1 E_1 } \right)
    \\
    = I + c_1 \parens[\Big]{\frac{E_1 + E_1^\dagger}{2}}  + c_2 \parens[\Big]{\frac{E_2 + E_2^\dagger}{2}}  + c_1c_2\parens[\Big]{\frac{E_1^\dagger E_2 + E_2^\dagger E_1}{2}}. \\
    = \frac{1}{3}\parens[\Bigg]{
        \underbrace{I + 3c_1 \parens[\Big]{\frac{E_1 + E_1^\dagger}{2}}}_{\text{\eqref{eq:cases1}.(1)}}
        + \underbrace{I + 3c_2 \parens[\Big]{\frac{E_2 + E_2^\dagger}{2}}}_{\text{\eqref{eq:cases1}.(2)}}
        + \underbrace{I + 3c_1c_2 \parens[\Big]{\frac{E_1^\dagger E_2 + E_2^\dagger E_1}{2}}}_{\text{\eqref{eq:cases1}.(3)}}
    }
\end{multline}
So, consider sampling $c_1E_1$ and $c_2E_2$ as before, and with probability $\frac13$ each, outputting $\id + cA$ to equal either \eqref{eq:cases1}.(1), \eqref{eq:cases1}.(2), or \eqref{eq:cases1}.(3).
Then the expectation is still the same, $\E[\id + cA] = e^{ \frac{\beta}{2}H^{([2,\qubits])}} e^{-\beta H} e^{ \frac{\beta}{2}H^{([2,\qubits])}}$, and $A$ is simple---in fact, since it comes from a product of Hamiltonian terms, it is Pauli.
Further, from this, we can conclude separability: $c_1E_1$ and $c_2E_2$ are both exponentially small in their support size, so $cA$ has this property too, and so $\id + cA$ is always separable.\footnote{
    The operator norm of the perturbation is a factor of three larger for $cA$ versus, say, $c_1E_1$, but this can be accounted for by decreasing $\beta$ by a corresponding factor.
    Technically, we do not need that $c$ is exponentially decaying in support size, since $A$ is always a Pauli; however, this property is used later in the argument.
}
We can use this to pin the first site and recurse on the subsystem of $\qubits-1$ qubits, as shown in \eqref{eq:sep-one-site}.\footnote{
    For example, we could write $\id + c(A_1 \otimes A_{[2,\qubits]}) = \frac12 (\id + A_1) \otimes (\id + cA_{[2,\qubits]}) + \frac12 (\id - A_1) \otimes (\id - cA_{[2,\qubits]})$, and these operators on the right-hand side, across all choices of $\id + cA$, form the $X_a$'s from \eqref{eq:sep-one-site}.
}
For ease of notation, though, we will defer explicitly pinning the sites to the end of the argument, and continue thinking about the full, $\qubits$-qubit system.

\paragraph{Recursing on the remaining sites.}
So, at this point, we have a randomly chosen matrix $\id + cA$ such that
\begin{equation}\label{eq:unbiased-step1}
    \expecf{}{ e^{-\frac{\beta}{2}H^{([2,\qubits])}} (\id + cA)  e^{-\frac{\beta}{2}H^{([2,\qubits])}} } = e^{-\beta H}. 
\end{equation}
Now, we want to peel off another site, following \eqref{eq:next-site}.
Consider the expression: 
\begin{multline}
    e^{-\frac{\beta}{2}H^{([2,\qubits])}} (\id + cA)  e^{-\frac{\beta}{2}H^{([2,\qubits])}} \\
    = e^{-\frac{\beta}{2}H^{([3, \qubits])}} \underbrace{\parens[\big]{e^{\frac{\beta}{2}H^{([3, \qubits])}} e^{-\frac{\beta}{2}H^{([2, \qubits])}} (\id + cA) e^{-\frac{\beta}{2}H^{([2, \qubits])}} e^{\frac{\beta}{2}H^{([3, \qubits])}}}}_{\text{\eqref{eqn:expanding-for-second-site}.(1)}} e^{-\frac{\beta}{2}H^{([3,\qubits])}}.
    \label{eqn:expanding-for-second-site}
\end{multline}
We would like to separate site 2 from $[3, \qubits]$ in \eqref{eqn:expanding-for-second-site}.(1) by decomposing it into simple product operators, as we did with site 1.
As before, we can obtain independent, unbiased samples of $e^{-\frac{\beta}{2}H^{([2, \qubits])}} e^{\frac{\beta}{2}H^{([3, \qubits])}}$, since this takes the form of the expansion from \eqref{eqn:series-expansion}.
Let $I + c_1' E_1'$ and $I + c_2' E_2'$ be the resulting samples, and thus
\begin{equation}\label{eq:cases2}
    \begin{split}
        & e^{-\frac{\beta}{2}H^{[3,\qubits]}} \Paren{ \frac{1}{2} ( I + c_1' E_1')^{\dagger}  (I + cA)   (I + c_2' E_2') + \frac{1}{2} ( I + c_2' E_2' )^{\dagger}  (I + cA) (I + c_1' E_1' ) } 
        e^{ -\frac{\beta}{2} H^{([3,\qubits])} } 
    \end{split}
\end{equation}
is an unbiased estimator of $e^{-\frac{\beta}{2}H^{([2,\qubits])}} (\id + cA)  e^{-\frac{\beta}{2}H^{([2,\qubits])}} $.
As before, we then expand these terms into simple, Hermitian operators, show that we can uniformly sample one of them to pin the second site, and then recurse.
These simple operators will either take the form of $\id + c'A'$ or $(\id + cA) \otimes (\id + c'A')$.
The second case occurs when, say, $cA$ is only supported on site 1, and so that perturbation can be factored out of the rest of the argument.
We call the class of simple operators we see in our algorithm \emph{configurations} (see \cref{def:configuration}).
For this configuration, which we call $\sigma$, we have that, over the randomness of pinning sites 1 and 2,
\begin{align*}
    \expecf{}{e^{-\frac{\beta}{2}H^{[3,\qubits]}} \cdot \sigma \cdot e^{-\frac{\beta}{2}H^{[3,\qubits]}}}
    = \expecf{}{e^{-\frac{\beta}{2}H^{([2,\qubits])}} \cdot (\id + cA) \cdot e^{-\frac{\beta}{2}H^{([2,\qubits])}}}
    = e^{ -\beta H }
\end{align*}

\paragraph{The full proof of separability.}
The general argument proceeds as follows.
We start with the configuration state $\sigma = \id$, and a set of unpinned sites $S$, which is initialized to $[\qubits]$.
Then, we pin one site at a time, removing sites from $S$ and modifying our configuration state $\sigma$ such that the following invariant always holds:
\begin{equation} \label{eq:intro-invariant}
    \expecf{}{e^{-\frac{\beta}{2} H^{(S)}} \cdot \sigma \cdot e^{-\frac{\beta}{2} H^{(S)}}} = e^{-\beta H},
\end{equation}
where the randomness is over the entire algorithm.
After extracting all sites, we are left with a random configuration $\sigma$, which looks like a tensor product of operators of the form $\id + cA$ where $A$ is Pauli, and the guarantee of the invariant, which states for $S = \varnothing$ that $\E[\sigma] = e^{-\beta H}$.
So, if we can argue that the coefficients are always bounded, $\abs{c} \leq 1$, then $\sigma$ is always separable, and thus $e^{-\beta H}$ is separable.

We show that $\abs{c} \leq 1$ can be maintained throughout the algorithm, so that $\sigma$ is always separable.
To do this, it suffices to argue that the $I + cA$ in e.g.\ \eqref{eq:unbiased-step1} satisfies $\abs{c} \leq 1$.
This perturbation is challenging to handle, since it changes, and can even increase, as we recurse.
To control the perturbation, we introduce a carefully chosen potential function: we prove inductively that
\[
    \abs{c} \leq \parens[\Big]{1 - \frac{3}{5\locality}}^{\abs{\supp(S) \cap \supp(A)}}
\]
throughout the algorithm (see \cref{lem:coeff-potential}).  In other words, we show that in each iteration of the recursion, if the support of $A$ on the unpinned sites grows, then the coefficient in front gets exponentially smaller, whereas if the coefficient grows, then the unpinned support of $A$ shrinks such that the coefficient does not have enough iterations to exceed 1.
Carrying out this argument requires carefully adjusting our sampling probabilities for the different cases in \eqref{eq:cases2} instead of sampling uniformly at random---see \cref{algo:single-step-2} for more details.  
Importantly, this argument also requires choosing the next site to pin based on the support of $A$: otherwise, the coefficient could grow without the unpinned support of $A$ shrinking.

\subsection{Gibbs states are efficiently preparable}

We now switch gears and describe our algorithm for preparing a Gibbs state, which is based on our algorithm for separability.

\paragraph{Implementing the separability algorithm efficiently.}
Recall that the algorithm in our proof of separability takes as input a description of a Hamiltonian $H$, along with a sufficiently small temperature $\beta$, and outputs a product operator $\sigma$ such that $\E[\sigma] = e^{-\beta H}$.
This algorithm can straightforwardly be made to run in $\bigO{\qubits \poly(\log(\qubits), \degree, \locality)}$ time (see \cref{lem:one-step-runtime}).

The primary complication is how to sample from the Taylor series in \eqref{eqn:series-expansion},
\begin{equation}
\label{eqn:series-truncation}
    e^{-\beta H} \cdot e^{\beta\parens{ H - H_{(j)} }  } = I +  \sum_{t=1}^{\infty} \frac{\beta^t}{t!} f_t\parens{ H, H_{(j)}} .
\end{equation}
The approach we described previously requires explicitly computing $f_t$ for whatever $t$ is sampled, paying exponential time in $t$ (which is constant in expectation, but is expected to be $\log(\qubits)$ at some point over the course of the algorithm).
This requires polynomial time; but we can design a faster sampling algorithm by exploiting the recursive structure of $f_t$, incurring only a linear dependence in $t$ (and therefore being constant time in expectation).
We do as follows: let $E_0 = I$ and suppose for some $t$, the current sample is $E_t$. Then, using \eqref{eqn:recursive-expansion}, 
\begin{equation*}
    f_{t+1}\paren{H , H_{(1)}}  =  \expecf{}{ - [H, E_t ]  - E_t \cdot H_{(1)} }.
\end{equation*}
The expression inside the expectation is a sum of $\bigO{t\degree}$ Pauli terms; we can sample each of these with an appropriate probability to get an unbiased sample.

\paragraph{Accounting for normalization.}
So, we can efficiently compute a product operator $\sigma$ whose expectation is $e^{-\beta H}$; in fact, $\sigma$ can be made to be a stabilizer product state, up to normalization.
However, if we then prepare $\sigma / \tr(\sigma)$ on a quantum computer, this does not give us $\rho = e^{-\beta H} / \tr(e^{-\beta H})$ in expectation, since the possible output $\sigma$'s will have different traces; essentially, we sample them with a probability which does not correspond to its ``true'' weight in $\rho$.

Let's return to our high-level description of the separability algorithm at the start of \cref{subsec:tech-unentangled}.
To pin our first site, we decompose
\begin{align*}
    e^{-\beta H} = \sum_a e^{-\frac{\beta}{2}H^{([2,\qubits])}} X_a e^{-\frac{\beta}{2}H^{([2,\qubits])}}
\end{align*}
where $X_a$ is a simple separable operator.
In fact, it will essentially be a local perturbation of the identity, $\id + cA$.
If we just care about separability, we can simply recurse our argument on the summands without concern for normalization.
However, since we wish to prepare the final output, we would like to maintain that our intermediate operators are genuine quantum states, e.g.\ unit trace.
So, we write
\begin{align} \label{eq:exact-probabilities}
    \rho = \frac{e^{-\beta H}}{\tr(e^{-\beta H})} = \sum_a \underbrace{\frac{\tr(e^{-\frac{\beta}{2}H^{([2,\qubits])}} X_a e^{-\frac{\beta}{2}H^{([2,\qubits])}})}{\tr(e^{-\beta H})}}_{\text{\eqref{eq:exact-probabilities}.(1)}}
    \underbrace{\frac{e^{-\frac{\beta}{2}H^{([2,\qubits])}} X_a e^{-\frac{\beta}{2}H^{([2,\qubits])}}}{\tr(e^{-\frac{\beta}{2}H^{([2,\qubits])}} X_a e^{-\frac{\beta}{2}H^{([2,\qubits])}})}}_{\text{\eqref{eq:exact-probabilities}.(2)}},
\end{align}
and now see that we want to sample the state \eqref{eq:exact-probabilities}.(2) with precisely the probability in \eqref{eq:exact-probabilities}.(1).
If we can do this, then we have successfully reduced the task of preparing $\rho$ to the task of preparing the simpler state \eqref{eq:exact-probabilities}.(2).
Similarly to marginal probabilities of classical Gibbs states, at high temperature, e.g.\ $\beta < 1/(100\degree)$, we can leverage cluster expansion techniques~\cite{kp86,hkt24,mh21} to estimate these probabilities efficiently (see~\cref{thm:estimation-log-partition,appendix-cluster-exp}).
This direct approach for sampling should work; however, estimating to $\eps$ multiplicative error using these methods incurs a polynomial dependence on $1/\eps$, which we would like to avoid.

\paragraph{Fast Gibbs state preparation.}
We show that it suffices to have $\bigO{1}$-error estimators to the partition function of the Gibbs state $\tr\parens{ e^{-\beta H}}$ and any restricted Gibbs state $\tr\parens{ e^{-\beta H^{(S)}}}$. 
Our reduction from $\eps$-approximate sampling to this ``weak'' constant-factor approximate counting is reminiscent of the one for self-reducible problems by Jerrum and Sinclair~\cite{sj89} and works as follows: we imagine setting up a tree over the space of possible choices that our sampling algorithm makes when producing an unbiased estimator to $e^{-\beta H}$.
These choices correspond to picking a site, expanding a series similar to \eqref{eqn:series-truncation}, and sampling one term in the expansion.
The tree $T$ has depth at most $\qubits$ and every node is labeled by the current configuration $\sigma$ and set of unpinned sites, $S$:
\begin{itemize}
    \item The root node is labeled by $\id$ and $[\qubits]$;
    \item A node at depth $k$ is indexed by a configuration $\sigma$ and some set $S \subseteq [\qubits]$ of size at most $\qubits - k$; implicitly, this represents the state $e^{-\frac{\beta}{2} H^{(S)}} \cdot \sigma \cdot e^{-\frac{\beta}{2} H^{(S)}}$, where the sites not in $S$ are pinned and the sites in $S$ are Gibbs-like;
    \item The possible children of a node are obtained by pinning (at least) one additional site in $S$ and modifying the configuration as described previously.
\end{itemize}
Ideally, we want to sample a product state by simply walking straight down this tree, going from a parent to the child $\sigma, S$ with probability proportional to $\tr(e^{-\frac{\beta}{2} H^{(S)}} \cdot \sigma \cdot e^{-\frac{\beta}{2} H^{(S)}})$.
However, this requires exactly computing these probabilities, as in \eqref{eq:exact-probabilities}.
Since we can only do approximate counting, we don't know the exact probabilities we should go down each of the branches in the tree.
Instead, we set up a random walk on this tree that goes both up and down (i.e. we could pin some site and then later unpin it and resample it) but has the desired stationary distribution on the leaves.
In particular, with respect to this distribution, the average of the product states at the leaves is close to the Gibbs state.

We show that this random walk mixes quickly: crucially, the separability algorithm is already doing a fairly good job at sampling with the desired probabilities, matching it to constant multiplicative error.
For example, at depth 1, when the first site is pinned, $\sigma$ always has the form of $\id + cA$ where\footnote{
    For the separability argument, we only needed that $\abs{c} \leq 1$.
    Here, we use a slightly lower $\beta_c$ so that $\abs{c} \leq 1/2$ holds instead.
} $\norm{cA}_{\op} \leq \frac12$, so $\tr(e^{-\frac{\beta}{2} H^{(S)}} \cdot \sigma \cdot e^{-\frac{\beta}{2} H^{(S)}})$ is always within a constant multiplicative factor of the constant $\tr(e^{-\beta H^{(S)}})$.
Because the walk mixes quickly, we can simply run the walk until it hits a leaf and output the product state corresponding to that leaf.

\section{Background}
\label{sec:background}

We begin by defining some notation.
We use the notation $[k] = \braces{1, 2, \dots, k}$ and take $\ii = \sqrt{-1}$ to be the imaginary unit.
On a graph $G = (V, E)$, we use ``neighborhood'' and ``neighbor'' in the closed sense: a node $v \in V$ is its own neighbor, and the neighborhood of $S \subseteq V$ is the union of the neighborhoods of all $v \in S$.
Throughout, $\bigO{f}$, $\bigOmega{f}$, and $\bigTheta{f}$ are big O notation, and $\bigOt{f} = \bigO{f \polylog(f)}$.

\subsection{Linear algebra}

We work in the Hilbert space corresponding to a system of $\qubits$ qudits with local dimension $d$, $\mathbb{C}^d \otimes \dots \otimes \mathbb{C}^d$.
For a subsystem $S \subseteq [\qubits]$ and an operator $A \in \C^{d^{\abs{S}} \times d^{\abs{S}}}$, we occasionally use the notation $A_S$ to refer to the operator which is $A$ on the subsystem $S$, and the identity otherwise.

For a matrix $A$, we use $A^\dagger$ to denote its conjugate transpose, $\norm{A}_\op$ to denote its operator norm, and $\norm{A}_1$ to denote its trace norm; for a vector $v$, we use $\norm{v}$ to denote its Euclidean norm.
For operators $A, B$, $A \preceq B$ denotes the Loewner order: $A \preceq B$ when $B - A$ is PSD (positive semi-definite), and similarly for $\succeq$.

We now define some basic notions used throughout this work.

\begin{definition}[Support of an operator]
    For an operator $P \in \mathbb{C}^{d^\qubits \times d^\qubits}$ on a system of $\qubits$ qudits, its \emph{support}, $\supp(P) \subseteq [\qubits]$ is the subset of qudits that $P$ acts non-trivially on.
    That is, $\supp(P)$ is the minimal set of qudits such that $P$ can be written as $P = O_{\supp(P)} \otimes \id_{[n] \setminus \supp(P)}$ for some operator $O$.
\end{definition}

\begin{definition}[Separable operator]
    For an operator $A \in \mathbb{C}^{d^\qubits \times d^\qubits}$ on $\qubits$ qudits, we say that $A$ is a \emph{product operator} if it can be expressed as a tensor product of operators over subsystems, $A = A^{(1)} \otimes \dots \otimes A^{(\qubits)}$.
    $A$ is \emph{separable} if it can be written as a sum of product operators $A = \sum_i (A_i^{(1)} \otimes \dots \otimes A_i^{(\qubits)})$ where every $A_i^{(j)}$ is positive semi-definite.
    We say a state is separable if its density matrix $\rho$ is separable.
\end{definition}

A property that we will use throughout is that the matrix exponential does not increase support size, which follows from its series expansion.

\begin{fact}\label{fact:trivial}
    For a square matrix $B$, $e^{\id \otimes B} = \id \otimes e^{B}$.  
\end{fact}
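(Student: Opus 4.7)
The plan is to prove this by direct series expansion, exploiting the mixed-product property of the Kronecker (tensor) product. Since both sides of the equation are well-defined matrices and the matrix exponential is defined by its Taylor series, it suffices to show that the series on each side agree term by term.

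First I would recall the mixed-product identity: for matrices of compatible dimensions, $(A_1 \otimes B_1)(A_2 \otimes B_2) = (A_1 A_2) \otimes (B_1 B_2)$. Applying this inductively yields $(\id \otimes B)^k = \id^k \otimes B^k = \id \otimes B^k$ for every integer $k \geq 0$, where the $k = 0$ case is handled by the convention $M^0 = \id$ applied to the ambient space.

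Next, I would expand the exponential of the tensored operator and pull the identity factor out of the sum. Explicitly,
\[
    e^{\id \otimes B}
    = \sum_{k=0}^{\infty} \frac{(\id \otimes B)^k}{k!}
    = \sum_{k=0}^{\infty} \frac{\id \otimes B^k}{k!}
    = \id \otimes \sum_{k=0}^{\infty} \frac{B^k}{k!}
    = \id \otimes e^{B},
\]
where the third equality uses bilinearity of the tensor product (the identity factor is constant in $k$). No analytic obstacle arises: the series for $e^B$ converges absolutely for any square $B$, and componentwise convergence of the tensored series is equivalent to convergence of the $B$-series because each entry of $\id \otimes B^k$ is either $0$ or an entry of $B^k$.

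There is no real main obstacle here; the only subtlety worth explicitly flagging is justifying the interchange of the infinite sum with the tensor product, which follows from the fact that tensoring with a fixed finite-dimensional identity is a continuous (indeed, linear) map on matrices in any norm. Once that is noted, the computation above completes the proof.
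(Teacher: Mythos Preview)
Your proof is correct and matches the paper's approach: the paper states this as a fact without explicit proof, noting only that it ``follows from its series expansion,'' which is precisely the argument you give.
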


We will sometimes work in the qubit setting, $d = 2$, in which case it is useful to work with the space in the basis of (tensor products of) Pauli matrices.

\begin{definition}[Pauli matrices] \label{def:paulis}
    The Pauli matrices are the following $2 \times 2$ Hermitian matrices.
    \begin{equation*}
    \sigma_\id = \begin{pmatrix}
        1 & 0 \\ 0 & 1
    \end{pmatrix}, \qquad \sigma_x = \begin{pmatrix}
        0 & 1 \\
        1 & 0
    \end{pmatrix}, \qquad \sigma_y = \begin{pmatrix}
        0 & -\ii \\
        \ii & 0
    \end{pmatrix}, \qquad \sigma_z = \begin{pmatrix}
        1 & 0\\
        0& -1
    \end{pmatrix}.
    \end{equation*}
    These matrices are unitary and (consequently) involutory.
    Further, $\sigma_x \sigma_y = \ii \sigma_z$, $\sigma_y \sigma_z = \ii \sigma_x$, and $\sigma_z \sigma_x = \ii \sigma_y$, so the product of Pauli matrices is a Pauli matrix, possibly up to a factor of $\{\ii, -1, -\ii\}$.
    The non-identity Pauli matrices are traceless.
    We also consider tensor products of Pauli matrices, $P_1 \otimes \dots \otimes P_\qubits$ where $P_i \in \{\sigma_\id, \sigma_{x}, \sigma_{y}, \sigma_{z}\}$ for all $i \in [\qubits]$.
    The set of such products of Pauli matrices, which we denote $\locals$, form an orthogonal basis for the vector space of $2^\qubits \times 2^\qubits$ (complex) Hermitian matrices under the trace inner product.
    The product of two elements of $\locals$ is an element of $\locals$, possibly up to a factor of $\{\ii, -1, -\ii\}$.
\end{definition}

\subsection{Hamiltonians of interacting systems}

\begin{remark}[Hamiltonian input model] \label{rmk:ham-input}
    Recall the definition of a $\locality$-local Hamiltonian (\cref{def:hamiltonian}).
    When a Hamiltonian $H = \sum_{a = 1}^\terms H_a$ is given as input, we assume it is given as a list of the terms restricted to their support, along with their corresponding supports, $(\supp(H_1), (H_1)_{\supp(H_1)}), \dots, (\supp(H_\terms), (H_\terms)_{\supp(H_\terms)})$.
    This description has size linear in $\qubits$ and $\terms$, so it can be manipulated efficiently.

    We will primarily be concerned with algorithms in the qubit setting where the terms $H_a$ are rescaled Paulis $\lambda_a E_a$.
    Here, we can store a term $H_a$ as $\lambda_a$ and the Pauli $E_a = P_1 \otimes \dots \otimes P_\qubits \in \locals$ as a set $\{(i, P_i)\}_{i \in \supp(P)}$.
    When the parameters $\lambda_a$ are size $\bigO{1}$, this representation has size $\bigO{\locality \qubits}$.
    We will further assume that we have pre-processed it so that we can access it in various ways.
    In particular, given $i \in [\qubits]$, we will suppose we can get a list of terms $H_a$ such that $i \in \supp(H_a)$ in time linear in the size of the list.
    Further, given $a \in [\terms]$, we will suppose we can get a list of its neighbors in the dual interaction graph, the $b \in [\terms]$ such that $\supp(H_a) \cap \supp(H_b) \neq \varnothing$.
    These pre-processing steps can be done in $\bigOt{(\qubits + \terms)\locality \degree}$ time for a $\locality$-local Hamiltonian with dual interaction graph with degree $\degree$, and so will not dominate the running time.
\end{remark}

We will commonly need to consider Hamiltonians on various subsets of terms, which we notate in the following way.

\begin{definition}[Restricted Hamiltonian]
\label{def:restricted-ham}
For a Hamiltonian $H = \sum_{a=1}^\terms H_a$ on $n$ sites and a subset of sites $S\subseteq [n]$, we define the restricted Hamiltonian and its corresponding set of terms to be 
\[
    H^{(S)} = \sum_{a \in \mathcal{E}^{(S)}} H_a
    \quad \text{where} \quad \mathcal{E}^{(S)} = \braces{a \in [\terms] \mid \supp(H_a) \subseteq S}
\]
and the localized Hamiltonian to be
\[
    H_{(S)} = \sum_{a \in \mathcal{E}_{(S)}} H_a
    \quad \text{where} \quad \mathcal{E}_{(S)} = \braces{a \in [\terms] \mid \supp(H_a) \cap S \neq \varnothing}
    \,.
\]
We will also consider restricting a Hamiltonian to a set of terms, $H^{(\mathcal{Q})} = \sum_{a \in \mathcal{Q}} H_a$ when $\mathcal{Q} \subseteq [\terms]$.
\end{definition}

\subsection{Approximating the partition function}

At high temperature, we can efficiently estimate the partition function of a Gibbs state to multiplicative error.
This follows from cluster expansion; we defer most of the proof to the appendix.
Prior work immediately gives this result for $\beta \ll 1/\degree^2$ \cite{hkt21}; in the appendix we combine these results with tighter analyses \cite{fv17,mh21} to improve the critical temperature to $\bigTheta{1/\degree}$.

\begin{theorem}[Estimating the log-partition function]
\label{thm:estimation-log-partition}
    Let $H = \sum_{a=1}^\terms \lambda_a E_a \in \C^{2^\qubits \times 2^\qubits}$ be a $\locality$-local qubit Hamiltonian with Pauli terms, i.e.\ where $-1 \leq \lambda_a \leq 1$ and $E_a \in \locals$ is Pauli.
    Let $\degree$ be the degree of the dual interaction graph of $H$.
    Let $0 \leq \beta < \beta_c = 1/(100\degree)$.
    Then given any $0<\eta<1$, we can compute an estimate $\wh{z}$ such that 
    \begin{equation*}
        \log(\tr( e^{ -\beta H }  ) )  - \eta  \leq \wh{z} \leq \log(\tr(e^{ -\beta H  }) ) + \eta ,
    \end{equation*}
    in time 
    \begin{align*}
        \qubits \cdot (\qubits/\eta)^{ \frac{ 4+\log(\degree)} {\log(\beta_c/\beta) } }  \cdot \locality \cdot \degree^2 \cdot \polylog(\qubits/\eta). 
    \end{align*}
\end{theorem}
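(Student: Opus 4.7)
The plan is to establish an absolutely convergent cluster expansion for $\log \tr(e^{-\beta H})$, truncate it at a suitable order, and then enumerate the surviving clusters. First, starting from the power series
\[
    \tr(e^{-\beta H}) = \sum_{k \geq 0} \frac{(-\beta)^k}{k!} \sum_{a_1,\dots,a_k \in [\terms]} \lambda_{a_1}\cdots\lambda_{a_k}\, \tr(E_{a_1}\cdots E_{a_k}),
\]
I would invoke the abstract polymer/Ursell machinery of Koteck\'y--Preiss~\cite{kp86}, as adapted to quantum spin systems in~\cite{mh21,hkt21}, to rewrite
\[
    \log \tr(e^{-\beta H}) = \qubits \log 2 + \sum_{W} \phi(W)\, \prod_{a} (\beta \lambda_a)^{W(a)},
\]
where the outer sum ranges over multisets $W$ of Hamiltonian terms and $\phi(W)$ is the Ursell coefficient, which is supported on clusters whose induced subgraph in the dual interaction graph $\graph$ is connected. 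This reduces the task to evaluating a sum indexed by connected clusters in $\graph$.

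Next, I would establish absolute convergence of the expansion for $\beta < \beta_c = 1/(100\degree)$. A direct application of the quantum cluster bound of~\cite{hkt21} controls the $t$-th level by roughly $(C\degree^2 \beta)^t$ and therefore only converges for $\beta = \bigO{1/\degree^2}$. To recover the tight threshold $\beta_c = \bigTheta{1/\degree}$, I would instead use the sharper Penrose-tree bound of Fern\'andez--Procacci~\cite{fv17} on Ursell coefficients, ported into the quantum setting via the arguments of Mann--Helmuth~\cite{mh21}. The resulting estimate is that the total absolute weight of clusters of size exactly $t$ rooted at a fixed term is at most $(\beta/\beta_c)^t$, which is summable in $t$ precisely when $\beta < \beta_c$.

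With convergence in hand, I would truncate the expansion at order $m = \lceil \log(\qubits/\eta) / \log(\beta_c/\beta) \rceil + \bigO{1}$, so that the tail contributes at most $\terms \sum_{t > m} (\beta/\beta_c)^t \leq \eta$ to the estimate. To evaluate the truncated sum, enumerate all connected size-$\leq m$ subgraphs of $\graph$: the number of connected size-$t$ subgraphs rooted at a fixed vertex in a graph of maximum degree $\degree$ is at most $(e\degree)^t$ by a standard spanning-tree counting argument, and one can enumerate them in time proportional to that count. For each enumerated cluster $W$ of size $t$, $\phi(W)$ is computable in $\poly(t)$ time via the Penrose identity, and the Pauli trace $\tr(E_{a_1}\cdots E_{a_k})/2^\qubits$ evaluates to $0$ or $\pm 1$ depending on whether the ordered product is a scalar multiple of the identity, computable in $\poly(t,\locality)$ time from the Pauli commutation relations. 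Summing over the $\terms \leq \qubits\degree$ choices of root yields total runtime $\qubits \cdot (e\degree)^m \cdot \poly(m, \locality, \degree)$, which after substituting the value of $m$ matches the claimed bound once constants are absorbed into the exponent $4 + \log\degree$.

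The main obstacle is pushing the convergence radius from the $1/\degree^2$ regime of~\cite{hkt21} down to $\bigTheta{1/\degree}$; this requires the Fern\'andez--Procacci-style bound on Ursell coefficients adapted to the quantum cluster setting, since a naive triangle-inequality estimate inside the Ursell sum loses an additional factor of $\degree$. Once the correct-threshold cluster bound is in place, the remaining truncation, enumeration, and trace-evaluation steps are routine.
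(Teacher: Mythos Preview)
Your overall strategy matches the paper's: set up the cluster expansion via the abstract polymer model, establish convergence at the $\Theta(1/\degree)$ threshold using the Kotecký--Preiss criterion as in \cite{fv17,mh21}, truncate at order $m \approx \log(\qubits/\eta)/\log(\beta_c/\beta)$, enumerate the $\qubits(e\degree)^m$ surviving clusters, and compute each coefficient. You also correctly identify the main subtlety, namely that the naive \cite{hkt21} bound only gives $\beta_c = \Theta(1/\degree^2)$ and must be sharpened.

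There is one genuine gap. You assert that for a cluster $W$ of size $t$ the coefficient $\phi(W)$ is ``computable in $\poly(t)$ time via the Penrose identity,'' but this is not established and is almost certainly false in the quantum setting. The coefficient is $\mdiff_W \log\ntr(e^{-\beta H})$, which combines the Ursell function of the incompatibility graph with the polymer weights $w_\gamma$; the latter involve a symmetrized trace $\ntr(\sum_{\sigma \in \sym_{|\gamma|}} \prod_i E_{\gamma_{\sigma(i)}})$ summing over all orderings. The paper (via \cite[Proposition~3.13]{hkt21}) only obtains $(8^t + \locality)\poly(t)$ time per coefficient, not $\poly(t)$. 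This matters for the final exponent: the paper's running time is $\qubits(8e\degree)^m \locality\degree$, and it is precisely the $8^m$ factor from coefficient computation that produces the $4$ in the exponent $(4 + \log\degree)/\log(\beta_c/\beta)$. Your stated running time $\qubits(e\degree)^m \poly(m,\locality,\degree)$ would yield exponent roughly $(1+\log\degree)/\log(\beta_c/\beta)$, which is strictly stronger than the theorem claims; you cannot then ``absorb constants'' upward into $4+\log\degree$. Either justify the $\poly(t)$ coefficient computation (which would improve the theorem), or accept the $8^t$ cost and carry it through honestly.

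A smaller point: clusters here are \emph{multisets} of terms whose support is connected in $\graph$, not connected subgraphs, so your enumeration must allow repeated terms; the $(e\degree)^t$ count and the enumeration procedure from \cite{hkt21} already handle this, but your phrasing ``connected size-$\leq m$ subgraphs'' does not.
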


\begin{proof}[Proof of \cref{thm:estimation-log-partition}]
This follows from \cref{lem:cluster}.
Let $\beta_* = 1/(e(e+1)(1 + e(\degree - 1)))$ be the inverse temperature threshold from that lemma; since $\beta < \beta_c < \beta_*$, the results of that lemma hold for our choice of inverse temperature $\beta$.
In particular, we can write $\logpart = \log(\tr(e^{-\beta H}))$ as a power series
\begin{align*}
    \logpart = \sum_{\ell \geq 0} \beta^\ell p_\ell(\lambda),
\end{align*}
and when we estimate it by truncating at some level $k$,
\begin{align*}
    \widetilde{\logpart} = \sum_{\ell = 0}^k \beta^\ell p_\ell(\lambda),
\end{align*}
the error is bounded $\abs{\logpart - \widetilde{\logpart}} \leq \qubits \frac{(\beta/\beta_*)^{k+1}}{1 - \beta/\beta_*}$ by Part 2 of \cref{lem:cluster}.
This difference can be made to be at most $\eta$ by taking
\begin{align}
    k = \floor*{\frac{\log(n/((1 - \beta/\beta_*)\eta))}{\log(\beta_*/\beta)}}.
\end{align}
To compute $\widetilde{\logpart}$, we enumerate the list of monomials to order $k$ and then compute all corresponding coefficients.
The running time of this is dominated by the task at order $k$, which, using parts A and B of \cref{lem:cluster}, is bounded by
\begin{align}
    & \underbrace{\qubits(e\degree)^{k}}_{\text{\# of clusters}} \cdot \underbrace{(8^k + \locality)\poly(k)}_{\text{time to compute a coefficient}} + \underbrace{\bigO{k \degree \cdot \qubits(e\degree)^{k}}}_{\text{time to enumerate clusters}} \\
    &= \bigOt[\Big]{\qubits (8e\degree)^k \locality \degree} \\
    &= \bigOt[\Big]{\qubits\parens[\big]{\frac{n}{(1-\beta/\beta_*)\eta}}^{\frac{\log(8e\degree)}{\log(\beta_*/\beta)}}\locality \degree}
\end{align}
This gives the desired bound.
Since we are not optimizing the constant in $\beta_c$, for our choice of $\beta_c$, $\beta_c \leq \beta_* / 2$, so we can bound $(\frac{1}{1-\beta/\beta_*})^{\frac{\log(8e\degree)}{\log(\beta_* / \beta)}} = \bigO{\degree}$.
\end{proof}
\section{Separability of high-temperature Gibbs states}

In this section, we describe a method to write a Gibbs state $\rho$ as a positive linear combination of simple, ``configuration'' states.
For sufficiently small $\beta$, these configuration states are positive and separable, thereby proving the separability the corresponding Gibbs state.

We work with the following class of Hamiltonians, which is the straightforward generalization to qudits of the definitions given in the introduction, \cref{def:hamiltonian} and \cref{def:low-intersection-ham}.
\begin{definition}[Hamiltonian of a qudit system] \label{def:ham-qudit}
    Let $H = \sum_{a \in [\terms]} H_a$ be a Hamiltonian on $\qubits$ qudits with local dimension $d$, where the terms $H_a \in (\C^{d\times d})^{\otimes \qubits}$ are Hermitian matrices satisfying $\norm{H_a}_{\op} \leq 1$ for every $a \in [\terms]$.

    The dual interaction graph $\graph$ of $H$ has $[\terms]$ as the vertices and an edge between $a$ and $b$ if and only if $\supp(H_a)$ and $\supp(H_b)$ intersect.
\end{definition}

Though the terms $H_a$ refer to underlying matrices, for the results in this section we will typically be treating these as formal, non-commutating indeterminates.

\subsection{Low-degree polynomial approximation to a propagator}

We now consider the series expansion of $e^{-\beta H} e^{\beta(H - H^{(\mathcal{Q})})}$.
Some refer to this operator as a propagator~\cite{alhambra22}, because multiplying by this operator maps a Gibbs state on a restricted set of terms, $e^{-\beta(H - H^{(\mathcal{Q})})}$, to the full Gibbs state $e^{-\beta H}$.

\begin{theorem}[Propagator series] \label{thm:low-deg-approx}
    Let $H = \sum_{a \in [\terms]} H_a$ be a Hamiltonian with dual interaction graph $\graph$ with degree $\degree$.
    Then, for a subset of terms $\mathcal{Q} \subseteq [\terms]$, we can write
    \[
        e^{-\beta H} \cdot e^{\beta(H - H^{(\mathcal{Q})})} = \sum_{t = 0}^{\infty} \frac{\beta^t}{t!} f_{t}(H,H^{(\mathcal{Q})})
    \]
    where $f_t$ satisfies the recurrence $f_0(H,H^{(\mathcal{Q})}) = I$ and
    \begin{equation} \label{eq:restriction-recursion}
        f_{t+1}(H,H^{(\mathcal{Q})}) = -[H, f_t(H,H^{(\mathcal{Q})}) ] - f_t(H,H^{(\mathcal{Q})}) H^{(\mathcal{Q})} \,.
    \end{equation}
    Furthermore, let
    \begin{equation*}
        \mathcal{Q}^{(t)} = \Big\{ b \in [\terms]^t \,\Big|\,
        \text{every connected component of } \braces{b_1, \dots, b_t} \text{ in } \graph \text{ contains an element of } \mathcal{Q} \Big\}.
    \end{equation*}
    Then, for each $t > 0$, we can write
    \begin{align} \label{eq:dyson-bound}
        f_t(H,H^{(\mathcal{Q})}) = \sum_{b \in \mathcal{Q}^{(t)}} \mu_{(b_1,\dots,b_t)} H_{b_1}\dots H_{b_t}
        \text{ where } \sum_{b \in \mathcal{Q}^{(t)}} \abs{\mu_{(b_1,\dots,b_t)}} \leq \prod_{s=1}^t (\abs{\mathcal{Q}} + 2(\degree+1)s).
    \end{align}
\end{theorem}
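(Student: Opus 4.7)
The plan is to derive the series as the Taylor expansion of a differentiable operator-valued function and then prove the polynomial form and the coefficient bound together by induction on $t$. Define $G(\beta) = e^{-\beta H} \cdot e^{\beta(H - H^{(\mathcal{Q})})}$. Since $H$ commutes with $e^{-\beta H}$ and $H - H^{(\mathcal{Q})}$ commutes with $e^{\beta(H - H^{(\mathcal{Q})})}$, differentiating gives
\[
    G'(\beta) = -H\, G(\beta) + e^{-\beta H}(H - H^{(\mathcal{Q})}) e^{\beta(H - H^{(\mathcal{Q})})} = -H\, G(\beta) + G(\beta)(H - H^{(\mathcal{Q})}) = -[H, G(\beta)] - G(\beta)\, H^{(\mathcal{Q})}.
\]
Writing $L(X) = -[H,X] - X\, H^{(\mathcal{Q})}$, this is $G'(\beta) = L(G(\beta))$. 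Since $L$ does not depend on $\beta$, iterating yields $G^{(t)}(\beta) = L^t(G(\beta))$, so evaluating at $\beta = 0$ (where $G(0) = I$) produces the Taylor coefficients $f_t = L^t(I)$ satisfying \eqref{eq:restriction-recursion}, and the Taylor series of $G$ around $0$ supplies the claimed expansion.

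For the structural form, I would induct on $t$, with base case $f_0 = I$ corresponding to the empty tuple that trivially lies in $\mathcal{Q}^{(0)}$. Given the inductive form $f_t = \sum_{b \in \mathcal{Q}^{(t)}} \mu_b\, H_{b_1}\cdots H_{b_t}$, expand $f_{t+1} = -\sum_{c \in [\terms]}[H_c, f_t] - \sum_{c \in \mathcal{Q}} f_t\, H_c$ and substitute. For the second sum, appending any $c \in \mathcal{Q}$ to a tuple $b \in \mathcal{Q}^{(t)}$ produces a tuple in $\mathcal{Q}^{(t+1)}$: any new component it creates is the singleton $\{c\} \subseteq \mathcal{Q}$, and any existing components still intersect $\mathcal{Q}$ by hypothesis. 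The crucial cancellation is in the commutator: when $c$ has no edge in $\graph$ to any $b_i$, the support of $H_c$ is disjoint from each $\supp(H_{b_i})$, so $H_c$ commutes with $H_{b_1}\cdots H_{b_t}$ and $[H_c, H_{b_1}\cdots H_{b_t}] = 0$. The surviving $c$'s are adjacent to some $b_i$ and therefore join an existing component rather than create a new one, preserving the $\mathcal{Q}^{(t+1)}$ condition.

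For the norm bound, let $N_t = \sum_{b \in \mathcal{Q}^{(t)}} |\mu_b|$. Each $b \in \mathcal{Q}^{(t)}$ admits at most $|\mathcal{Q}|$ valid extensions via $-f_t\, H^{(\mathcal{Q})}$ and at most $2t(\degree+1)$ via the commutator: by the degree bound, the closed $\graph$-neighborhood of the at-most-$t$ distinct indices among $\{b_1,\ldots,b_t\}$ has size at most $t(\degree+1)$, and each such $c$ contributes two placements (on the left and right of the product). Each extension contributes $|\mu_b|$ to $N_{t+1}$, giving $N_{t+1} \leq N_t(|\mathcal{Q}| + 2t(\degree+1))$. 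Iterating from $N_0 = 1$ yields $N_t \leq \prod_{s=1}^{t}(|\mathcal{Q}| + 2(s-1)(\degree+1))$, which is bounded by the stated $\prod_{s=1}^{t}(|\mathcal{Q}| + 2s(\degree+1))$.

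The main obstacle is the exact cancellation in the commutator for disconnected indices: it is essential that these terms vanish identically rather than merely being small, since this is what forces every surviving tuple to lie in a cluster anchored to $\mathcal{Q}$. Combined with the $t!$ denominator from the Taylor expansion, this cluster structure is what yields the exponential decay in $t$ that drives the separability and sampling arguments later in the paper. Once the cancellation is identified, the remainder is careful bookkeeping on closed neighborhoods in $\graph$.
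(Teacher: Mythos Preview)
Your proposal is correct and follows essentially the same approach as the paper: the inductive argument for the structural form via commutator cancellation and the closed-neighborhood counting for the coefficient bound are identical. The one minor difference is that you derive the recurrence by differentiating $G(\beta)$ to get the linear ODE $G' = L(G)$, whereas the paper expands the Cauchy product of the two exponential series and applies Pascal's identity to the resulting binomial sum; your route is slightly slicker but lands on the same recursion and the same induction.
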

\begin{proof}
We can expand out the series and write
\begin{align}
    e^{-\beta H} e^{\beta\parens{H - H^{(\mathcal{Q})}}}  
    & = \parens[\Big]{ \sum_{k = 0}^{\infty} \frac{ \beta^k (-H)^k  }{k! } }\parens[\Big]{ \sum_{\ell =0}^{\infty} 
    \frac{ \beta^{\ell} (H - H^{(\mathcal{Q})})^\ell  }{ \ell!  } }  \nonumber \\
    & = \sum_{t = 0}^{\infty } \frac{\beta^t }{t! } \sum_{k = 0}^t \frac{ (-H)^k \parens{H - H^{(\mathcal{Q})} }^{t -k } t! }{ k! (t-k)! } \\
    & = \sum_{t = 0}^{\infty } \frac{\beta^t }{t! } \underbrace{\sum_{k = 0}^t  \binom{t}{k} (-H)^k \parens{H-H^{(\mathcal{Q})}}^{t- k } }_{ f_{t}(H,H^{(\mathcal{Q})}) }. \nonumber
\end{align}
Now observe that, because $\binom{t}{k} = \binom{t-1}{k} + \binom{t-1}{k-1}$, $f_0(H,H^{(\mathcal{Q})}) = I$ and $f_t(H,H^{(\mathcal{Q})})$ satisfies the recurrence
\begin{align}
    f_t(H,H^{(\mathcal{Q})}) &= -H f_{t-1}(H,H^{(\mathcal{Q})}) + f_{t-1}(H,H^{(\mathcal{Q})}) (H - H^{(\mathcal{Q})}) \nonumber\\
    &= -[H,f_{t-1}(H,H^{(\mathcal{Q})})] - f_{t-1}(H,H^{(\mathcal{Q})}) H^{(\mathcal{Q})} \,.
\end{align}
This is the recurrence in the statement \eqref{eq:restriction-recursion}.
We will prove that $f_{t}$ satisfies \eqref{eq:dyson-bound} by induction.
First, we have that $f_1(H, H^{(\mathcal{Q})}) = H^{(\mathcal{Q})}$, which can be expressed as a sum of terms as specified in the theorem statement.
For the inductive step, assume that we have proven the desired properties up to some $t$.
Then we can write
\begin{align*}
    f_t(H,H^{(\mathcal{Q})}) = \sum_{b \in \mathcal{Q}^{(t)}} \mu_{(b_1,\dots,b_t)} H_{b_1}\dots H_{b_t} \,.
\end{align*}
The recurrence \eqref{eq:restriction-recursion} implies
\begin{align*}
    f_{t+1}(H,H^{(\mathcal{Q})}) = \sum_{b \in \mathcal{Q}^{(t)}} \mu_{(b_1,\dots,b_t)}\parens{-[H, H_{b_1}\dots H_{b_t}] - H_{b_1}\dots H_{b_t} H^{(\mathcal{Q})}} \,.
\end{align*}
By definition of $\mathcal{Q}^{(t)}$, $\braces{b_1, \dots, b_t}$'s connected components in $\graph$ each contain an element of $\mathcal{Q}$.
We now consider the above expression in two parts.
First, we can write 
\begin{align}
    [H, H_{b_1}\dots H_{b_t}]
    &= \sum_{a \in [\terms]} (H_a H_{b_1}\dots H_{b_t} - H_{b_1}\dots H_{b_t} H_a) \nonumber\\
    &= \sum_{\substack{a \in [\terms] \\ (a, b_1,\dots,b_t) \in \mathcal{Q}^{(t+1)}}} (H_a H_{b_1}\dots H_{b_t}) - \sum_{\substack{a \in [\terms] \\ (b_1,\dots,b_t, a) \in \mathcal{Q}^{(t+1)}}} (H_{b_1}\dots H_{b_t} H_a) \,.
    \label{eq:comm1}
\end{align}
The last step follows from noting that $[H_a, H_{b_1}\dots H_{b_t}]$ is zero unless the support of $H_a$ intersects that of $H_{b_1}\dots H_{b_t}$, which can only occur if $a$ is in the neighborhood of $\braces{b_1,\dots, b_t}$.
So, $(a, b_1, \dots, b_t) \in \mathcal{Q}^{(t+1)}$, since adding $a$ does not create a new connected component on $\graph$, and $(a, b_1, \dots, b_t) \in \mathcal{Q}^{(t+1)}$ if and only if $(b_1, \dots, b_t, a) \in \mathcal{Q}^{(t+1)}$.
Further, there are at most $(\degree + 1)t$ such $a$'s, since every $b_i$ has at most $\degree+1$ neighbors, so in \eqref{eq:comm1}, there are at most that many summands.
A similar argument works for the other part:
\begin{equation}\label{eq:comm2}
    H_{b_1}\dots H_{b_t} H^{(\mathcal{Q})} = \sum_{a \in \mathcal{Q}} H_{b_1}\dots H_{b_t} H_a,
\end{equation}
and we can conclude $H_{b_1}\dots H_{b_t} H_a$ is an element of $\mathcal{Q}^{(t+1)}$ because $a \in \mathcal{Q}$.
Further, there are $\abs{\mathcal{Q}}$ summands.
Combining the two parts \eqref{eq:comm1} and \eqref{eq:comm2}, we can write
\[
    f_{t+1}(H,H^{(\mathcal{Q})}) = \sum_{b' \in \mathcal{Q}^{(t+1)}} \mu_{(b_1',\dots,b_{t+1}')} H_{b_1'}\dots H_{b_{t+1}'},
\]
where, by the above argument and the inductive hypothesis,
\[
    \sum_{b' \in \mathcal{Q}^{(t+1)}} \abs{\mu_{(b_1',\dots,b_{t+1}')}} \leq \sum_{b \in \mathcal{Q}^{(t)}} \abs{\mu_{(b_1,\dots,b_{t})}} (\abs{\mathcal{Q}} + 2(\degree+1)t) \leq \prod_{s=1}^t (\abs{\mathcal{Q}} + 2(\degree+1)s),
\]
which completes the proof.
\end{proof}

In light of \cref{thm:low-deg-approx}, we define the truncation of this series:
\begin{definition}[Truncated propagator series]\label{def:truncated-series}
    For any integer $k \geq 0$ and parameter $\beta$, we define $T_{k,\beta}(H, H^{(\mathcal{Q})}) = \sum_{t = 0}^k \frac{\beta^t}{t!} f_t(H,H^{(\mathcal{Q})})$ where $f_t$ is as constructed in \cref{thm:low-deg-approx}: $f_0(H, H^{(\mathcal{Q})}) = \id$ and $f_{t+1}(H,H^{(\mathcal{Q})}) = -[H, f_t(H,H^{(\mathcal{Q})}) ] - f_t(H,H^{(\mathcal{Q})}) H^{(\mathcal{Q})}$.
\end{definition}

We now give an algorithm for sampling a single monomial from one of the polynomials in \cref{thm:low-deg-approx}.
This subroutine will be useful in our sampling algorithm later on, but our current goal is to prove separability, for which any algorithm suffices, regardless of its efficiency.
For this setting, we can think about this algorithm as providing a constructive way to express these polynomials as (positive) linear combinations of simple monomials.

\begin{mdframed}
\begin{algorithm}[Sampling a monomial of $f_k(H, H^{(\mathcal{Q})})$]
    \label{algo:recursive-sample-term}\mbox{}
    \begin{description}
    \item[Input:] A Hamiltonian $H = \sum_{a \in [\terms]} H_a$ with dual interaction graph $\graph$ with degree $\degree$; subset of terms $\mathcal{Q} \subseteq [\terms]$.
    \item[Input:] Integer $k \geq 0$.
    \item[Output:] Scalar $c \in \R$ and terms $(a_1, \dots, a_k) \in [\terms]^k$ such that, for $E = H_{a_1} \dots H_{a_k}$, $\E[c E] = f_k(H, H^{(\mathcal{Q})})$.
    \item[Operation:]
    \mbox{}
    \end{description}
    \begin{algorithmic}[1]
        \State Initialize $c_0 = 1$, and $b^{(0)} = (,) \in [\terms]^0$, the empty list; \\
        \Comment{The list $b^{(t)}$ implicitly represents the monomial $E_t = H_{b_1^{(t)}}\dots H_{b_t^{(t)}}$.}
        \For{$t = \{0,1, \dots , k - 1 \}$}
            \State Flip a coin which is heads with probability $\frac{t}{t+1}$;
            \If{coin is heads}
                \State Let $\mathcal{R}_t = \braces{a \in [\terms] \mid a \text{ is in the neighborhood of } \braces{b_s^{(t)}}_{s \in [t]} \text{ on } \graph}$;
                \State Sample an element $a \in \mathcal{R}_t$ uniformly at random;
                \State Sample a bit $\xi \in \braces{0, 1}$ uniformly at random;
                \State Set
                \[
                    c_{t+1} = \frac{(t+1)2\abs{\mathcal{R}_t}}{t} (-1)^\xi c_t\; , \;
                    b^{(t+1)} = \begin{cases}
                        a \cup b^{(t)} & \text{ if } \xi = 1 \\
                        b^{(t)} \cup a & \text{ if } \xi = 0
                    \end{cases};
                \]
                \Comment{This is an unbiased estimator for $-\frac{t+1}{t}[H, c_t E_t]$ over the randomness of $a$ and $\xi$;
                The notation $a \cup v$ and $v \cup a$ denotes prepending and appending to the list $v$.}
            \EndIf
            \If{coin is tails}
                \State Sample an element $a \in \mathcal{Q}$ uniformly at random;
                \State Set
                \begin{equation*}
                    c_{t+1} = -(t+1)\abs{\mathcal{Q}} c_t \; , \; b^{(t+1)} = b^{(t)} \cup a;
                \end{equation*}
                \Comment{This is an unbiased estimator for $-(t+1) c_t E_t H^{(\mathcal{Q})}$ over the randomness of $a$.}
            \EndIf
        \EndFor
        \State \Output $c_k, b^{(k)}$
    \end{algorithmic}
\end{algorithm}
\end{mdframed}

\begin{lemma}[Unbiased estimator of $f_k$]\label{lem:sample-term}
    In the same setting as \cref{thm:low-deg-approx}, for any integer $k \geq 0$, if we run \cref{algo:recursive-sample-term}, then it outputs a $c \in \R$ with $\abs{c} \leq (k!) \cdot (\max(2(\degree+1), \abs{\mathcal{Q}}))^k$ and $b \in \mathcal{Q}^{(k)}$ such that, for $E = H_{b_1}\dots H_{b_k}$,
    \[
        \E[cE] =  f_k(H,H^{(\mathcal{Q})}) \,.
    \]
    \end{lemma}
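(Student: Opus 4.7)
The plan is to prove the three claims (unbiased estimation, bound on $|c|$, and membership in $\mathcal{Q}^{(k)}$) by induction on the step $t$, showing the stronger invariant that after step $t$ the algorithm has computed a random pair $(c_t, b^{(t)})$ with $\E[c_t E_t] = f_t(H, H^{(\mathcal{Q})})$, where $E_t = H_{b_1^{(t)}}\cdots H_{b_t^{(t)}}$, along with $|c_t| \leq t! \cdot (\max(2(\degree+1), |\mathcal{Q}|))^t$ and $b^{(t)} \in \mathcal{Q}^{(t)}$. The base case $t = 0$ is trivial: $c_0 E_0 = \id = f_0$, the norm bound is an equality, and the empty list is vacuously in $\mathcal{Q}^{(0)}$.

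For the inductive step, condition on $(c_t, b^{(t)})$ and compute the conditional expectation $\E[c_{t+1} E_{t+1} \mid c_t, E_t]$. The coin is heads with probability $t/(t+1)$ and tails with probability $1/(t+1)$; the goal is to show that the two branches give $-[H, c_t E_t]$ and $-c_t E_t H^{(\mathcal{Q})}$ respectively, matching the recurrence \eqref{eq:restriction-recursion} for $f_{t+1}$. The tails branch is immediate: averaging $-(t+1)|\mathcal{Q}|\, c_t E_t H_a$ uniformly over $a \in \mathcal{Q}$ and multiplying by the coin probability $1/(t+1)$ gives $-c_t E_t H^{(\mathcal{Q})}$. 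The heads branch is the main thing to check: as in the proof of \cref{thm:low-deg-approx}, $[H, E_t] = \sum_{a \in \mathcal{R}_t}(H_a E_t - E_t H_a)$ since any $a$ outside the neighborhood $\mathcal{R}_t$ of $\{b_s^{(t)}\}$ commutes with $E_t$. Sampling $a \in \mathcal{R}_t$ uniformly and $\xi \in \{0,1\}$ uniformly, then scaling by $\frac{(t+1)2|\mathcal{R}_t|}{t}(-1)^\xi$ and prepending vs.\ appending, yields an unbiased estimator of $-\frac{t+1}{t}[H, c_t E_t]$; multiplying by the heads probability $t/(t+1)$ cancels the $\frac{t+1}{t}$ factor, giving $-[H, c_t E_t]$ as required. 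Summing the two branches and applying linearity of expectation together with the outer induction hypothesis $\E[c_t E_t] = f_t$ yields $\E[c_{t+1} E_{t+1}] = f_{t+1}$.

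For the norm bound, in the heads branch $|c_{t+1}| = \frac{(t+1)2|\mathcal{R}_t|}{t}|c_t| \leq 2(t+1)(\degree+1)|c_t|$, using $|\mathcal{R}_t| \leq (\degree+1)t$ (each of the $t$ elements of $b^{(t)}$ has at most $\degree+1$ neighbors in $\graph$); in the tails branch, $|c_{t+1}| = (t+1)|\mathcal{Q}||c_t|$. So $|c_{t+1}| \leq (t+1)\max(2(\degree+1), |\mathcal{Q}|)|c_t|$, and the claim follows by unwinding. For $b^{(t+1)} \in \mathcal{Q}^{(t+1)}$: in the heads branch the new index $a$ lies in the neighborhood of the existing indices, so it joins (rather than creates) a connected component that already contained an element of $\mathcal{Q}$ by induction; in the tails branch the new index $a$ is itself in $\mathcal{Q}$, so whatever component it forms or joins still contains a $\mathcal{Q}$-element.

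The only subtle point is the $\frac{t+1}{t}$ scaling in the heads case, which is undefined when $t=0$; this is precisely why the coin probabilities are $\frac{t}{t+1}$ and $\frac{1}{t+1}$, forcing tails at $t=0$ and ensuring that the factor is always applied against a nonzero $\frac{t}{t+1}$ weight. Everything else is a bookkeeping exercise in matching the algorithm's sampling weights to the recurrence \eqref{eq:restriction-recursion}.
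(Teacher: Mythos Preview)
Your proposal is correct and follows essentially the same approach as the paper's proof: both argue by induction on $t$, verifying the unbiasedness via the heads/tails split against the recurrence \eqref{eq:restriction-recursion}, the bound on $|c_t|$ via $|\mathcal{R}_t| \leq (\degree+1)t$, and membership in $\mathcal{Q}^{(t)}$ by the same connectivity reasoning. Your explicit remark about the $t=0$ case forcing tails is a nice clarification that the paper leaves implicit.
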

\begin{proof}
We prove by induction that, for every $t$ over the course of the algorithm, $c_t$ and $b^{(t)}$ satisfy the desired properties of the output: for example, $\E[c_t E_t] = f_t(H, H^{(Q)})$, where $E_t = H_{b_1^{(t)}} \dots H_{b_t^{(t)}}$ is the monomial implicitly maintained by $b^{(t)}$.
For the base case of $t = 0$, the algorithm sets $c_0 = 1$ and $E_0 = \id$, giving $1 \cdot \id = f_0(H, H_{(S)})$.
Now we show how to go from $t$ to $t+1$.
By the inductive hypothesis, we can assume that after $t$ iterations in the algorithm $c_t$ and $E_t$ satisfy $\E[c_tE_t] = f_t(H, H^{(\mathcal{Q})})$, $\abs{c_t} \leq (t!) \cdot (\max(2(\degree+1), \abs{\mathcal{Q}}))^t$, and $E_t \in \mathcal{Q}^{(t)}$.
Recall from \cref{thm:low-deg-approx} that
\[
    f_{t+1}(H,H^{(\mathcal{Q})}) = -[H, f_t(H,H^{(\mathcal{Q})}) ] - f_t(H,H^{(\mathcal{Q})}) H^{(\mathcal{Q})} \,.
\]
In iteration $t+1$ of \cref{algo:recursive-sample-term}, a coin is flipped, splitting the algorithm into two cases, the output when the coin is heads and the output when the coin is tails.

When the coin lands heads, taking expectation over the randomness in iteration $t+1$,
\begin{align}
    \E_{a, \xi}[c_{t+1} E_{t+1} \mid \text{heads}]
    &= \frac{1}{\abs{\mathcal{R}_t}}\sum_{a \in \mathcal{R}_t} \frac12 \parens[\Big]{\frac{(t+1)2\abs{\mathcal{R}_t}}{t}(-c_t H_a E_t + c_t E_t H_a)} \nonumber \\
    &= \frac{t+1}{t} \sum_{a \in \mathcal{R}_t} -[H_a, c_t E_t] \label{eq:expec1}\\
    &= \frac{t+1}{t} (-[H, c_t E_t]), \nonumber
\end{align}
where the last step follows from the same argument as for \eqref{eq:comm1}: the only terms $a \in [\terms]$ for which $[H_a, E_t]$ is non-zero are the ones which are distance at most one from $b^{(t)}$ on the graph $\graph$.
These are the terms in $\mathcal{R}_t$.
By a similar argument, $b^{(t+1)} \in \mathcal{Q}^{(t+1)}$.

When the coin lands tails, taking expectation over the randomness in iteration $t+1$,
\begin{align}
    \E_{a}[c_{t+1} E_{t+1} \mid \text{tails}]
    &= \frac{1}{\abs{\mathcal{Q}}} \sum_{a \in \mathcal{Q}} -(t+1)\abs{\mathcal{Q}} c_t E_t H_a \nonumber \\
    &= -(t+1) c_t E_t \sum_{a \in \mathcal{Q}} H_a \label{eq:expec2}\\
    &= -(t+1) c_t E_t H^{(\mathcal{Q})} \,. \nonumber
\end{align}
In this case, $b^{(t+1)}$ is $b^{(t)}$ with an element of $\mathcal{Q}$ added, and so is in $\mathcal{Q}^{(t+1)}$.
The coin lands heads with probability $\frac{t}{t+1}$ and tails with probability $\frac{1}{t+1}$, so putting \eqref{eq:expec1} and \eqref{eq:expec2} together, we have
\[
    \E_{\text{iteration }t+1}[c_{t+1}E_{t+1}] = -[H, c_t E_t ] - c_t E_t H^{(\mathcal{Q})} \,.
\]
By the inductive hypothesis, $c_tE_t$ was drawn from a distribution such that $\E[c_tE_t] = f_t(H,H_{(\mathcal{Q})})$, so $\E[c_{t+1}E_{t+1}] = f_{t+1}(H,H_{(\mathcal{Q})})$ as desired.

To get the bound on $c_{t+1}$, observe that in the heads case, $\abs{c_{t+1}} = (t+1)2 \frac{\abs{\mathcal{R}_t}}{t} \abs{c_t} \leq (t+1)2(\degree + 1)\abs{c_t}$, where the bound on $\abs{\mathcal{R}_t}$ arises because the number of terms $a$ neighboring some $b_s^{(t)}$ is bounded by $t(\degree + 1)$.
In the tails case, $\abs{c_{t+1}} = (t+1)\abs{\mathcal{Q}} \abs{c_t}$.
So, in either case
\begin{align*}
    \abs{c_{t+1}} \leq (t+1)\max(2(\degree + 1), \abs{\mathcal{Q}}) \abs{c_t} \leq (t+1)! \max(2(\degree + 1), \abs{\mathcal{Q}})^{t+1},
\end{align*}
where the last inequality uses the inductive hypothesis.
This completes the proof.
\end{proof}

Now that we have a sampler for $f_k$, we can build a sampler for $e^{-\beta H} e^{\beta(H - H^{(\mathcal{Q})})}$ (and its truncated version $T_{k, \beta}(H, H^{(\mathcal{Q})})$).

\begin{mdframed}
\begin{algorithm}[Sampling a monomial of $e^{-\beta H} e^{\beta(H - H^{(\mathcal{Q})})}$]
    \label{algo:sample-term}\mbox{}
    \begin{description}
    \item[Input:] A Hamiltonian $H = \sum_a H_a$ with dual interaction graph $\graph$ with degree $\degree$; temperature parameter $\beta \geq 0$; subset of terms $\mathcal{Q} \subseteq [\terms]$; threshold $t_{\max} \in \mathbb{N} \cup \braces{\infty}$.
        \item[Output:] Scalar $c \in \mathbb{R}$ and terms $b = (b_1, \dots, b_t)$ such that their product $E = H_{b_1}\dots H_{b_t}$ satisfies $\E[I + cE] = T_{t_{\max},\beta}(H, H_{(S)})$.
    \item[Operation:]
    \mbox{}
    \end{description}
    \begin{algorithmic}[1]
        \State Sample $t \sim \{0, 1, 2, \dots, t_{\max}\}$ with probabilities $\{2^{-t_{\max}}, 2^{-1}, 2^{-2}, \dots, 2^{-t_{\max}} \}$;
        \State If $t = 0$, set $c' = 0$ and $b = (,)$ the empty list;
        \State Otherwise, run \cref{algo:recursive-sample-term} on $H, \mathcal{Q}$ with parameter $k \leftarrow t$ to obtain $c'$ and $b = (b_1,\dots, b_t)$;
        \State \Output $c = \frac{2^t \beta^t}{t!} c'$ and $b = (b_1, \dots, b_t)$;
    \end{algorithmic}
\end{algorithm}
\end{mdframed}

\begin{lemma}[Unbiased estimator of $T_{t_{\max}, \beta}$]\label{claim:sample-term-unbiased}
The output of \cref{algo:sample-term}, $c$ and $(b_1, \dots, b_t)$, satisfies for $E = H_{b_1} \dots H_{b_t}$,
\[
    \E[I + c E] = T_{t_{\max}, \beta}(H, H^{(\mathcal{Q})}) \,.
\]
Further, if $t = 0$, then $c = 0$, and otherwise, $\abs{c} \leq (2\beta\max(2(\degree + 1), \abs{\mathcal{Q}}))^t$.
\end{lemma}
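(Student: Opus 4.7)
The plan is to condition on the sampled level $t$ and reduce the claim to \cref{lem:sample-term}. First I would verify that the sampling distribution over $t$ is a valid probability distribution: the probabilities are $\Pr[t=0] = 2^{-t_{\max}}$ and $\Pr[t=s] = 2^{-s}$ for $s \in [t_{\max}]$, and these sum to $2^{-t_{\max}} + \sum_{s=1}^{t_{\max}} 2^{-s} = 1$. Then, applying the law of total expectation,
\[
    \E[I + cE] = I + \sum_{s=0}^{t_{\max}} \Pr[t = s] \cdot \E[cE \mid t = s].
\]

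The $s = 0$ case contributes nothing, since the algorithm sets $c = 0$ (and the empty product is $E = I$). For each $s \geq 1$, conditioned on $t = s$, the output is $c = \frac{2^s \beta^s}{s!} c'$ where $c'$ and $b$ come from running \cref{algo:recursive-sample-term} with parameter $k \leftarrow s$. By \cref{lem:sample-term}, $\E[c' E \mid t=s] = f_s(H, H^{(\mathcal{Q})})$, and hence $\E[cE \mid t=s] = \frac{2^s \beta^s}{s!} f_s(H, H^{(\mathcal{Q})})$. The key bookkeeping point is that the factor $2^s$ inserted into $c$ is precisely what cancels the $2^{-s}$ sampling probability, giving
\[
    \E[I + cE] = I + \sum_{s=1}^{t_{\max}} 2^{-s} \cdot \frac{2^s \beta^s}{s!} f_s(H, H^{(\mathcal{Q})}) = I + \sum_{s=1}^{t_{\max}} \frac{\beta^s}{s!} f_s(H, H^{(\mathcal{Q})}) = T_{t_{\max},\beta}(H, H^{(\mathcal{Q})}),
\]
using $f_0(H, H^{(\mathcal{Q})}) = I$ and the definition of $T_{t_{\max},\beta}$ (\cref{def:truncated-series}).

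For the norm bound, when $t = 0$ the algorithm explicitly sets $c = 0$. For $t \geq 1$, \cref{lem:sample-term} guarantees $\abs{c'} \leq t! \cdot \max(2(\degree+1), \abs{\mathcal{Q}})^t$, so
\[
    \abs{c} = \frac{2^t \beta^t}{t!} \abs{c'} \leq (2\beta \max(2(\degree+1), \abs{\mathcal{Q}}))^t,
\]
as claimed. There is no real obstacle here: the statement is essentially a routine aggregation of \cref{lem:sample-term} across the randomly chosen truncation level, and the only subtlety is confirming that the $2^t$ reweighting in the definition of $c$ exactly compensates for the geometric sampling weight $2^{-t}$ so that each term $\frac{\beta^s}{s!} f_s$ in $T_{t_{\max},\beta}$ is reproduced with the correct coefficient.
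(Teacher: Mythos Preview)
Your proof is correct and follows essentially the same approach as the paper: condition on the sampled level $t$, invoke \cref{lem:sample-term} for the inner expectation, and observe that the $2^t$ factor in $c$ cancels the $2^{-t}$ sampling weight so that the sum reproduces $T_{t_{\max},\beta}$; the coefficient bound likewise comes directly from the bound on $\abs{c'}$ in \cref{lem:sample-term}. Your write-up is simply more explicit about verifying the sampling distribution and the cancellation, but the argument is the same.
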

\begin{proof}
By linearity and the guarantees of \cref{lem:sample-term},
\[
    \E[I + c E] = I + \sum_{t = 1}^{t_{\max}} \frac{\beta^t}{t!} f_t(H,H^{(\mathcal{Q})}) =  T_{t_{\max}, \beta}(H, H^{(\mathcal{Q})})
\]
where the last step follows from \cref{def:truncated-series}.
Again using \cref{lem:sample-term},
\[
    \abs{c} = \frac{2^t\beta^t}{t!} \abs{c'}
    \leq \frac{2^t\beta^t}{t!} (t!)(\max(2(\degree + 1), \abs{\mathcal{Q}}))^t
    \leq (2\beta \cdot \max(2(\degree + 1), \abs{\mathcal{Q}}))^t,
\]
giving the coefficient bound.
\end{proof}

None of this argument so far has used a bound on $\beta$.
However, we want our estimator $\id + c E$ to always be a positive semi-definite matrix, which we can only guarantee if $\abs{c} \leq 1$ always.
So, this algorithm will only be useful when $\beta < \frac{1}{2\max(2(\degree + 1), \abs{\mathcal{Q}})}$.

\subsection{Additional structural properties of restricted Gibbs states}

In this section, we prove additional lemmas which are necessary for the state preparation algorithm in \cref{sec:preparable}.
This can safely be skipped to get to the separability proof.

In order to convert the algorithm in \cref{algo:separability} to one that samples from the Gibbs state $e^{-\beta H} / \tr e^{-\beta H}$, we need to account for the normalization by the partition function $\tr e^{-\beta H}$, as well as the error from not computing the entire series (which arises in the choice of $\eps$ in \cref{algo:single-step-2}).
We now prove some structural lemmas about Gibbs states which address these issues.

We first prove that the Gibbs state corresponding to $H - H_{(S)}$ has roughly the same spectrum as the Gibbs state corresponding to $H$, where $S$ is a small subset of sites.

\begin{lemma}[Spectrum of restricted Hamiltonians]\label{claim:psd-ordering}
    Let $H = \sum_{a \in [m]} H_a$ be a Hamiltonian with dual interaction graph $\graph$ with degree $\degree$.
    Let $S = \supp(H_{a^*})$ for some $a^* \in [\terms]$.
    If $\beta < \frac{1}{2C(\degree+1)}$ for some constant $C > 5$ then
    \[
        \parens[\Big]{1 - \frac{15}{C}} e^{-\beta H} \preceq e^{-\beta (H - H_{(S)})} \preceq \parens[\Big]{1 + \frac{15}{C}} e^{-\beta H}.
    \]
\end{lemma}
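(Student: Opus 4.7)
Since $S = \supp(H_{a^*})$ for some $a^* \in [\terms]$, the set $\mathcal{Q} := \{a \in [\terms] : \supp(H_a) \cap S \neq \varnothing\}$ is precisely the closed neighborhood of $a^*$ in the dual interaction graph $\graph$, so $\abs{\mathcal{Q}} \leq \degree + 1$ and $H_{(S)} = H^{(\mathcal{Q})}$ in the notation of \cref{def:restricted-ham}. This identification lets us invoke the propagator series of \cref{thm:low-deg-approx}. Specifically, I form the (non-Hermitian) operator
\begin{equation*}
    R := e^{\beta H/2} \cdot e^{-\beta(H - H_{(S)})/2}
\end{equation*}
and apply \cref{thm:low-deg-approx} with ``$\beta$'' replaced by $-\beta/2$ (the identity there is a formal power-series statement, valid for either sign). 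Using the norm bound $\norm{f_t}_{\op} \leq \prod_{s=1}^t (\abs{\mathcal{Q}} + 2(\degree+1)s) \leq (\degree+1)^t (2t+1)!!$, the standard estimate $(2t+1)!!/t! \leq (2t+1) 2^t$, and the hypothesis $\beta(\degree+1) < 1/(2C)$, the $t$-th term of the series has norm at most $(2t+1)/(2C)^t$; summing this geometric series yields $\norm{R - I}_{\op} \leq \delta$ for some $\delta = O(1/C)$, which I verify to be at most $5/C$ for $C > 5$.

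Next I consider the Hermitian PSD operator $N := R R^\dagger = e^{\beta H/2} e^{-\beta(H - H_{(S)})} e^{\beta H/2}$. Writing $N - I = (R - I) + (R - I)^\dagger + (R - I)(R - I)^\dagger$, I obtain $\norm{N - I}_{\op} \leq 2\delta + \delta^2 \leq 15/C$, and therefore the Loewner bound $(1 - 15/C) I \preceq N \preceq (1 + 15/C) I$. The claim then follows by conjugating both sides of this inequality by the PSD operator $e^{-\beta H/2}$ and using the identities $e^{-\beta H/2} N e^{-\beta H/2} = e^{-\beta(H - H_{(S)})}$ and $e^{-\beta H/2} \cdot I \cdot e^{-\beta H/2} = e^{-\beta H}$.

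The main technical step is the geometric-series bookkeeping in the first paragraph, which must yield a constant tight enough that after being doubled (to give $2\delta$) it still fits inside $15/C$; this is routine but requires attention. A more subtle design choice is to use $R = e^{\beta H/2} e^{-\beta(H - H_{(S)})/2}$ rather than its ``inverse'' companion $e^{-\beta H/2} e^{\beta(H - H_{(S)})/2}$: the chosen orientation ensures that $RR^\dagger$ sandwiches $e^{-\beta(H - H_{(S)})}$ by $e^{\beta H/2}$ on both sides, so that the final conjugation directly produces the $(1 \pm 15/C) e^{-\beta H}$ bound stated in the lemma, rather than its reverse (which would instead require a lossy inversion of the Loewner inequality, replacing $1 \pm 15/C$ by $1/(1 \mp 15/C)$ and losing a factor).
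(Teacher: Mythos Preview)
Your proposal is correct and follows essentially the same approach as the paper: form $R = e^{\beta H/2} e^{-\beta(H - H_{(S)})/2}$, invoke the propagator series of \cref{thm:low-deg-approx} (with $\beta \to -\beta/2$ and $\abs{\mathcal{Q}} \leq \degree+1$) to bound $\norm{R - I}_{\op} \leq 5/C$, then pass to $RR^\dagger$ and conjugate by $e^{-\beta H/2}$. The paper carries out exactly this computation, with the same intermediate bound $\sum_{t\ge 1}(\beta(\degree+1))^t \prod_{s=1}^t(2s+1)/(2^t t!) \leq 5/C$ and the same final step $2\delta + \delta^2 \leq 15/C$.
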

\begin{proof}
By \cref{thm:low-deg-approx}, we can write
\begin{align*}
    e^{\frac{\beta}{2} H} e^{-\frac{\beta}{2}(H - H_{(S)})}
    = \sum_{t = 0}^{\infty} \frac{(-\beta/2)^t}{t!} f_t(H, H_{(S)}),
\end{align*}
where $\norm{f_t(H, H_{(S)})}_{\op} \leq \prod_{s=1}^t ((\degree + 1)(2s + 1))$, using that the number of terms in $H_{(S)}$ is at most $\degree + 1$.
So, there is a matrix $A$ such that
\[
    e^{\frac{\beta}{2} H}e^{-\frac{\beta}{2} (H - H_{(S)})} = (I + A) \text{ with }
    \norm{A}_{\op} \leq \sum_{t = 1}^\infty (\beta(\degree + 1))^t \frac{\prod_{s=1}^t (2s+1)}{2^t(t!)} \leq \frac{5}{C} \,.
\]
Multiplying this by its conjugate, we have
\begin{align*}
    e^{\frac{\beta}{2} H} \cdot e^{-\beta(H - H_{(S)})} \cdot e^{\frac{\beta}{2} H}
    = (I + A)(I + A)^\dagger
\end{align*}
We are done upon concluding that, from the bound on $\norm{A}_{\op}$ and $C > 5$, $\norm{(I + A)(I + A^\dagger) - I}_{\op} \leq 2\norm{A}_{\op} + \norm{A}_{\op}^2 \leq 15/C$, so
\begin{equation*}
    \parens[\Big]{1 - \frac{15}{C}} I
    \preceq (I + A)(I + A)^\dagger
    \preceq \parens[\Big]{1 + \frac{15}{C}} I. \qedhere
\end{equation*}
\end{proof}

Next, we prove a sharper statement.
Recalling the construction of $T_{t, \beta/2}(H, H^{(\mathcal{Q})})$ in \cref{def:truncated-series}, we show that left and right multiplying a matrix by $e^{-\frac{\beta}{2} H}$ is close to left and right multiplying by $e^{-\frac{\beta}{2}(H - H^{(\mathcal{Q})})} T_{t, \beta/2}(H, H^{(\mathcal{Q})})^\dagger$ and its Hermitian conjugate. 

\begin{lemma}[Peeling the restricted Gibbs state]\label{claim:error-bound-2}
    Let $H = \sum_{a \in [m]} H_a$ be a $\locality$-local Hamiltonian with dual interaction graph with degree $\degree$.
    Let $S = \supp(a^*)$ for some $a^* \in [\terms]$.
    Let $P$ be a $2^n \times 2^n$ Hermitian matrix such that $0.5 \id \preceq P \preceq 2 \id$ and let $t \geq 0$ be an integer.
    Given any $\beta < \frac{1}{2C (\degree + 1) }$ for some constant $C > 5$, we have
    \begin{multline*}
        \parens[\Big]{1 -  \frac{100}{C^t}} e^{-\frac{\beta}{2} H} P e^{-\frac{\beta}{2} H} 
        \preceq  e^{-\frac{\beta}{2} (H - H_{(S)})} \cdot T_{t, \beta/2}(H, H_{(S)})^\dagger  \cdot P  \cdot T_{t, \beta/2}(H, H_{(S)}) \cdot e^{-\frac{\beta}{2} (H - H_{(S)})} \\
        \preceq \parens[\Big]{1 + \frac{100}{C^t}} e^{-\frac{\beta}{2} H} P e^{-\frac{\beta}{2} H} \,,
    \end{multline*}
    where $T_{t, \beta/2}$ is the truncation defined in \cref{def:truncated-series}.
\end{lemma}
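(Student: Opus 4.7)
The plan is to recognize the middle expression of the lemma as the order-$t$ truncation of an exact propagator identity, and then control the truncation error geometrically. Specializing \cref{thm:low-deg-approx} with $\beta\mapsto\beta/2$ and $\mathcal{Q}=\mathcal{E}_{(S)}$ (so $H^{(\mathcal{Q})}=H_{(S)}$) gives the exact identity $e^{-\beta H/2}e^{\beta(H-H_{(S)})/2} = T_{\infty,\beta/2}(H,H_{(S)})$. Abbreviating $T_\infty = T_{\infty,\beta/2}(H,H_{(S)})$ and $T_t = T_{t,\beta/2}(H,H_{(S)})$, and taking Hermitian adjoints (valid since $H$ and $H_{(S)}$ are Hermitian), I obtain the mirrored factorization $e^{-\beta H/2} = e^{-\beta(H-H_{(S)})/2}T_\infty^\dagger$. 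Hence the difference between $e^{-\beta H/2}Pe^{-\beta H/2}$ and the middle expression of the lemma equals
\begin{equation*}
    e^{-\tfrac{\beta}{2}(H-H_{(S)})}\bigl(T_\infty^\dagger P T_\infty - T_t^\dagger P T_t\bigr)e^{-\tfrac{\beta}{2}(H-H_{(S)})}.
\end{equation*}

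The bulk of the work is to bound the Hermitian inner bracket in operator norm. Since $a^*\in\mathcal{E}_{(S)}$ and every other term in $\mathcal{E}_{(S)}$ is a neighbor of $a^*$ in $\graph$, $\abs{\mathcal{E}_{(S)}}\leq\degree+1$. Then \cref{thm:low-deg-approx} yields $\norm{f_s(H,H_{(S)})}_{\op}\leq\prod_{r=1}^s((\degree+1)+2(\degree+1)r)\leq(3(\degree+1))^s s!$, so each Taylor coefficient has operator norm at most $(3\beta(\degree+1)/2)^s\leq(3/(4C))^s$ by the temperature hypothesis. Summing this geometric series gives $\norm{T_\infty}_{\op},\norm{T_t}_{\op}\leq 2$ and $\norm{A}_{\op}\leq 2/C^{t+1}$ for $A\coloneqq T_\infty - T_t$ and $C>5$. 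Expanding the bracket as $A^\dagger P T_t + T_t^\dagger P A + A^\dagger P A$ and using $\norm{P}_{\op}\leq 2$, I obtain $\norm{T_\infty^\dagger P T_\infty - T_t^\dagger P T_t}_{\op}\leq\eta$ for some $\eta\leq 24/C^{t+1}$.

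Finally, I would lift this operator-norm bound into the stated Loewner sandwich. Conjugating the Hermitian bracket by the PSD Hermitian matrix $e^{-\beta(H-H_{(S)})/2}$ preserves Loewner order, so the difference is sandwiched by $\pm\eta\,e^{-\beta(H-H_{(S)})}$. Applying \cref{claim:psd-ordering} bounds $e^{-\beta(H-H_{(S)})}\preceq(1+15/C)e^{-\beta H}$, and the hypothesis $P\succeq\tfrac12 I$ yields $e^{-\beta H}\preceq 2e^{-\beta H/2}Pe^{-\beta H/2}$. Chaining these, the difference is bounded in Loewner order by $\pm 2\eta(1+15/C)$ times $e^{-\beta H/2}Pe^{-\beta H/2}$, and a direct calculation shows $2\eta(1+15/C)\leq 100/C^t$ for all $C>5$, matching the stated bound.

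The main obstacle is the geometric tail bound in the second paragraph: I need the $s!$ in the coefficient bound of \cref{thm:low-deg-approx} to cancel the $1/s!$ in the Taylor coefficient, with the hypothesis $\beta<1/(2C(\degree+1))$ supplying the remaining geometric ratio $3/(4C)$, and then I need the cross terms $A^\dagger P T_t$ and $T_t^\dagger P A$ (not just $A^\dagger P A$) to be controlled by the leading $1/C^{t+1}$ rate rather than some worse power. Everything else is routine manipulation of Loewner order, leveraging the already-proven \cref{claim:psd-ordering}.
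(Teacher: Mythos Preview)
Your proposal is correct and follows essentially the same route as the paper. Both arguments isolate the truncation error $E=T_\infty-T_t$, bound $\norm{E}_{\op}$ geometrically from the propagator-series coefficient bound of \cref{thm:low-deg-approx}, and then combine \cref{claim:psd-ordering} with $P\succeq\tfrac12 I$ to convert an operator-norm estimate into the Loewner sandwich. The only cosmetic difference is where the conjugation happens: the paper first conjugates by $e^{\beta H/2}$ (so it needs a norm bound on $e^{\beta H/2}e^{-\beta(H-H_{(S)})/2}$, which it reads off \cref{claim:psd-ordering}), whereas you leave the $e^{-\beta(H-H_{(S)})/2}$ conjugation in place and invoke \cref{claim:psd-ordering} at the end in Loewner form; this lets you use the easy bound $\norm{T_t}_{\op}\leq 2$ instead of bounding the inverse propagator.
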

\begin{proof}
It follows from \cref{thm:low-deg-approx} that 
\[
    e^{-\frac{\beta}{2} H} \cdot e^{\frac{\beta}{2}(H - H_{(S)})} = T_{t,\beta/2}(H, H_{(S)}) + E
\]
for some $E \in \C^{2^n \times 2^n}$ such that
\begin{align*}
    \norm{E}_{\op} \leq \sum_{s = t+1}^\infty (\beta (\degree + 1))^s \frac{\prod_{r=1}^s(2r+1)}{2^s(s!)} \leq \frac{5}{C^{t+1}}.
\end{align*}
We consider the expression
\begin{align}
    A &= P - e^{\frac{\beta}{2} H} \cdot e^{-\frac{\beta}{2}(H - H_{(S)})} \cdot  T_{t, \beta/2}(H, H_{(j)})^\dagger \cdot P \cdot  T_{t, \beta/2}(H, H_{(j)}) \cdot e^{-\frac{\beta}{2}(H - H_{(S)})} \cdot e^{\frac{\beta}{2} H} \nonumber \\
\intertext{
The lemma statement is equivalent to the statement that $-\frac{100}{C^t} P \preceq A \preceq \frac{100}{C^t} P$.
We continue:
}
    A &= P - (\id - e^{\frac{\beta}{2} H} \cdot e^{-\frac{\beta}{2}(H - H_{(S)})} \cdot E^\dagger ) \cdot P \cdot (\id - E \cdot e^{-\frac{\beta}{2}(H - H_{(S)})} \cdot e^{\frac{\beta}{2} H}) \nonumber \\
    &= e^{\frac{\beta}{2} H} \cdot e^{-\frac{\beta}{2}(H - H_{(S)})} \cdot E^\dagger \cdot P + P \cdot E \cdot e^{-\frac{\beta}{2}(H - H_{(S)})} \cdot e^{\frac{\beta}{2} H} \nonumber \\
    &\qquad - e^{\frac{\beta}{2} H} \cdot e^{-\frac{\beta}{2}(H - H_{(S)})} \cdot E^\dagger \cdot P \cdot E \cdot e^{-\frac{\beta}{2}(H - H_{(S)})} \cdot e^{\frac{\beta}{2} H}
    \label{eq:tmax-peel}
\end{align}
\cref{claim:psd-ordering} implies that $\norm{ e^{\frac{\beta}{2} H} e^{-\frac{\beta}{2}(H - H_{(S)})} }_{\op} \leq 4$.
Using this, along with the bounds on $\norm{E}_{\op}$ and $\norm{P}_{\op}$, we have that
\begin{align*}
    \norm{A}_{\op}
    &\leq 2\norm{ e^{\frac{\beta}{2} H} e^{-\frac{\beta}{2}(H - H_{(S)})} }_{\op} \norm{ E }_{\op} \norm{ P }_{\op} + \norm{ e^{\frac{\beta}{2} H} e^{-\frac{\beta}{2}(H - H_{(S)})} }_{\op}^2 \norm{ E }_{\op}^2 \norm{ P }_{\op} \\
    &\leq \frac{80}{C^{t+1}} + \frac{800}{C^{2t+2}}
    \leq \frac{50}{C^t}.
\end{align*}
Thus,
\[
    -\frac{50}{C^t} \id \preceq A \preceq \frac{50}{C^t} \id.
\]
We get the desired relations upon using that $\frac{1}{2} \id \preceq P \preceq 2 \id$.
\end{proof}

\subsection{Separability} \label{subsec:separability}

We now use our sampling procedure to prove separability at high temperature.
Our strategy is to show that $e^{-\beta H}$ can be written as a convex combination of simple matrices, which we call \emph{configurations}, all of which are positive semi-definite and separable.
Moreover, our proof is constructive, giving an algorithm for sampling configurations according to their weight in the combination.
We now define the objects our algorithm works with, which we call Hermitian monomials.

\begin{definition}[Hermitian monomial] \label{def:herm-monomial}
    For a Hamiltonian $H = \sum_{a=1}^\terms H_a$, we call an operator $X$ a \emph{Hermitian monomial} of (the terms of) $H$ if we can form it by starting from $X = \id$ and iteratively performing the following ``multiply and symmetrize'' operation: given $a_1,\dots, a_s$ and $b_1, \dots, b_t$, let
    \[
        X \gets \frac12\parens[\Big]{H_{a_1}\dots H_{a_s} X H_{b_t} \dots H_{b_1} + H_{b_1} \dots H_{b_t} X H_{a_s} \dots H_{a_1}}.
    \]
    We can maintain a description of a Hermitian monomial as a list of operations used to construct it.
    A Hermitian monomial $X$ can potentially be constructed in different ways, but we will abuse notation and use $X$ to refer to the matrix $X$ alongside an associated description.
    A description of a Hermitian monomial $X$ has an associated multiset of terms $\braces{a_1, \dots, a_t}$, which are the $H_a$'s which appear in the operations used to construct $X$.
    For example, the Hermitian monomial $\tfrac12(H_1 H_1 H_2 + H_2 H_1 H_1)$ has the associated multiset $\braces{1, 1, 2}$.
    We call $t$ the \emph{degree} of the monomial.\footnote{
        It is important to note that, for our algorithm, we will primarily be working with the \emph{description} of the monomial, and not the actual matrix $X$.
        Again, a matrix can have multiple different descriptions: for example, if $H_1^2 = \id$, then $\id$ can be described with a multiset of $\braces{1, 1}$, and with that description it would have a degree of $2$.
        This ambiguity will not cause issues, though; throughout, we will treat the $H_a$'s as essentially formal non-commutative indeterminates without any relations, and only at the end of the proof interpret these indeterminates as matrices.
    }
\end{definition}

A configuration is a tensor product of matrices $I + cX$, where $X$ is a Hermitian monomial.

\begin{definition}[Configuration] \label{def:configuration}
    For a Hamiltonian $H = \sum_{a = 1}^\terms H_a$, a \emph{configuration} is a collection $\mathcal{X} = \braces{(c_1, X_1), \dots, (c_\ell, X_\ell)}$ such that $c_i \in \R$, $X_{i}$ is a Hermitian monomial for every $i \in [\ell]$, and the $X_i$'s have disjoint supports.
    This configuration defines an associated operator $\sigma(\mathcal{X})$ which is $\id + c_i X_i$ on the support of $X_i$ and identity outside of the supports:
    \begin{align*}
        \sigma(\mathcal{X}) = (\id)_{[\qubits] \setminus (\supp(X_1) \cup \dots \cup \supp(X_\ell))} \otimes \parens[\Big]{\bigotimes_{i = 1}^\ell (\id + c_i X_i)_{\supp(X_i)}}.
    \end{align*}
\end{definition}

\begin{mdframed}
\begin{algorithm}[Pinning a single term]
    \label{algo:single-step-2}\mbox{}
    \begin{description}
    \item[Input:] Hamiltonian $H = \sum_a H_a$ with locality $\locality$ and dual interaction graph $\graph$ with degree $\degree$; inverse temperature parameter $\beta$ such that $\beta \leq \beta_c = 1/(100 \degree \locality)$; accuracy parameter $\eps \geq 0$.
    \item[Input:] Set of sites $S \subseteq [\qubits]$, configuration $\mathcal{X} = \braces{(c_1, X_1), \dots, (c_{\ell}, X_{\ell})}$ satisfying \eqref{eq:config-invariant}.
    \item[Output:] Set of sites $\wh{S} \subseteq [\qubits]$, configuration $\mathcal{\wh{X}} = \braces{(\wh{c}_1, \wh{X}_1), \dots, (\wh{c}_{\wh{\ell}}, \wh{X}_{\wh{\ell}})}$ satisfying \eqref{eq:config-invariant}.
    \item[Operation:]\mbox{}
    \begin{algorithmic}[1]
        \State Let $t_{\max} = 10 \log(\qubits / \eps)$, or $\infty$ if $\eps = 0$;
        \State Let $\wh{\ell} = \ell$:
        \State Let $\braces{a_1, \dots, a_t}$ be the terms associated to $(c_\ell, X_\ell)$, and choose an $a^* \in \mathcal{E}^{(S)}$ neighboring $\braces{a_1,\dots,a_t}$ in $\graph$;
        \label{line:extract-out}
        \Comment{For simplicity, these choices should be deterministic.}
        \If{no such $a^*$ exists}
            \State Add $(c_{\ell + 1}, X_{\ell + 1}) = (0, \id)$ to the configuration $\mathcal{X}$, and choose any $a^* \in \mathcal{E}^{(S)}$;
            \State Set $\wh{\ell} \gets \ell + 1$;
        \EndIf
                        \State Run \cref{algo:sample-term} twice (independently) with inputs $H \gets H^{(S)}$, $\beta \gets \beta/2$, $\mathcal{Q} \gets \mathcal{E}_{(a^*)} \cap \mathcal{E}^{(S)}$, and $t_{\max}$ to obtain $b_1, E_1, t_1$ and $b_2, E_2, t_2$; \label{line:sample-start}
        \State Let $\gamma = \frac{3}{5\locality}$;
        \State Sample $\xi \in \{0,1,2,3,4,5,6\}$ with probabilities $\braces{1 - \gamma, \frac{\gamma}{6}, \frac{\gamma}{6},\frac{\gamma}{6},\frac{\gamma}{6},\frac{\gamma}{6},\frac{\gamma}{6}}$. 
        \State Set $c_{\wh{\ell}} \gets \wh{c}$ and $X_{\wh{\ell}} \gets \wh{X}$, where
        \label{line:six-parts}
        \begin{itemize}
            \item If $\xi = 0$, then $\wh{c} = c_{\wh{\ell}}/(1 - \gamma)$ and $\wh{X} = X_{\wh{\ell}}$;
            \item If $\xi = 1$, then $\wh{c} = 6 b_1/\gamma$ and $\wh{X} = (E_1^\dagger  + E_1)/2$;
            \item If $\xi = 2$, then $\wh{c} = 6 b_2/\gamma$ and $\wh{X} = (E_2^\dagger + E_2)/2$;
            \item If $\xi = 3$, then $\wh{c} = 6 c_{\wh{\ell}}b_1 / \gamma$ and $\wh{X} = (E_1^\dagger X_{\wh{\ell}} + X_{\wh{\ell}} E_1)/2$;
            \item If $\xi = 4$, then $\wh{c} = 6 c_{\wh{\ell}}b_2 / \gamma$ and $\wh{X} = (E_2^\dagger X_{\wh{\ell}} + X_{\wh{\ell}} E_2)/2$;
            \item If $\xi = 5$, then $\wh{c} = 6 b_1b_2 / \gamma$ and $\wh{X} = (E_2^\dagger E_1 + E_1^\dagger E_2)/2$;
            \item If $\xi = 6$, then $\wh{c} = 6 c_{\wh{\ell}}b_1b_2 / \gamma$ and $\wh{X} = (E_2^\dagger X_{\wh{\ell}} E_1 + E_1^\dagger X_{\wh{\ell}} E_2)/2$;
        \end{itemize}
        \Comment{This is chosen so that $\E_{\xi}[\id + \wh{c} \wh{X}]$ takes a particular form \eqref{eq:why-6-cases}.}
        \State Set $S \gets S \setminus \supp(H_{a^*})$;
        \State \Output $S, \mathcal{X}$;
    \end{algorithmic}
    \end{description}
\end{algorithm}
\end{mdframed}

To analyze \cref{algo:single-step-2}, we will need the following arithmetic fact.
On the left-hand side are the operators which appear in our analysis, and on the right-hand side are Hermitian monomials.

\begin{fact} \label{fact:six-parts}
    For matrices $X$, $Y_1$, and $Y_2$, we can write
    \begin{multline*}
        (\id + Y_1)^\dagger (\id + X) (\id + Y_2)
        + (\id + Y_2)^\dagger (\id + X) (\id + Y_1) \\
        = 2\id
        + 2X
        + (Y_1^\dagger + Y_1)
        + (Y_2^\dagger + Y_2)
        + (Y_1^\dagger X + X Y_1)
        + (Y_2^\dagger X + X Y_2) \\
        + (Y_2^\dagger Y_1 + Y_1^\dagger Y_2)
        + (Y_2^\dagger X Y_1 + Y_1^\dagger X Y_2)\,.
    \end{multline*}
\end{fact}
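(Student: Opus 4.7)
The statement is a purely algebraic identity, so the plan is simply direct expansion together with careful bookkeeping to match the symmetric grouping on the right-hand side. No inequality, norm bound, or combinatorial argument is needed; the work is entirely in keeping track of the eight monomials produced by each triple product.

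First I would expand each of the two triple products separately. Distributing over $\id + Y_2$ (respectively $\id + Y_1$) on the right and then over $(\id + Y_1)^\dagger$ (respectively $(\id + Y_2)^\dagger$) on the left yields, for the first term,
\begin{equation*}
(\id + Y_1)^\dagger(\id + X)(\id + Y_2) = \id + X + Y_2 + XY_2 + Y_1^\dagger + Y_1^\dagger X + Y_1^\dagger Y_2 + Y_1^\dagger X Y_2,
\end{equation*}
and an analogous expansion with the roles of $Y_1$ and $Y_2$ swapped for the second term. This gives sixteen monomials in total.

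Next I would add the two eight-term expressions. The constant terms combine to $2\id$, the $X$ terms combine to $2X$, and each remaining monomial in the first expansion is naturally paired with its ``swap'' from the second expansion. Grouping $Y_2 + Y_1^\dagger$ with $Y_1 + Y_2^\dagger$ produces $(Y_1^\dagger + Y_1) + (Y_2^\dagger + Y_2)$; grouping $XY_2 + Y_1^\dagger X$ with $XY_1 + Y_2^\dagger X$ produces $(Y_1^\dagger X + XY_1) + (Y_2^\dagger X + XY_2)$; the two cross terms $Y_1^\dagger Y_2$ and $Y_2^\dagger Y_1$ combine as listed; and finally the two ``sandwich'' terms $Y_1^\dagger X Y_2$ and $Y_2^\dagger X Y_1$ give the last line. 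Matching these groups to the right-hand side of the claim completes the identity.

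There is no real obstacle here beyond bookkeeping: the identity holds with no assumptions on $X, Y_1, Y_2$ (in particular no Hermiticity is assumed, so the $Y_i^\dagger$'s must be tracked faithfully). The only mild subtlety is resisting the temptation to combine, e.g., $Y_1^\dagger X$ with $X Y_1$ before the end—these are kept separate in the statement precisely because the pairing exhibits each line as a manifestly Hermitian combination, which is what makes the identity useful for constructing the Hermitian monomials in \cref{algo:single-step-2}.
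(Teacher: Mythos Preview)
Your proposal is correct; the paper states this as a Fact without proof, and direct expansion of the two triple products followed by the symmetric regrouping you describe is exactly the (only) way to verify it. Your bookkeeping is accurate and matches the right-hand side term for term.
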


\begin{mdframed}
\begin{algorithm}[Sampling a monomial of $e^{-\beta H}$]
    \label{algo:separability}\mbox{}
    \begin{description}
    \item[Input:] Hamiltonian $H = \sum_a H_a$ with dual interaction graph $\graph$ with degree $\degree$; inverse temperature parameter $\beta$ such that $\beta \leq \beta_c = 1/(100 \degree \locality)$.
    \item[Output:] Configuration $\mathcal{X}$ such that $\E[\sigma(\mathcal{X})] = e^{-\beta H}$.
    \item[Operation:]\mbox{}
    \begin{algorithmic}[1]
        \State Set $S = [\qubits]$;
        \State Set $\mathcal{X} = \varnothing$;
        \LComment{We maintain the invariant that $\E[e^{-\frac{\beta}{2}H^{(S)}}\sigma(\mathcal{X})e^{-\frac{\beta}{2}H^{(S)}}] = e^{-\beta H}$.}
        \While{$\mathcal{E}^{(S)} \neq \varnothing$}
            \State Run \cref{algo:single-step-2} with $S$ and $\mathcal{X}$ and $\eps = 0$; \label{line:while-loop}
            \State Take the output $\wh{S}$ and $\wh{\mathcal{X}}$ to be the new $S$ and $\mathcal{X}$;
        \EndWhile
        \State \Output $\mathcal{X}$;
    \end{algorithmic}
    \end{description}
\end{algorithm}
\end{mdframed}

The set $S$ corresponds to ``unpinned sites'', the set of sites which have not yet been sampled.
For the algorithm to work, $S$ must only interact with at most one of the monomials in the configuration, in the manner described in the following lemma.

\begin{lemma}[\cref{algo:single-step-2} is well-defined]
    \label{lem:well-defined}
    \cref{algo:single-step-2} is a well-defined procedure provided $\mathcal{E}^{(S)} \neq \varnothing$.
    Further, it maintains the following invariant: if the input $S$ and $\mathcal{X}$ satisfies that
    \begin{equation} \label{eq:config-invariant}
        \text{for all $i \in [\ell - 1]$, no term in $X_i$ is a neighbor of $\mathcal{E}^{(S)}$,}
    \end{equation}
    then the output $\wh{S}$ and $\mathcal{\wh{X}}$ satisfies the same condition.
    Finally, the output $\mathcal{\wh{X}}$ is a valid configuration.
\end{lemma}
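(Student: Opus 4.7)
The plan is to split into the three claims of the lemma and verify each directly, with the only nuance being to carefully bookkeep the two branches of line~\ref{line:extract-out} (either the ``last'' monomial $X_\ell$ gets extended, or a fresh trivial pair is introduced).

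For well-definedness, the only step that could fail is the selection of $a^*$ in line~\ref{line:extract-out}. The hypothesis $\mathcal{E}^{(S)} \neq \varnothing$ guarantees a candidate exists in every case: if some element of $\mathcal{E}^{(S)}$ neighbors the terms of $X_\ell$ in $\graph$, pick it; otherwise the algorithm appends the trivial pair $(0, \id)$ and chooses any $a^* \in \mathcal{E}^{(S)}$. Since $a^* \in \mathcal{Q} = \mathcal{E}_{(a^*)} \cap \mathcal{E}^{(S)}$, the subsequent call to \cref{algo:sample-term} on $H^{(S)}$ with this $\mathcal{Q}$ is also well-posed.

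For the invariant, observe that the algorithm modifies only the pair at index $\wh\ell$; every other pair of $\wh{\mathcal{X}}$ coincides with its input. For $i < \ell$ in either branch, the input invariant asserts that no term in $X_i$ neighbors $\mathcal{E}^{(S)}$; combined with $\wh{S} = S \setminus \supp(H_{a^*}) \subseteq S$ and hence $\mathcal{E}^{(\wh{S})} \subseteq \mathcal{E}^{(S)}$, the same property holds against $\mathcal{E}^{(\wh{S})}$. In the ``new pair'' branch ($\wh\ell = \ell + 1$), we additionally need the invariant for $X_\ell$ itself; but this branch is triggered precisely because no element of $\mathcal{E}^{(S)}$ neighbors the terms of $X_\ell$, which is \emph{a fortiori} true for $\mathcal{E}^{(\wh{S})}$.

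For validity, the only nontrivial check is that $\supp(\wh X_{\wh\ell})$ is disjoint from $\supp(X_i)$ for every $i < \wh\ell$. Inspecting the seven cases in line~\ref{line:six-parts}, $\supp(\wh X_{\wh\ell}) \subseteq \supp(X_{\wh\ell}) \cup \supp(E_1) \cup \supp(E_2)$, where $X_{\wh\ell}$ is either $X_\ell$ (disjoint from $\supp(X_i)$ for $i < \ell$ by input validity) or $\id$ (empty support). For the contribution of $E_1$ and $E_2$, \cref{lem:sample-term} applied to the invocation of \cref{algo:sample-term} on $H^{(S)}$ guarantees that each $E_j$ is a product $H_{b_1} \cdots H_{b_s}$ with every $b_r \in \mathcal{E}^{(S)}$. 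The input invariant says that no term in $X_i$ for $i < \ell$ neighbors any such $b_r$ in $\graph$, hence $\supp(H_{b_r}) \cap \supp(X_i) = \varnothing$, and taking the union over $r$ gives the required disjointness. When $\wh\ell = \ell + 1$, the same argument also covers $i = \ell$ via the triggering condition. The principal obstacle is therefore just the case split between the two branches of line~\ref{line:extract-out}; beyond that, the proof reduces to the simple observation that restricting the sampler to $H^{(S)}$ keeps the newly introduced terms inside $\mathcal{E}^{(S)}$, which is precisely what the invariant was designed to shield the older monomials from.
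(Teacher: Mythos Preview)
Your proof is correct and follows essentially the same approach as the paper's own proof: both split into the three claims, use $\mathcal{E}^{(\wh S)}\subseteq\mathcal{E}^{(S)}$ to propagate the invariant for $i<\ell$, handle $i=\ell$ in the $\wh\ell=\ell+1$ branch via the triggering condition, and check disjointness of supports using that the newly introduced terms lie in $\mathcal{E}^{(S)}$. Your version is slightly more explicit in tracing the support of $\wh X_{\wh\ell}$ through the seven cases; the one small point the paper adds that you elide is the observation that $\wh X_{\wh\ell}$ is itself a Hermitian monomial (each of the seven cases has the multiply-and-symmetrize form of \cref{def:herm-monomial}), which is needed for $\wh{\mathcal X}$ to be a valid configuration.
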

\begin{proof}
The algorithm successfully completes provided there is an element of $\mathcal{E}^{(S)}$ so that a valid $a^*$ can be chosen.

For the invariant, we need to show that, for all $i \in [\wh{\ell} - 1]$, no term in $\wh{X}_i$ is a neighbor of $\mathcal{E}^{(\wh{S})}$.
First, observe that $\mathcal{\wh{X}}$ only differs from $\mathcal{X}$ in the last-indexed monomial in the configuration, which is either $i = \ell$ or $i = \ell + 1$.
Further, $\wh{S} \subset S$, so $\mathcal{E}^{(\wh{S})} \subseteq \mathcal{E}^{(S)}$, and so the invariant is maintained for $i < \ell$.
This proves the statement when $\wh{\ell} = \ell$, and the remaining case is when $\wh{\ell} = \ell + 1$, where we need to show that no term in $\wh{X}_\ell$ is a neighbor of $\mathcal{E}^{(\wh{S})}$.
This case only occurs when there are no such neighbors, as seen in \cref{line:extract-out}, proving the invariant.

The output $\mathcal{X}$ is a valid configuration because the input is a configuration, and $\mathcal{\wh{X}}$ only modifies $\mathcal{X}$ by adding terms from $\mathcal{E}^{(S)}$ to monomials $\ell$ or $\ell + 1$.
From the invariant, these terms have disjoint support from all other terms.
Since $E_1$ and $E_2$ are products of $H_a$'s, $\wh{X}$ is a Hermitian monomial when $X$ is, by definition (\cref{def:herm-monomial}).
\end{proof}

\begin{lemma} \label{lem:sep-expectation}
    \cref{algo:single-step-2} outputs a $\wh{S} \subseteq [\qubits]$ and configuration $\mathcal{\wh{X}}$ such that
    \begin{align*}
        \E[\sigma(\mathcal{\wh{X}})] = T_{t_{\max}, \beta/2}(H^{(S)}, H_{(S \setminus \wh{S})}^{(S)})^\dagger \sigma(\mathcal{X}) T_{t_{\max}, \beta/2}(H^{(S)}, H_{(S \setminus \wh{S})}^{(S)}) \, .
    \end{align*}
    In particular, when $\eps = 0$,
    \begin{align*}
        e^{-\frac{\beta}{2}H^{(\wh{S})}}\E[\sigma(\mathcal{\wh{X}})]e^{-\frac{\beta}{2}H^{(\wh{S})}} = e^{-\frac{\beta}{2} H^{(S)}} \sigma(\mathcal{X}) e^{-\frac{\beta}{2} H^{(S)}}\,.
    \end{align*}
\end{lemma}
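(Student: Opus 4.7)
\textbf{Proof proposal for \cref{lem:sep-expectation}.}
The plan is to unroll the definition of $\sigma(\mathcal{\wh X})$, take the expectation over the three sources of randomness in \cref{algo:single-step-2}—the two calls to \cref{algo:sample-term} and the choice of $\xi$—and recognize the result using \cref{fact:six-parts}. The second claim will then follow by substituting the closed form for $T_{\infty,\beta/2}$ and cancelling.

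First I would set $T = T_{t_{\max},\beta/2}(H^{(S)}, H_{(S\setminus \wh S)}^{(S)})$ and observe that the call to \cref{algo:sample-term} in \cref{line:sample-start} uses $\mathcal{Q} = \mathcal{E}_{(a^*)} \cap \mathcal{E}^{(S)}$; since $\wh S = S \setminus \supp(H_{a^*})$, one checks $H^{(\mathcal{Q})} = H_{(S\setminus \wh S)}^{(S)}$. By \cref{claim:sample-term-unbiased}, the two independent samples $(b_1,E_1)$ and $(b_2,E_2)$ then satisfy $\E[\id + b_i E_i] = T$. Next I would compute $\E_\xi[\id + \wh c\,\wh X]$ holding $b_1,E_1,b_2,E_2$ fixed. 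Summing the seven weighted cases in \cref{line:six-parts} and matching term-by-term against the right-hand side of \cref{fact:six-parts} applied with $X = c_{\wh\ell} X_{\wh\ell}$, $Y_1 = b_1 E_1$, $Y_2 = b_2 E_2$, gives
\begin{align*}
    \E_\xi[\id + \wh c\,\wh X] = \tfrac{1}{2}\bigl((\id+b_1E_1)^\dagger(\id + c_{\wh\ell} X_{\wh\ell})(\id+b_2E_2) + (\id+b_2E_2)^\dagger(\id + c_{\wh\ell} X_{\wh\ell})(\id+b_1E_1)\bigr).
\end{align*}
Taking the remaining expectation over the independent samples and using $\E[\id+b_iE_i]=T$ gives $\E[\id + \wh c\,\wh X] = T^\dagger(\id+c_{\wh\ell}X_{\wh\ell})T$.

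The step I expect to be the main obstacle is promoting this local identity into the global identity $\E[\sigma(\mathcal{\wh X})] = T^\dagger \sigma(\mathcal X) T$, because $T$ acts on the whole region supported by $\mathcal{E}^{(S)}$, whereas $\sigma(\mathcal X)$ is a tensor product of factors $(\id + c_i X_i)$ on disjoint supports. Here I would invoke \cref{lem:well-defined}: the invariant \eqref{eq:config-invariant} says that no term in $X_i$ for $i < \wh\ell$ is a neighbor of any term in $\mathcal{E}^{(S)}$, so $\supp(X_i)$ is disjoint from $\supp(T)$ (which lies in $\bigcup_{a \in \mathcal{E}^{(S)}} \supp(H_a)$). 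Thus $T$ commutes with each $(\id + c_i X_i)$ for $i < \wh\ell$, and when $\wh\ell = \ell+1$ the newly appended monomial is $(0,\id)$ and trivially causes no issue. Consequently, replacing just the last tensor factor of $\sigma(\mathcal X)$ by $T^\dagger(\id+c_{\wh\ell}X_{\wh\ell})T$ is the same as conjugating the whole configuration by $T$, yielding $\E[\sigma(\mathcal{\wh X})] = T^\dagger \sigma(\mathcal X) T$.

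Finally, for the $\eps = 0$ case, $t_{\max} = \infty$ and by \cref{thm:low-deg-approx},
\begin{align*}
    T = e^{-\frac{\beta}{2}H^{(S)}} e^{\frac{\beta}{2}(H^{(S)} - H_{(S\setminus \wh S)}^{(S)})} = e^{-\frac{\beta}{2}H^{(S)}} e^{\frac{\beta}{2}H^{(\wh S)}},
\end{align*}
where I use that terms $a$ with $\supp(H_a) \subseteq S$ and $\supp(H_a)\cap (S\setminus \wh S) = \varnothing$ are exactly those with $\supp(H_a) \subseteq \wh S$. Then $e^{-\frac{\beta}{2}H^{(\wh S)}} T^\dagger = e^{-\frac{\beta}{2}H^{(S)}}$ and $T e^{-\frac{\beta}{2}H^{(\wh S)}} = e^{-\frac{\beta}{2}H^{(S)}}$, and sandwiching the first identity on both sides by $e^{-\frac{\beta}{2}H^{(\wh S)}}$ gives the desired equation. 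The remaining bookkeeping—verifying the exact algebraic match of the seven $\xi$ cases with \cref{fact:six-parts}, and carefully tracking the (possibly empty) appended monomial—is routine once the commutation step is in place.
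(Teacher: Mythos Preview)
Your proposal is correct and follows essentially the same route as the paper: compute $\E_\xi[\id+\wh c\,\wh X]$ via \cref{fact:six-parts}, apply \cref{claim:sample-term-unbiased} to the two independent samples, and then use the disjoint-support information from \cref{lem:well-defined} to pass from the single-factor identity to $\E[\sigma(\wh{\mathcal X})]=T^\dagger\sigma(\mathcal X)T$; the $\eps=0$ case is handled identically by identifying $T$ with $e^{-\frac{\beta}{2}H^{(S)}}e^{\frac{\beta}{2}H^{(\wh S)}}$ and cancelling. The only place you might tighten the write-up is the case $\wh\ell=\ell+1$: there you need $T$ to commute not just with the appended $(0,\id)$ factor but also with $(\id+c_\ell X_\ell)$, which follows because that branch is taken precisely when no $a^*\in\mathcal E^{(S)}$ neighbors a term of $X_\ell$---this is exactly the observation used inside the proof of \cref{lem:well-defined}.
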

\begin{proof}
Using linearity of expectation and \cref{fact:six-parts}, we can conclude that, with $\wh{c}$ and $\wh{X}$ as defined in \cref{line:six-parts} of the algorithm,
\begin{align}
    \E_\xi[\id + \wh{c}\wh{X}]
    &= \id + c_{\wh{\ell}}X_{\wh{\ell}}
    + b_1 (E_1^\dagger + E_1)/2
    + b_2 (E_2^\dagger + E_2)/2
    + c_{\wh{\ell}}b_1 (E_1^\dagger X_{\wh{\ell}} + X_{\wh{\ell}} E_1)/2 \nonumber\\
    & \hspace{2em} + c_{\wh{\ell}}b_2 (E_2^\dagger X_{\wh{\ell}} + X_{\wh{\ell}} E_2)/2
    + b_1b_2 (E_2^\dagger E_1 + E_1^\dagger E_2)/2
    + c_{\wh{\ell}}b_1b_2 (E_2^\dagger X_{\wh{\ell}} E_1 + E_1^\dagger X_{\wh{\ell}} E_2)/2 \nonumber\\
    &= \frac12\parens[\Big]{(\id + b_1E_1)^\dagger(\id + c_{\wh{\ell}}X_{\wh{\ell}})(\id + b_2 E_2)
    + (\id + b_2 E_2)^\dagger(\id + c_{\wh{\ell}}X_{\wh{\ell}})(\id + b_1 E_1)}\,. \label{eq:why-6-cases}
\end{align}
\cref{claim:sample-term-unbiased} implies that the outputs of \cref{algo:sample-term} in \cref{line:sample-start} satisfy
\begin{align*}
    \E[\id + b_1 E_1] = \E[\id + b_2 E_2] = T_{t_{\max}, \beta/2}(H^{(S)}, H_{(a^*)}^{(S)})\,,
\end{align*}
where $H_{(a^*)}^{(S)}$ is the sum of all terms $H_a$ where $\supp(H_a) \subseteq S$ and $\supp(H_a) \cap \supp(H_{a^*})$ is non-empty.
So, taking the expectation over the entire algorithm and using that $\id + b_1E_1$ and $\id + b_2E_2$ are independent,
\begin{align*}
    \E[\id + \wh{c} \wh{X}] &= \E\bracks[\Big]{\frac12\parens[\Big]{(\id + b_1E_1)^\dagger(\id + c_{\wh{\ell}}X_{\wh{\ell}})(\id + b_2 E_2) + (\id + b_2 E_2)^\dagger(\id + c_{\wh{\ell}}X_{\wh{\ell}})(\id + b_1 E_1)}} \\
    &= T_{t_{\max}, \beta/2}(H^{(S)}, H_{(a^*)}^{(S)})^\dagger (\id + c_{\wh{\ell}}X_{\wh{\ell}}) T_{t_{\max}, \beta/2}(H^{(S)}, H_{(a^*)}^{(S)}).
\end{align*}
By \cref{lem:well-defined}, the support of $\id + \wh{c}\wh{X}$ is always disjoint from all of the other monomials in the configuration $\mathcal{\wh{X}}$, so
\begin{align*}
    \E[\sigma(\mathcal{\wh{X}})] &= T_{t_{\max}, \beta/2}(H^{(S)}, H_{(a^*)}^{(S)})^\dagger \sigma(\mathcal{X}) T_{t_{\max}, \beta/2}(H^{(S)}, H_{(a^*)}^{(S)}).
\end{align*}
When $\eps = 0$, this can be written as
\begin{align*}
    \E[\sigma(\mathcal{\wh{X}})]
    &= e^{\frac{\beta}{2}(H^{(S)} - H_{(a^*)}^{(S)})}e^{-\frac{\beta}{2} H^{(S)}} \sigma(\mathcal{X})  e^{-\frac{\beta}{2} H^{(S)}} e^{\frac{\beta}{2}(H^{(S)} - H_{(a^*)}^{(S)})} \\
    &= e^{\frac{\beta}{2}H^{(\wh{S})}}e^{-\frac{\beta}{2} H^{(S)}} \sigma(\mathcal{X}) e^{-\frac{\beta}{2} H^{(S)}} e^{\frac{\beta}{2}H^{(\wh{S})}},
\end{align*}
since $\wh{S} = S \setminus \supp(H_{a^*})$.
\end{proof}

What we have so far is enough to show that \cref{algo:separability} indeed outputs a configuration $\mathcal{X}$ with $\E[\sigma(\mathcal{X})] = e^{-\beta H}$.
However, for separability, we need more: namely, we need that $\sigma(\mathcal{X})$ is always positive semi-definite and separable.
We handle the former with the following lemma.
This is the key lemma where the critical temperature is used.
It is nontrivial to show, but we are able to prove it via a carefully chosen potential function.

\begin{lemma}[Coefficients remain small]\label{lem:coeff-potential}
    Let $H = \sum_a H_a$ be a $\locality$-local Hamiltonian and let the temperature parameter $\beta$ satisfy $\beta \leq \beta_c = 1 / (50 (\degree+1) \locality)$, where $\degree$ is the degree of the dual interaction graph of $H$.
    Consider running \cref{algo:single-step-2} on the input $S$ and $\mathcal{X}$, and receiving $\wh{S}$ and $\mathcal{\wh{X}}$ as output.
    Then $\abs{\wh{S}} < \abs{S}$.
    Further, consider the following invariant on the input:
    \begin{multline} \label{eq:potential}
                \text{for all } (c, X) \in \mathcal{X},\,
        \abs{c} \leq \parens{1-\gamma}^{\abs{S \cap (\supp(H_{a_1}) \cup \dots \cup \supp(H_{a_t}))}} \parens[\Big]{\frac{\beta}{\beta_c}}^t \\
        \text{where } \braces{a_1,\dots,a_t} \text{ are the terms associated to the Hermitian monomial } X
        \,.
    \end{multline}
    If the input satisfies \eqref{eq:potential}, then so does the output:
    \begin{align} \label{eq:potential-out}
                \abs{\wh{c}} \leq \parens{1-\gamma}^{\abs{\wh{S} \cap (\supp(H_{\wh{a}_1}) \cup \dots \cup \supp(H_{\wh{a}_{\wh{t}}}))}} \parens[\Big]{\frac{\beta}{\beta_c}}^{\wh{t}}
    \end{align}
\end{lemma}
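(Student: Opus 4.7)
The plan is to handle the cardinality bound trivially and then verify the potential invariant by case analysis on the random variable $\xi$. The bound $\abs{\wh{S}} < \abs{S}$ is immediate: the algorithm always selects some $a^* \in \mathcal{E}^{(S)}$, so $\varnothing \neq \supp(H_{a^*}) \subseteq S$, and $\wh{S} = S \setminus \supp(H_{a^*})$ is strictly smaller. For invariant preservation, the ``old'' monomials $(c_i, X_i)$ with $i < \wh{\ell}$ are untouched by the algorithm; since $\wh{S} \subseteq S$ only shrinks the intersection with each $U_{X_i}$, the factor $(1-\gamma)^{\abs{\wh{S} \cap U_{X_i}}}$ only grows, so the invariant transfers immediately. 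All the substance lies in the modified monomial $(\wh{c}_{\wh{\ell}}, \wh{X}_{\wh{\ell}})$.

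My approach for the active monomial rests on two observations. First, by the algorithm's choice of $a^*$ as a neighbor of the terms of $X_\ell$ in $\graph$, we have $\supp(H_{a^*}) \cap U_{X_\ell} \neq \varnothing$ whenever $X_\ell$ has at least one associated term (otherwise $c_\ell = 0$ and the invariant check is trivial), so passing from $S$ to $\wh{S}$ decreases $\abs{S \cap U_{X_\ell}}$ by at least one. Second, \cref{claim:sample-term-unbiased} applied with $\mathcal{Q} = \mathcal{E}_{(a^*)} \cap \mathcal{E}^{(S)}$ (of size at most $\degree+1$) gives $\abs{b_i} \leq (2\beta(\degree+1))^{t_i}$, and ensures that when $t_i \geq 1$ the terms of $E_i$ all lie in $\mathcal{E}^{(S)}$ with at least one in $\mathcal{E}_{(a^*)}$; hence $U_{E_i} \subseteq S$ and $\abs{U_{E_i} \cap \supp(H_{a^*})} \geq 1$.

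Substituting $\beta \leq \beta_c = 1/(50(\degree+1)\locality)$ rewrites $(2\beta(\degree+1))^{t_i}$ as $(\beta/\beta_c)^{t_i}(25\locality)^{-t_i}$, and the prefactor $6/\gamma$ equals $10\locality$ (since $\gamma = 3/(5\locality)$). For each case $\xi \in \braces{1,\dots,6}$ with the relevant $b_i$'s nontrivial, a direct computation using \cref{fact:six-parts} and the input invariant reduces the desired bound $\abs{\wh{c}} \leq (1-\gamma)^{\abs{\wh{S} \cap U_{\wh{X}}}}(\beta/\beta_c)^{\wh{t}}$ to an inequality of the form
\[
    10\locality \cdot (25\locality)^{-(t_1+t_2)} \leq (1-\gamma)^{(t_1+t_2)\locality - 1},
\]
where the $-1$ in the exponent is supplied by the guaranteed overlap of $U_{E_1} \cup U_{E_2}$ with $\supp(H_{a^*})$ (convention: if only one $E_i$ is present, interpret $t_2 = 0$). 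This follows from Bernoulli's inequality: $(1-\gamma)^\locality \geq 1 - \locality\gamma = 2/5$, so $(25\locality(1-\gamma)^\locality)^{t_1+t_2} \geq (10\locality)^{t_1+t_2} \geq 10\locality(1-\gamma)$ whenever $t_1+t_2 \geq 1$. Cases in which some $t_i = 0$ force the corresponding $b_i = 0$ and trivialize the relevant $\xi$ branches; case $\xi = 0$ uses the single-site decrease in $\abs{S \cap U_{X_\ell}}$ to absorb the $(1-\gamma)^{-1}$ growth in $\wh{c}$.

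The main obstacle will be the combinatorial bookkeeping across the seven cases, especially $\xi \in \braces{3,4,5,6}$ where $U_{\wh{X}}$ is a union of up to three term supports and one must extract the correct reduction $\abs{\wh{S} \cap U_{\wh{X}}} - \abs{S \cap U_{X_\ell}} \leq \abs{\wh{S} \cap (U_{E_1} \cup U_{E_2})}$ (via $\wh{S} \cap U_{X_\ell} \subseteq S \cap U_{X_\ell}$ and inclusion-exclusion) and then bound the latter by $(t_1+t_2)\locality - 1$. The subtle point is that the $-1$ arises from only a \emph{single} combined overlap with $\supp(H_{a^*})$, since the sites in $U_{E_1} \cap \supp(H_{a^*})$ and $U_{E_2} \cap \supp(H_{a^*})$ may coincide. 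The parameters $\gamma$ and $\beta_c$ are calibrated precisely so that the Bernoulli step is effectively tight at the base case $t_1 + t_2 = 1$, leaving essentially no slack in the argument.
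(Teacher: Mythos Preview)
Your proposal is correct and follows essentially the same route as the paper's proof: reduce to the single modified monomial $(c_{\wh{\ell}},X_{\wh{\ell}})$, bound $\abs{b_i}\le(2\beta(\degree+1))^{t_i}$ via \cref{claim:sample-term-unbiased} with $\abs{\mathcal{Q}}\le\degree+1$, and close each of the seven $\xi$-cases using $(1-\gamma)^{\locality}\ge 1-\locality\gamma=2/5$. The one cosmetic difference is that in cases $\xi\in\{1,\dots,6\}$ you extract an extra $-1$ in the exponent from the overlap of $U_{E_i}$ with $\supp(H_{a^*})$; the paper does not bother with this (it simply bounds $\abs{\wh{S}\cap\supp(\wh{X})}\le\abs{S\cap\supp(X)}+\locality(t_1+t_2)$), and since $10\locality\,(25\locality)^{-k}\le(10\locality/(25\locality))^{k}=(2/5)^{k}\le(1-\gamma)^{\locality k}$ already holds for $k\ge1$, your refinement is valid but unnecessary---so the claim that the parameters leave ``essentially no slack'' is a slight overstatement.
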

\begin{proof}
    Because $\wh{S} = S \setminus \supp(H_{a^*})$ and the manner in which $a^*$ is chosen, $\abs{\wh{S}} < \abs{S}$.

    To prove the inductive invariant, it suffices to consider the term $(\wh{c}_{\wh{\ell}}, \wh{X}_{\wh{\ell}})$, which we denote $(\wh{c}, \wh{X})$ in the algorithm, since this is the only term that \cref{algo:single-step-2} changes: the invariant will continue to hold provided that the term does not change.
    We first note that, in \cref{line:sample-start}, \cref{algo:sample-term} is always run for $\beta \gets \beta/2$ and $\mathcal{Q} \gets \mathcal{E}_{(a^*)} \cap \mathcal{E}^{(S)}$ (see \cref{def:restricted-ham}).
    By definition of the dual interaction graph, $\abs{\mathcal{Q}} \leq \degree + 1$.
    So, by \cref{claim:sample-term-unbiased},
    \begin{align*}
        &b_1 = 0 \text{ and } t_1 = 0 \text{ or } \abs{b_1} \leq (\beta \cdot 2(\degree + 1))^{t_1}, \text{ and} \\
        &b_2 = 0 \text{ and } t_2 = 0 \text{ or } \abs{b_2} \leq (\beta \cdot 2(\degree + 1))^{t_2}
    \end{align*}
    We now show that $\wh{c}$ is small for every choice of $\xi$.
    For this proof, we denote $c = c_{\wh{\ell}}$ and $X = X_{\wh{\ell}}$ to be the term which is to be changed.
    Further, we will abuse notation and, for a Hermitian monomial $X$ with associated terms $\braces{a_1, \dots, a_t}$, write $\supp(X) = \supp(H_{a_1}) \cup \dots \cup \supp(H_{a_t})$.
    This set contains the ``true'' support of $X$, but can be larger in general.
    So, we need to show that, assuming the invariant \eqref{eq:potential} which states that $\abs{c} \leq (1-\gamma)^{\abs{S \cap \supp(X)}}(\frac{\beta}{\beta_c})^{t}$, then $\abs{\wh{c}} \leq (1-\gamma)^{\abs{\wh{S} \cap \supp(\wh{X})}}(\frac{\beta}{\beta_c})^{\wh{t}}$.
    Note that, by our choice of $\beta_c$, $10 \locality \beta_c \cdot 2(\degree + 1) \leq \frac25 = 1 - \locality\gamma$.
    \paragraph{Case $\xi = 0$:}
    Here, $\wh{X} = X$, the associated terms are identical ($\wh{a}_s = a_s$ for all $s \in [t]$) and $\wh{c} = c / (1 - \gamma)$.
    Then either $c = 0$, and the bound holds trivially, or $a^*$ was chosen to neighbor some $a_s$ associated to $X$.
    In this case, the scalar increases; however, this is accounted for in the potential, because $\supp(H_{a^*})$ is removed from $S$ to get $\wh{S}$:
    \begin{align*}
        \abs{\wh{c}} = \abs{c} / (1 - \gamma)
        \leq (1 - \gamma)^{\abs{S \cap \supp(X)} - 1}\parens[\Big]{\frac{\beta}{\beta_c}}^t
        \leq (1 - \gamma)^{\abs{\wh{S} \cap \supp(X)}}\parens[\Big]{\frac{\beta}{\beta_c}}^t \,.
    \end{align*}
    The first inequality uses the induction hypothesis; the second uses that $\wh{S} \subseteq S$.
    \paragraph{Case $\xi = 1$, $\xi = 2$, $\xi = 5$:}
    We first work out the $\xi = 1$ case.
    Here, $\wh{X} = (E_1^\dagger + E_1) / 2$, and so the monomial has degree $\wh{t} = t_1$.
    If $t_1 = 0$, then $b_1 = 0$ and we are done.
    Otherwise,
    \begin{align*}
        \abs{\wh{c}} &= 6\abs{b_1}/\gamma
        \leq 10 \locality (\beta \cdot 2(\degree + 1))^{t_1}
        \leq (10 \locality \beta_c \cdot 2(\degree + 1))^{t_1} \parens[\Big]{\frac{\beta}{\beta_c}}^{t_1} \\
        &\leq (1 - \locality \gamma)^{t_1} \parens[\Big]{\frac{\beta}{\beta_c}}^{t_1}
        \leq (1 - \gamma)^{\locality t_1} \parens[\Big]{\frac{\beta}{\beta_c}}^{t_1}
        \leq (1 - \gamma)^{\abs{\wh{S} \cap \supp(\wh{X})}} \parens[\Big]{\frac{\beta}{\beta_c}}^{\wh{t}}\,.
    \end{align*}
    The $\xi = 2$ case is analogous, switching out $E_1$ and $t_1$ for $E_2$ and $t_2$.
    The $\xi = 5$ case is also analogous, with $E_1^\dagger E_2$ and $\wh{t} = t_1 + t_2$.
    \paragraph{Case $\xi = 3$, $\xi = 4$, $\xi = 6$:}
    We first work out the $\xi = 3$ case.
    Here, $\wh{X} = (E_1^\dagger X + X E_1) / 2$, so the monomial has degree $t + t_1$, where $t$ is the degree of the original monomial $X$, and the associated terms are the terms from $X$, along with $t_1$ additional terms from $E_1$.
    If $t_1 = 0$ then $\wh{c} = 6cb_1 / \gamma = 0$ so we are done.
    Otherwise,
    \begin{align*}
        \abs{\wh{c}} &= 6\abs{c}\abs{b_1}/\gamma \\
        &\leq 10\locality (1-\gamma)^{\abs{S \cap \supp(X)}}\parens[\Big]{\frac{\beta}{\beta_c}}^{t} (\beta \cdot 2(\degree + 1))^{t_1} \\
        &\leq (1-\gamma)^{\abs{S \cap \supp(X)} + \locality t_1} \parens[\Big]{\frac{\beta}{\beta_c}}^{t + t_1} \\
        &\leq (1-\gamma)^{\abs{\wh{S} \cap \supp(\wh{X})}} \parens[\Big]{\frac{\beta}{\beta_c}}^{\wh{t}}\,. 
    \end{align*}
    The $\xi = 4$ and $\xi = 6$ cases are analogous, in the same manner as the $\xi = 2$ and $\xi = 5$ cases.
\end{proof}

\begin{theorem} \label{thm:sep-algorithm}
    Let $H = \sum_a H_a$ be a Hamiltonian with locality $\locality$ and dual interaction graph $\graph$ with degree $\degree$, and let the temperature parameter $\beta$ satisfy $\beta \leq \beta_c = 1/(100 \locality \degree)$.
    Then \cref{algo:separability} outputs a configuration $\mathcal{X}$ such that $\E[\sigma(\mathcal{X})] = e^{-\beta H}$ and, for all $(c, X) \in \mathcal{X}$, $\abs{c} \leq (\beta / \beta_c)^t$, where $t$ is the degree of the Hermitian monomial $X$.
\end{theorem}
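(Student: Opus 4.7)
The plan is to show by induction on iterations of the while loop in \cref{algo:separability} that three properties hold simultaneously after each call to \cref{algo:single-step-2}: (i) $(S, \mathcal{X})$ is a valid configuration satisfying \eqref{eq:config-invariant}; (ii) the ``sandwiched expectation'' invariant
\[
    \E\bracks[\big]{e^{-\frac{\beta}{2} H^{(S)}}\,\sigma(\mathcal{X})\,e^{-\frac{\beta}{2} H^{(S)}}} = e^{-\beta H}
\]
holds; and (iii) the potential bound \eqref{eq:potential} from \cref{lem:coeff-potential} holds for every pair in $\mathcal{X}$. These three statements correspond exactly to the three lemmas already proved (\cref{lem:well-defined,lem:sep-expectation,lem:coeff-potential}), so the bulk of the work is just plumbing them together through the outer loop.

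First I would verify the base case: $S = [\qubits]$ and $\mathcal{X} = \varnothing$ trivially satisfies \eqref{eq:config-invariant} and \eqref{eq:potential} vacuously, and since $H^{([\qubits])} = H$ and $\sigma(\varnothing) = \id$, invariant (ii) reads $e^{-\frac{\beta}{2}H}\id\,e^{-\frac{\beta}{2}H} = e^{-\beta H}$, which is true. For the inductive step, I apply \cref{lem:well-defined} to carry (i) forward, then use the tower property of expectation together with the $\eps = 0$ statement of \cref{lem:sep-expectation} to carry (ii) forward (conditioning on the history of the outer loop and taking the expectation over the randomness of the current \cref{algo:single-step-2} invocation), and finally apply \cref{lem:coeff-potential} to carry (iii) forward. \cref{lem:coeff-potential} applies because our assumed $\beta \leq 1/(100\degree\locality) \leq 1/(50(\degree+1)\locality)$ meets its hypothesis whenever $\degree \geq 1$.

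Termination is immediate from \cref{lem:coeff-potential}, which guarantees $|\wh{S}| < |S|$ on every call, so the loop ends in at most $\qubits$ iterations with some final state $(S_*, \mathcal{X}_*)$ for which $\mathcal{E}^{(S_*)} = \varnothing$; equivalently $H^{(S_*)} = 0$. Invariant (ii) then collapses to $\E[\sigma(\mathcal{X}_*)] = e^{-\beta H}$, giving the first conclusion of the theorem. For the coefficient bound, invariant (iii) gives, for every $(c, X) \in \mathcal{X}_*$ of degree $t$,
\[
    |c| \leq (1-\gamma)^{|S_* \cap \supp(X)|}\parens[\Big]{\frac{\beta}{1/(50(\degree+1)\locality)}}^{\!t} \leq \parens[\Big]{\frac{\beta}{\beta_c}}^{\!t},
\]
where the second inequality drops the $(1-\gamma)^{\bullet}$ factor and uses $50(\degree+1)\locality \leq 100\degree\locality = 1/\beta_c$ for $\degree \geq 1$. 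This yields the claimed $|c| \leq (\beta/\beta_c)^t$.

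The main conceptual obstacle is already absorbed by \cref{lem:coeff-potential}: showing that the scalar coefficient cannot blow up through many rounds of ``multiply and symmetrize'' despite growing monomials. The plan above does not re-prove that; it only has to handle the bookkeeping needed to stitch the per-iteration guarantees into a global statement about the final random configuration, which is routine given the three lemmas.
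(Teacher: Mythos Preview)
Your proposal is correct and follows essentially the same approach as the paper: induct on the while loop, maintaining the sandwiched-expectation invariant via \cref{lem:sep-expectation} and the potential bound via \cref{lem:coeff-potential}, with termination from $|\wh S| < |S|$. You are in fact a bit more careful than the paper in two places: you correctly note the loop ends when $\mathcal{E}^{(S_*)} = \varnothing$ (hence $H^{(S_*)} = 0$) rather than asserting $S_* = \varnothing$, and you explicitly reconcile the two different $\beta_c$'s appearing in \cref{lem:coeff-potential} versus the theorem statement.
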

\begin{proof}
We show that through the while loop, \cref{algo:separability} maintains the two invariants that
\begin{align*}
    \E[e^{-\frac{\beta}{2}H^{(S)}}\sigma(\mathcal{X}) e^{-\frac{\beta}{2}H^{(S)}}] = e^{-\beta H}
\end{align*}
and, from \cref{eq:potential}, for all $(c, X) \in \mathcal{X}$ with degree $t$,
\begin{align*}
    \abs{c} \leq \parens{1-\gamma}^{\abs{S \cap \supp(X)}} \parens[\Big]{\frac{\beta}{\beta_c}}^t.
\end{align*}
(We use the same notation as in \cref{lem:coeff-potential}, where $\supp(X)$ denotes the support of the terms associated to $X$.)
Since the initialization is that $S = [\qubits]$ and $\mathcal{X} = \varnothing$, so that $\sigma(\mathcal{X}) = \id$, both invariants hold before the while loop begins.
\cref{lem:sep-expectation} implies that the first invariant is maintained, since the output of \cref{algo:single-step-2} satisfies
\begin{align*}
    e^{-\frac{\beta}{2}H^{(\wh{S})}}\E_{\text{iteration}}[\sigma(\wh{\mathcal{X}})]e^{-\frac{\beta}{2}H^{(\wh{S})}} = e^{-\frac{\beta}{2} H^{(S)}} \sigma(\mathcal{X}) e^{-\frac{\beta}{2} H^{(S)}}\,,
\end{align*}
where the randomness here is taken only over the call to the algorithm in \cref{line:while-loop}.
\cref{lem:coeff-potential} implies that the second invariant is maintained.
Finally, by \cref{lem:coeff-potential}, the size of $S$ decreases every iteration, so the algorithm terminates after at most $\qubits$ iterations.
Upon termination, $S = \varnothing$, so the invariants imply that, for the output configuration $\mathcal{X}$, $\E[\sigma(\mathcal{X})] = e^{-\beta H}$ and, for every $(c, X) \in \mathcal{X}$ with degree $t$, $\abs{c} \leq (\frac{\beta}{\beta_c})^t$.
\end{proof}

Because Hermitian monomials $X$ satisfy $\norm{X}_{\op} \leq 1$ and the scalars in the output configuration $\mathcal{X}$ always satisfy $\abs{c} \leq 1$, $\id + c X$, and thus $\sigma(\mathcal{X})$ is always positive semi-definite.
Further, for reasonable classes of terms, $\sigma(\mathcal{X})$ can be shown to be separable, which means that $e^{-\beta H}$ and therefore the Gibbs state $e^{-\beta H} / \tr e^{-\beta H}$ is separable.

\begin{lemma} \label{lem:pauli-monomial}
    Let $H = \sum_a H_a \in \C^{2^\qubits \times 2^\qubits}$ be a Hamiltonian over $\qubits$ qubits with Pauli terms, meaning that $H_a = \lambda_a E_a$ for $E_a \in \locals$ and $-1 \leq \lambda_a \leq 1$.
    Let $X$ be a Hermitian monomial of degree $t$ with respect to the terms of $H$.
    Then $\id + cX$ is separable when $-1 \leq c \leq 1$.
    Further, there is an algorithm which outputs a stabilizer product state $\ket{\psi}$ such that $\E[\ketbra{\psi}{\psi}] = (\id + cX) / \tr(\id + cX)$ in $\bigOt{\qubits + t \locality}$ time.
\end{lemma}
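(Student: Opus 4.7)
The plan is to establish a rigid structural description of Hermitian monomials in the Pauli setting: every Hermitian monomial $X$ equals $c_0 P$ for a single real scalar $c_0 \in [-1,1]$ and a single Pauli tensor $P \in \locals$. Once this is in hand, the separability claim is immediate. Writing $\tilde c = c c_0 \in [-1,1]$, we have
\[
    \id + cX = \id + \tilde c P = (1+\tilde c)\,\Pi_P^+ \;+\; (1-\tilde c)\,\Pi_P^-,
\]
where $\Pi_P^\pm = (\id \pm P)/2$ are the $\pm 1$ eigenprojectors of $P$. Each $\Pi_P^\pm$ is a uniform mixture of stabilizer product states: on every site $i \in \supp(P)$ choose a $\pm 1$ eigenvector of the single-qubit Pauli $P_i$ with the signs multiplying to the appropriate target, and on every site $i \notin \supp(P)$ take any single-qubit stabilizer state uniformly (so the marginal is $\id/2$). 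This exhibits $\id + cX$ as a nonnegative combination of stabilizer product states.

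To prove the structural claim I would induct on the number of multiply-and-symmetrize operations used to construct $X$. The base case $X = \id = 1\cdot \id$ is trivial. For the inductive step, suppose $X = c_0 P$, and consider the update $X' = \tfrac{1}{2}(M X M' + M'^\dagger X M^\dagger)$ with $M = H_{a_1}\cdots H_{a_s}$ and $M' = H_{b_t}\cdots H_{b_1}$. Since each $H_a = \lambda_a E_a$ with $\lambda_a \in [-1,1]$ and $E_a \in \locals$, and since (\cref{def:paulis}) a product of elements of $\locals$ again lies in $\locals$ up to a phase in $\{1,\ii,-1,-\ii\}$, we may write $M = \alpha \lambda Q$ and $M' = \alpha' \lambda' Q'$ with $Q, Q' \in \locals$, $\lambda, \lambda' \in \R$, and $\alpha, \alpha'$ fourth roots of unity. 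Then $QPQ' = \beta R$ for some $R \in \locals$ and phase $\beta$, giving $MXM' = \gamma R$ with $\gamma = \alpha \alpha' \lambda \lambda' c_0 \beta$ and $|\gamma| = |\lambda \lambda' c_0| \leq |c_0| \leq 1$. Because $M'^\dagger X M^\dagger = (MXM')^\dagger$ and $R$ is Hermitian, the adjoint equals $\bar\gamma R$, and averaging gives $X' = \mathrm{Re}(\gamma)\, R$, again a real scalar multiple of a single Pauli, with scalar in $[-1,1]$.

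For the algorithm, I would scan once through the recorded description of $X$, maintaining the running pair $(c_0, P)$ as above. Since the description contains at most $t$ total $H_a$-factors, each touching $\leq \locality$ qubits (storing $P$ as a sparse map from sites to single-qubit Paulis with a global phase), this preprocessing takes $\bigOt{t\locality}$ time. To sample $\ket\psi$, flip an unfair coin to pick a target sign $s \in \{+1,-1\}$ with probabilities $\{(1+\tilde c)/2,\,(1-\tilde c)/2\}$; then at each site $i \in \supp(P)$ pick an independent uniform $\pm 1$ eigenvector of $P_i$, reserving one designated site's sign to enforce that the product of signs equals $s$; and at each site $i \notin \supp(P)$ pick an independent uniform single-qubit stabilizer state (so its marginal is $\id/2$). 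The sampling step costs $\bigOt{\qubits + t\locality}$, yielding the claimed running time.

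The main obstacle is the structural lemma: showing that symmetrization collapses $MXM'$ and its adjoint onto a \emph{single} Pauli tensor $R$ with a manifestly real coefficient, rather than onto a more general Hermitian sum. The crux is that because $R$ is Hermitian, $MXM'$ and $M'^\dagger X M^\dagger$ are already proportional to the same $R$ with complex conjugate scalar coefficients, so the average projects onto the real part. This is precisely where the Pauli structure of $H$'s terms is indispensable; for general local terms one would instead obtain a linear combination of many Paulis and a much weaker separability statement.
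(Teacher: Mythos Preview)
Your proposal is correct and follows essentially the same approach as the paper: both argue by induction on the construction of the Hermitian monomial that $X$ is a real scalar multiple of a single Pauli tensor, then decompose $\id + cX$ as a mixture of stabilizer product states via the parity trick on the eigenvectors of the single-qubit Paulis. Your exposition is in fact slightly more careful than the paper's (you track the scalar $c_0\in[-1,1]$ explicitly rather than saying ``up to a sign,'' and your decomposition $(1+\tilde c)\Pi_P^+ + (1-\tilde c)\Pi_P^-$ avoids the extra step of writing $(1-|c|)\id + |c|(\id \pm P)$), but the idea and the algorithm are the same.
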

\begin{proof}
Since $X$ is a Hermitian monomial and every $H_a$ is a tensor product of Paulis, $X$ is either zero or a tensor product of Paulis, up to a sign.
This can be proved by induction on the definition of a Hermitian monomial (\cref{def:herm-monomial}).
With this, we can conclude that $\id + c X$ is separable: for $X = X_1 \otimes \dots \otimes X_\qubits$, we can write it as a convex combination of tensor products of PSD matrices by writing $\id + c X = (1 - \abs{c})\id + \abs{c}(\id + \frac{c}{\abs{c}}X)$, and then for the latter part, using
\begin{align*}
    \id + X = \frac{1}{2^\qubits}\bigotimes_{i=1}^\qubits \parens[\Big]{(\id + X_i) + (\id - X_i)} + \frac{1}{2^\qubits}\bigotimes_{i=1}^\qubits \parens[\Big]{(\id + X_i) - (\id - X_i)}.
\end{align*}
By expanding this, we conclude that, by drawing a uniformly random $s_1,\dots,s_\qubits \in \{\pm 1\}$ with even parity, i.e.\  $s_1s_2\dots s_\qubits = 1$, and preparing the state $\bigotimes_{i=1}^\qubits (\id + s_i X_i)$, this state is $\id + X$ in expectation.

To turn this into an algorithm, we note that given $X$ represented as a list of terms and operations, we can compute $X_1, \dots, X_\qubits$ efficiently.
For the qubits $i$ for which $X_i = \id$, we can take $\ket{\psi}_i$ to be $\ket{0}$ or $\ket{1}$ with half probability, thereby making the corresponding density matrix $\id / 2$.
On the support of $X$, we can draw a uniformly random string with even parity and prepare the corresponding state $\id + s_i X_i$; since $X_i$ is a non-identity Pauli, this is a pure eigenstate of a Pauli.
This gives an algorithm for preparing $(\id + cX) / \tr(\id + cX)$ by either preparing $\id + \frac{c}{\abs{c}} X$ or $\id$ with the appropriate probabilities.
\end{proof}

With this, we can prove separability of high-temperature Gibbs states for qubit systems, as stated in the introduction.
\begin{proof}[Proof of \cref{thm:separable}]
\cref{thm:sep-algorithm} implies that, for $\beta \leq 1/(100 \degree \locality)$, $e^{-\beta H}$ is a convex combination of matrices of the form $\sigma(\mathcal{X})$, where for every $(c, X) \in \mathcal{X}$, $\abs{c} \leq 1$ and $X$ is a Hermitian monomial (\cref{def:herm-monomial}).
By \cref{lem:pauli-monomial}, $(\id + cX)_{\supp(X)}$ is separable (and expressible as a positive linear combination of stabilizer product states), and since $\sigma(\mathcal{X})$ is the tensor product of such matrices, it is also separable (and expressible as a positive linear combination of stabilizer product states).
Since separability is closed under convex combinations and scaling by a constant factor, $e^{-\beta H} / \tr e^{-\beta H}$ is separable as desired.
Further, this argument shows it is expressible as a convex combination of stabilizer product states.
\end{proof}

More generally, this argument can prove separability for any set of terms $\braces{H_a}$ such that the corresponding class of Hermitian monomials $X$ satisfies that $\id + cX$ is separable.
There is some ``wiggle room'' in the argument, which is unused for the case of Pauli terms: \cref{thm:sep-algorithm} gives that $\abs{c} \leq (\beta / \beta_c)^{t}$ where $t$ is the degree of $X$, so with smaller choices of $\beta$, monomials with larger support are weighted smaller, which is useful for demonstrating separability.
For example, we can show that local Hamiltonians over qudits also have separable Gibbs states at high temperature.

\begin{theorem}[High-temperature Gibbs states are separable]
    \label{thm:gibbs-separable}
    Let $H = \sum_a H_a \in \C^{d^\qubits \times d^\qubits}$ be a $(\locality, \degree)$-low-intersection Hamiltonian over $\qubits$ qudits (\cref{def:ham-qudit}).
    Further suppose that every $H_a$ satisfies $\norm{H_a}_{\op} \leq 1$ and is a product operator, meaning that we can write $H_a = H_a^{(1)} \otimes \dots \otimes H_a^{(\qubits)}$.
    Then the Gibbs state $e^{-\beta H} / \tr e^{-\beta H}$ is separable for $\beta \leq 1/(100 \locality \cdot 4^\locality \cdot \degree)$.
\end{theorem}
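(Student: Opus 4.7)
}
The plan is to reuse \cref{algo:separability} and \cref{thm:sep-algorithm} essentially verbatim, since nothing in their proofs uses the qubit structure: the only place qubit-specific reasoning enters is the \emph{final} conversion from ``$\sigma(\mathcal{X})$ has small coefficients'' to ``$\sigma(\mathcal{X})$ is separable,'' which in the qubit case is handled by \cref{lem:pauli-monomial}. So the heart of this proof will be a qudit analogue of that lemma.

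First, I would invoke \cref{thm:sep-algorithm} (whose statement and proof never assume $d=2$, only that $H=\sum_a H_a$ is a $(\degree,\locality)$-low-intersection Hamiltonian) at the threshold $\beta_c = 1/(100\locality\degree)$, concluding that $e^{-\beta H} = \E[\sigma(\mathcal{X})]$ with each $(c,X)\in\mathcal{X}$ of degree $t$ satisfying $\abs{c}\le(\beta/\beta_c)^t$. By choosing $\beta \le 1/(100\locality\cdot 4^{\locality}\cdot\degree)$, this improves to $\abs{c}\le 4^{-\locality t}$. Since every Hermitian monomial $X$ of degree $t$ has support contained in the union of at most $t$ sets of size $\le\locality$, we also have $\abs{\supp(X)}\le t\locality$, and $\norm{X}_{\op}\le 1$ by the product-operator hypothesis on the $H_a$'s (products and symmetrized averages of operator-norm-one matrices have norm $\le 1$).

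The main step is then to prove the qudit analogue of \cref{lem:pauli-monomial}: \emph{if $Y$ is a Hermitian operator on $\qubits$ qudits with $\abs{\supp(Y)}\le k$ and $\norm{Y}_{\op}\le 4^{-k}$, then $\id+Y$ is separable.} Following the sketch in \cref{subsec:tech-unentangled}, I would fix an orthogonal Hermitian basis $\{B_1,\dots,B_{d^2}\}$ of $d\times d$ matrices with $B_1=\id$, $\norm{B_\alpha}_{\op}\le 1$, and $\tr(B_\alpha B_\gamma)=d\,\delta_{\alpha\gamma}$ (e.g.\ the identity together with normalized Gell-Mann matrices); for $d=2$ this recovers the Pauli basis, giving exactly $4^k$ tensor-product elements on any $k$-site support. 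Expand $Y=\sum_{\vec\alpha} y_{\vec\alpha}\,B_{\vec\alpha}$ on $\supp(Y)$, noting $\abs{y_{\vec\alpha}}\le\norm{Y}_{\op}$ by orthogonality. Writing
\[
    \id + Y \;=\; \sum_{\vec\alpha} \tfrac{1}{4^{k}}\bigl(\id + 4^{k}\,y_{\vec\alpha}\,B_{\vec\alpha}\bigr),
\]
each summand has the form $\id + c'\,B_{\vec\alpha}$ with $\abs{c'}\le 1$ and $B_{\vec\alpha}$ a Hermitian product operator with operator norm $\le 1$; since the eigenvectors of such a tensor product are themselves product states, $\id + c' B_{\vec\alpha}$ is a nonnegative combination of product rank-one projectors, hence separable. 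Because separability is closed under convex combinations, $\id+Y$ is separable.

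Applying this to $Y=cX$ for each $(c,X)\in\mathcal{X}$ with $\abs{c}\le 4^{-\locality t}$ and $\abs{\supp(X)}\le t\locality$ satisfies the hypothesis with $k=t\locality$, so each $\id+cX$ is separable, and therefore the product $\sigma(\mathcal{X}) = \bigotimes_{i}(\id+c_i X_i)_{\supp(X_i)}$ is separable. Averaging over the randomness of \cref{algo:separability} and normalizing by $\tr(e^{-\beta H})$ then yields the separability of the Gibbs state.

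\emph{Main obstacle.} The sole technical difficulty is producing the right Hermitian product basis for qudits with the \emph{tightest possible} size $4^{\locality}$ per $\locality$-site block, matching the $d=2$ calculation. For $d=2$ this is immediate from Paulis; for general $d$, the natural Gell-Mann basis gives $d^{2\locality}$ elements, so the stated threshold $1/(100\locality\cdot 4^{\locality}\cdot\degree)$ only directly matches $d=2$, and for larger $d$ the argument above in fact gives the weaker threshold $1/(100\locality\cdot d^{2\locality}\cdot\degree)$. Apart from this constant, every other step is a routine adaptation of the qubit proof.
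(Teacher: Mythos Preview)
Your overall scaffolding is correct and matches the paper: invoke \cref{thm:sep-algorithm} at $\beta_c=1/(100\locality\degree)$, use the extra factor $(\beta/\beta_c)^t\le 4^{-\locality t}$, and then argue that each $\id+cX$ is separable. The gap is in the last step, and it is precisely the one you flag as the ``main obstacle'' --- but it is not merely a constant. Your basis-expansion argument treats $cX$ as a \emph{generic} Hermitian perturbation of small operator norm. That route inevitably produces a $d$-dependent threshold (as you note), and it also has a technical slip: for $d>2$ there is no Hermitian orthogonal basis with both $\tr(B_\alpha B_\gamma)=d\,\delta_{\alpha\gamma}$ and $\norm{B_\alpha}_{\op}\le 1$, since a traceless $d\times d$ Hermitian matrix with all eigenvalues in $\{\pm 1\}$ exists only when $d$ is even. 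So the statement as written --- with the $d$-independent factor $4^{\locality}$ --- is not reachable by your argument.

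The idea you are missing is that the product-operator hypothesis on the $H_a$'s does more than bound $\norm{X}_{\op}$: it forces every Hermitian monomial to be (a short sum of operators of the form) $\tfrac12(P+P^\dagger)$ with $P$ itself a \emph{product operator}. The paper then uses the elementary identity
\[
A\otimes B + A^\dagger\otimes B^\dagger
\;=\;\tfrac12\bigl[(A+A^\dagger)\otimes(B+B^\dagger)\;+\;(\ii A+(\ii A)^\dagger)\otimes(-\ii B+(-\ii B)^\dagger)\bigr]
\]
to peel off one tensor factor at a time, turning a single $P+P^\dagger$ on $k$ sites into a sum of $2^k$ \emph{Hermitian} product operators, each of operator norm at most $2^{k-1}$ --- with no $d$ anywhere. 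Writing $\id+cX=\tfrac{1}{2^k}\sum_j(\id+2^k c\,Z_j)$ then needs only $\abs{c}\le 2^{1-2k}$, which is exactly what $\abs{c}\le 4^{-\locality t}$ and $k\le \locality t$ provide. In short: do not expand $X$ in a fixed site-local basis; instead, exploit that $X$ is already (a sum of Hermitianized) product operators and split it across tensor factors using the identity above.
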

\begin{proof}
\cref{thm:sep-algorithm} implies that $e^{-\beta H}$ is a convex combination of matrices of the form $\sigma(\mathcal{X})$, where for every $(c, X) \in \mathcal{X}$, $X$ is a Hermitian monomial and $\abs{c} \leq 4^{-\locality t}$, where $t$ is the degree of $X$.
It suffices to show that such $\id + cX$ are separable.
This holds because of the following identity relating the sum of a product operator and its conjugate transpose with the sum of two Hermitian product operators.
\begin{align*}
    A \otimes B + A^\dagger \otimes B^\dagger
    = (A + A^\dagger) \otimes (B + B^\dagger) + (\ii A + (\ii A)^\dagger) \otimes (-\ii B + (-\ii B)^\dagger).
\end{align*}
By iterating this, we can conclude that a Hermitian monomial with support size $k$ can be written as a sum of $2^{k}$ Hermitian product operators, each of which has operator norm bounded by $2^{k-1}$.
A Hermitian monomial $X$ with degree $t$ has support size at most $t \locality$, so when $\abs{c} \leq 2^{-(2t\locality-1)}$, $\id + cX$ is separable, as desired.
\end{proof}

\section{Fast state preparation} \label{sec:preparable}

In this section, we prove \cref{thm:main-gibbs}.
Here, we take \cref{algo:separability} and \cref{algo:single-step-2} from the separability proof, and then argue that we can run a Markov chain on the ``state tree'' of these algorithms to get a distribution over output configuration which satisfies $\E[\sigma(\mathcal{X}) / \tr(\sigma(\mathcal{X}))] \approx \rho$, instead of the original guarantee that $\E[\sigma(\mathcal{X})] = e^{-\beta H}$.

\subsection{Random walks on trees}
\label{sec:random-walks-on-trees}

As a subroutine of our main sampling algorithm, we will design a (classical) random walk on a tree.
In this section, we present some general machinery for analyzing the mixing times of random walks on trees.
We begin with the definition of a \textit{weighted tree}.

\begin{definition}[Weighted tree]
    Let $\tree$ be a tree with a unique root such that all root-to-leaf paths have length at most $n$.
    A weighted tree $(\tree,w)$ of depth $n$ is obtained by assigning some non-negative weight $w_v$ to each leaf $v$ of the tree.
    For an interior node $v$, we let the weight assigned to $v$ be the sum of the weights of the leaves in the sub-tree rooted at $v$.
\end{definition}

Next, we assume access to the following sampling sub-routine: 

\begin{definition}[Sample query]
\label{def:sampling-sub-tree}
For a weighted tree $(\tree, w)$, we define a sample query as querying a node $u$ and outputting a random child of $u$, where the probability of outputting a child $v$ is $w_v / w_u$.
\end{definition}

Since the weights on each leaf are non-negative, they induce a probability distribution over the leaves, corresponding to the distribution formed by repeatedly performing sample queries starting from the root node.

\begin{definition}[Leaf distribution]
\label{def:leaf-distribution}
For a weighted tree $(\tree, w)$, its leaf distribution is the distribution over leaves where each leaf $v$ is sampled proportional to its weight $w_v$.
\end{definition}

We then recall the definition of a Markov chain and the corresponding stationary distribution.

\begin{definition}[Transition matrix and stationary distribution]
\label{def:transition-matrix-stat-distribution}
A Markov chain on $N$ states is given by a transition matrix $P$ with entries $P_{ij}$ for $i,j \in [N]$ given by the probability of moving from state $i$ to state $j$.  We define the stationary distribution, denoted by $\pi = (\pi_1, \dots , \pi_N)$, such that $P \pi = \pi$.  
\end{definition}

\begin{definition}[Ergodic and time-reversible Markov chain]
\label{def:ergodic-markov-chain}
A Markov chain $P$ is \textit{ergodic} if there exists a positive integer $z$ such that $P^z$ is entry-wise positive.  It is \textit{time-reversible} if its stationary distribution $\pi$ satisfies
\[
P_{ij} \pi_i = P_{ji} \pi_j
\]
for all $i,j \in [N]$.
\end{definition}

Next, we define the notion of conductance of a Markov chain.

\begin{definition}[Conductance]
\label{def:conductance}
Given a Markov chain on $N$ states with transition matrix $P$ and stationary distribution $\pi$, for any subset $S \subseteq [N]$, the conductance $\Phi_S$ is defined by 
\[
\Phi_S = \frac{\sum_{i \in S, j \notin S} P_{ij} \pi_i}{\sum_{i \in S} \pi_i} \,.
\]
The global conductance of the chain is defined by $\Phi = \min_{S : C_S \leq 1/2} \Phi_S$, where $C_S = \sum_{ i \in S} \pi_i $.  
\end{definition}

A classical result of Jerrum and Sinclair~\cite{sj89} bounds the spectral gap of an ergodic, time-reversible Markov chain as a function of the conductance.

\begin{lemma}[Spectral gap of a Markov chain~\cite{sj89}]\label{lem:weighted-cheeger}
For an ergodic time-reversible Markov chain $P$, if we order the eigenvalues of $P$ as $\lambda_1 \geq \lambda_2 \geq \dots \geq \lambda_N$ where $\lambda_1 = 1$, then
\[
\lambda_1 \leq 1 - \frac{\Phi^2}{2} \,.
\]
\end{lemma}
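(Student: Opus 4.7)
The stated bound is the classical Cheeger inequality for reversible Markov chains, modulo what appears to be a typo in the index: since $\lambda_1 = 1$ is the Perron eigenvalue, the intended conclusion must be $\lambda_2 \leq 1 - \Phi^2/2$. My plan is to reproduce the standard Jerrum--Sinclair / Lawler--Sokal proof via the Dirichlet form together with a co-area decomposition on level sets.

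First, I would exploit time-reversibility to set up self-adjointness. Under the weighted inner product $\langle f, g \rangle_\pi = \sum_i f_i g_i \pi_i$, detailed balance $P_{ij}\pi_i = P_{ji}\pi_j$ makes $P$ self-adjoint, so all eigenvalues are real and the variational principle gives
\[
    1 - \lambda_2 \;=\; \min_{f \perp_\pi \mathbf{1},\, f \neq 0} \frac{\mathcal{E}(f,f)}{\|f\|_\pi^2}, \qquad \mathcal{E}(f,f) \;=\; \tfrac{1}{2}\sum_{i,j}(f_i - f_j)^2 P_{ij}\pi_i.
\]
Thus it suffices to show $\mathcal{E}(f,f) \geq (\Phi^2/2)\|f\|_\pi^2$ for every $f$ with $\langle f, \mathbf{1}\rangle_\pi = 0$. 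A standard median trick, which replaces $f$ by $f - c$ for a $\pi$-median $c$ and splits into positive and negative parts $g_\pm$, lets me further reduce to the case where $f$ is nonnegative and supported on a set of $\pi$-mass at most $1/2$; here I use $\|f - c\|_\pi^2 \geq \|f\|_\pi^2$ (since $\langle f, \mathbf{1}\rangle_\pi = 0$) to bound the Rayleigh quotient of at least one of $g_+, g_-$ by that of $f$.

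Second, I would apply the co-area inequality at the level sets $S_t = \{i : f_i^2 > t\}$. Each such $S_t$ has $\pi(S_t) \leq 1/2$, so the definition of $\Phi$ yields the edge-boundary estimate $\sum_{i \in S_t, j \notin S_t} P_{ij}\pi_i \geq \Phi \cdot \pi(S_t)$. Integrating over $t$ and using $\int_0^\infty \pi(S_t)\,dt = \|f\|_\pi^2$ gives
\[
    \tfrac{1}{2}\sum_{i,j}\bigl|f_i^2 - f_j^2\bigr|\,P_{ij}\pi_i \;\geq\; \Phi\,\|f\|_\pi^2.
\]

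The main obstacle, and the only nontrivial step, is bridging this bound, which controls $|f_i^2 - f_j^2|$, to the Dirichlet form $\mathcal{E}(f,f)$, which involves $(f_i - f_j)^2$. The standard resolution is to factor $|f_i^2 - f_j^2| = |f_i - f_j|(f_i + f_j)$ and apply Cauchy--Schwarz:
\[
    \Bigl(\sum_{i,j}|f_i^2 - f_j^2|P_{ij}\pi_i\Bigr)^{\!2} \;\leq\; \Bigl(\sum_{i,j}(f_i-f_j)^2 P_{ij}\pi_i\Bigr)\Bigl(\sum_{i,j}(f_i+f_j)^2 P_{ij}\pi_i\Bigr).
\]
The first factor is $2\mathcal{E}(f,f)$; the second is at most $4\|f\|_\pi^2$, using $(a+b)^2 \leq 2(a^2+b^2)$, row-stochasticity $\sum_j P_{ij} = 1$, and stationarity $\sum_i P_{ij}\pi_i = \pi_j$. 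Combining with the co-area bound yields $(2\Phi\|f\|_\pi^2)^2 \leq 8\,\mathcal{E}(f,f)\|f\|_\pi^2$, i.e.\ $\mathcal{E}(f,f) \geq (\Phi^2/2)\|f\|_\pi^2$. Plugging into the variational formula gives $\lambda_2 \leq 1 - \Phi^2/2$, as claimed.
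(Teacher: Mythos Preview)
The paper does not prove this lemma; it is quoted as a classical result of Sinclair and Jerrum and used as a black box in the proof of \cref{lem:sample-from-tree}. So there is no ``paper's own proof'' to compare against here.

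Your proposal is the standard Jerrum--Sinclair / Lawler--Sokal argument, and it is correct. You correctly spot that the displayed inequality must refer to $\lambda_2$ rather than $\lambda_1$ (and indeed the paper uses it that way, invoking it to bound the spectral gap $1-\lambda_2$). The reduction via the median shift and positive/negative parts is fine: with $\langle f,\mathbf{1}\rangle_\pi=0$ one has $\|f-c\|_\pi^2=\|f\|_\pi^2+c^2\ge\|f\|_\pi^2$, and since $\mathcal{E}(g_+-g_-,g_+-g_-)\ge\mathcal{E}(g_+,g_+)+\mathcal{E}(g_-,g_-)$ while $\|g_+-g_-\|_\pi^2=\|g_+\|_\pi^2+\|g_-\|_\pi^2$, proving the bound for each $g_\pm$ yields it for $f$. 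The co-area and Cauchy--Schwarz steps are also carried out cleanly, with the second factor bounded by $4\|f\|_\pi^2$ via row-stochasticity and stationarity. Nothing to add.
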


The main result in this section, stated in \cref{lem:sample-from-tree} below, is about relating two different weighted trees on the same vertex set, which we denote by $(\tree,w)$ and $(\tree,w')$.  For a vertex $v$, the distortion between the two weight functions is just $w_v/w_v'$.  Note that scaling a weight function by a constant factor doesn't affect any of the resulting distributions.  Given any edge $(u,v)$ on this tree, we assume that the distortion between $w$ and $w'$ along this edge is bounded: $0.1 \leq (w_u / w_v) \cdot (w'_u / w'_v ) \leq 10$.  With this assumption, we show that given sample access to $w'$ and exact access to $w_v/w_v'$  at the leaves, but only a constant approximate oracle for $w_v/w_v'$ at interior nodes, we can efficiently sample from the leaf distribution of $w$ via a Markov chain that mixes quickly.  This is closely related to the reduction from sampling to weak approximate counting for self-reducible problems in \cite{sj89}.

\begin{theorem}[Sampling a leaf via a Markov chain]\label{lem:sample-from-tree}
Let $(\tree, w), (\tree, w')$ be weighted trees of depth $n$ on the same vertex set such that each vertex has at most $k$ children.
Assume that for any node $u \in \tree$ with child $v \in \tree$, $0.1 \leq (w_u w'_{v})/(w_v w'_{u} ) \leq 10$. Further, assume we are given an oracle that can perform the following types of queries.
\begin{enumerate}
    \item For any internal node $v$, compute an estimate $\wh{r}_v$  such that $0.1 (w_v/w'_v) \leq \wh{r}_v \leq  10 (w_v/w'_v)$. 
    \item For any leaf node $v$, exactly compute $\wh{r}_v = w_v/w'_v$.
    \item For any internal node $v$, respond to a sample query for $(\tree,w')$ (see \cref{def:sampling-sub-tree}). 
\end{enumerate}
Then, for any $0<\eps, \delta<1$, there exists an algorithm that uses $\bigO{ n^4 \log(n k/\eps) \log(1/\delta) }$ queries to the aforementioned oracle and outputs either a leaf node $v \in \tree$ or $\perp$.
The algorithm outputs a leaf with probability $\geq 1 -\delta$, and when the algorithm outputs a leaf, its distribution is $\eps$-close to the leaf distribution of $(\tree,w)$ in TV distance.
\end{theorem}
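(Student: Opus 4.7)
The plan is to construct a Metropolis--Hastings random walk $M$ on the nodes of $\tree$ whose stationary distribution on the leaves is exactly the leaf distribution of $(\tree, w)$, and then simulate $M$ for a mixing time chosen so the output is $\eps$-close to this distribution in total variation. The proposal step of $M$ will use the sample oracle for $w'$, while the acceptance probability will use the approximate ratios $\hat r_v$. The key observation is that $\hat r_v = w_v/w'_v$ is \emph{exact} at the leaves, so the stationary mass at any leaf $v$ will be proportional to $w_v$ even though $M$ is only ``approximately'' defined at interior nodes.

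Concretely, from a node $u$ the chain does the following: with probability $1/2$ propose moving to the parent of $u$ (staying if $u$ is the root); with probability $1/2$ issue a sample query at $u$ to propose a child $v$ distributed as $w'_v/w'_u$ (staying if $u$ is a leaf). Accept a down move with probability $\alpha(u,v) := \min\{1, \hat r_v / \hat r_u\}$ and the reverse up move with probability $\alpha(v,u) := \min\{1, \hat r_u / \hat r_v\}$. A detailed-balance check confirms that the chain is reversible with stationary distribution $\pi_v \propto \hat r_v w'_v$, which equals $w_v$ at every leaf, so conditioning on reaching a leaf produces the leaf distribution of $(\tree, w)$ exactly. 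The distortion hypothesis together with the $10$-factor accuracy of $\hat r$ bounds $\hat r_u / \hat r_v \in [10^{-3}, 10^3]$ along every edge, so every proposed move is accepted with probability at least some absolute constant.

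The main technical ingredient is a conductance bound for $M$. For the cut induced by an edge $(u, \mathrm{parent}(u))$, the flux across is $\Omega(\pi_u)$ from up moves at $u$, while the smaller-side mass $\sum_{v \in \mathrm{subtree}(u)} \pi_v$ is $O(n \pi_u)$: indeed $\hat w_v := \hat r_v w'_v$ is a constant-factor approximation to $w_v$, and $\sum_v w_v$ telescopes over the tree to at most $n$ times the total leaf weight. This yields conductance $\Omega(1/n)$ for subtree cuts, and a routine argument reducing an arbitrary cut to a union of subtree cuts (each contributing an independent piece of boundary) extends the bound globally. Invoking \cref{lem:weighted-cheeger} then gives spectral gap $\Omega(1/n^2)$.

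Putting it together, standard mixing-time bounds for reversible chains give $\eps$-mixing in $O(n^2 \log(N/\eps))$ steps starting from the root, with $N \leq k^n$, so $O(n^3 \log(nk/\eps))$ steps per run. Since the stationary mass on the leaves is $\Omega(1/n)$ by the same telescoping, each run outputs a leaf with probability $\Omega(1/n)$, and $O(n \log(1/\delta))$ independent restarts amplify the success probability to $1 - \delta$, giving the claimed $O(n^4 \log(nk/\eps) \log(1/\delta))$ oracle queries. We expect the main obstacle to be the conductance estimate for non-subtree cuts, together with verifying that starting at the root---which carries $\Omega(1/n)$ stationary mass---avoids a mixing penalty depending unfavorably on $\min_v \pi_v$; both should follow from the telescoping structure combined with the uniform lower bound on acceptance probabilities.
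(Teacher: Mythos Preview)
Your proposal is correct and follows essentially the same approach as the paper: define a reversible walk on $\tree$ with stationary measure $\pi_v \propto \hat r_v\, w'_v$, bound its conductance by $\Omega(1/n)$ via the subtree-cut argument and the telescoping bound $\sum_{v' \preceq v} w_{v'} \le n\, w_v$, apply \cref{lem:weighted-cheeger} for a $\Omega(1/n^2)$ gap, and amplify by restarting until a leaf is hit. The only cosmetic difference is that the paper writes the walk directly (move up with probability $0.01\,\hat r_u/\hat r_v$, down via a $w'$-sample with probability $0.01$, else stay) rather than through a Metropolis--Hastings filter; both yield the same stationary distribution and the same $\Theta(1)$ lower bound on edge transition probabilities.
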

\begin{proof}
Consider the following random walk on $(\tree,w)$.

\begin{algorithm}[Tree random walk]
    \label{algo:tree} \hfill
\begin{algorithmic}[1]
    \State Set $v$ to be the root node;
    \For{$\bigO{n^3\log(nk/\eps)}$ iterations}
        \State Let $u$ be the parent of $v$, if it exists;
        \State Sample an $\xi \in \{0, 1, 2\}$ with probability $\braces{0.01 \wh{r}_u/\wh{r}_v, 0.01, 0.99 - 0.01 \wh{r}_u/\wh{r}_v}$, respectively.
        \State If $\xi = 0$, set $v$ to $u$;
        \State If $\xi = 1$, set $v$ to a child of $v$ chosen according to a sample query on $(T,w')$, if one exists;
        \State If $\xi = 2$, keep $v$ unchanged;
    \EndFor
    \State \Output $v$;
\end{algorithmic}
\end{algorithm}

Note that for all vertices, we query the oracle once for $\wh{r}_v$ and always use the same estimate throughout the random walk, so that we never query the same vertex again for a new estimate.
By assumption, $\wh{r}_u/\wh{r}_v \leq 10$ so this walk is well-defined.

First, we prove that the stationary distribution of this walk has probability mass on each vertex proportional to $\wh{r}_v w'_v$. In particular, since $\wh{r}_v = \frac{w_v}{w_v'}$ on any leaf $v$, this is exactly the distribution proportional to $w$ on the leaves.

We do this by verifying reversibility (see \cref{def:ergodic-markov-chain}).
Consider any two vertices $u,v$ such that $u$ is a parent of $v$.  Then we have
\[
P_{uv}  \pi_v = 
0.01\frac{\wh{r}_u}{\wh{r}_v} \wh{r}_v w'_v = 0.01\wh{r}_u w'_v = 0.01 \frac{w'_v}{w'_u} \wh{r}_u w'_u = P_{vu} \pi_u
\]
as desired.  Thus we can conclude that the Markov chain is ergodic and time-reversible.

Next, we will lower bound the spectral gap of this walk by bounding the conductance and applying \cref{lem:weighted-cheeger}.  To lower bound the conductance we show that it suffices to consider when the subset $S$ is a sub-tree rooted at some vertex $v$. To see this, consider any cut $(T_1, T_2)$ such that $T_2$ contains the root. Let  $\calM = \braces{ v_i }_{i \in [r]}$ be the maximal elements in $T_1$, i.e.\ for each $v_i$, the parent of $v_i$ , denoted by $u_i$, is in $T_2$. Then, 
\begin{equation*}
    \Phi_{T_1} \geq  \frac{ \sum_{ v_i \in \calM  } P_{v_{i} u_{i} } \pi_{v_{i}}    }{ \sum_{ v_i \in T_1 } \pi_{v_{i} }   } \geq \frac{ \sum_{ v_i \in \calM  } P_{v_{i} u_{i} } \pi_{v_{i}}    }{ \sum_{ v_i \in  \calM } \sum_{v_j \in \textit{ sub-tree } (v_i)
    } \pi_{v_{j} }   }  \geq \frac{ P_{v_{i} u_{i} } \pi_{v_{i}}  }{\sum_{v_j \in \textit{ sub-tree } (v_i)
    } \pi_{v_{j} }  }
\end{equation*}
Let $S$ be the sub-tree rooted at $v$ and let $u$ be the parent.  Then
\[
\Phi_S \geq \frac{0.1 \wh{r}_u w_v'}{\sum_{v' \preceq v} \wh{r}_v w_v'} \geq \frac{w_u w_v'}{40w_u' (\sum_{v' \preceq v} w_v)} \geq \frac{w_u w_v'}{40 n w_u' w_v } \geq \frac{1}{80 n}
\]
where in the above, we used that $\sum_{v' \preceq v} w_v \leq nw_v$ from the definition of a weighted tree of depth $n$.  Thus, by \cref{lem:weighted-cheeger}, the spectral gap of the Markov chain is $\Omega(1/n^2)$.  Finally, the number of nodes in the tree is at most $(k+1)^n$ so after  $O(n^3 \log(k/\eps))$ steps of the Markov chain, the distribution will be $\eps$-close to the stationary distribution (see for instance \cite{lp17}).   Note that the stationary distribution matches the leaf distribution of $(T,w)$ on the leaves and also the probability of being at a leaf in the stationary distribution is at least 
\[
\frac{\sum_{v \text{ leaf}}\wh{r}_v w_v'}{\sum_{v}\wh{r}_v w_v'} \leq \frac{1}{4} \frac{\sum_{v \text{ leaf}} w_v}{\sum_{v}w_v} \geq \frac{1}{4n},
\]
using that the tree is depth $n$.
Thus, with $O(n^3 \log(nk/\eps))$ steps of the Markov chain, we get $\eps/(100 n)$-close to the stationary distribution, and the output will be a leaf with probability $1/(4n)$.
So, by running \cref{algo:tree} until the output is a leaf, with probability $1 - \delta$, we will hit a leaf within $O(n \log(1/\delta))$ epochs and this gives the desired output.
\end{proof}

\subsection{Sample tree}

We want to apply the above algorithm to the sample tree corresponding to \cref{algo:single-step-2}.
In particular, we consider a root node corresponding to an initial input, an empty configuration with zero pinned nodes, $S = [\qubits]$ and $\mathcal{X} = \varnothing$.
Then, the children of this node corespond to calling the algorithm on $(S, \mathcal{X})$, which we can interpret as pinning more unpinned sites and adding them to the configuration.

\begin{definition}[Sample tree of a Hamiltonian]
\label{def:ham-sample-tree}
Given a Hamiltonian $H = \sum_a H_a$ over $n$ sites such that $H$ is a $\locality$-local Hamiltonian with dual interaction graph with degree $\degree$, along with parameters $\beta$ and $\eps$, we construct its sample tree as follows:
\begin{itemize}
\item It has a root node with label $(S, \mathcal{X}) = ([n], \varnothing)$.
\item For each node $v$ such that its label $(S_v, \mathcal{X}_v)$ satisfies $S_v \neq \varnothing$, it has a child for every possible outcome of running \cref{algo:single-step-2} on that node with input $S_v$, $\mathcal{X}_v$, and $\eps$, labeled with the corresponding output $(\wh{S}, \mathcal{\wh{X}})$.
\item For a node $v$, we associate its label $(S_v, \mathcal{X}_v)$ with the matrix
\[
    Q_v = e^{-\frac{\beta}{2} H^{(S_v)}} \sigma(\mathcal{X}_v) e^{-\frac{\beta}{2} H^{(S_v)} }  \,.
\]
\end{itemize}
\end{definition}

Our overall strategy is as follows. 
Assigned to every node in the sample tree is a matrix $Q_v \in \C^{2^n \times 2^n}$; for the root node, this matrix is $e^{-\beta H}$, and for a leaf node, this matrix is a configuration $\sigma(\mathcal{X}_v)$.
In between, we think of $Q_v$ as a ``posterior'' Gibbs state: having pinned all qubits in $[\qubits] \backslash S_v$ according to the (separable) configuration $\sigma(\mathcal{X}_v)$, $Q_v$ is then some Gibbs-like state on $S_v$.

Going from a node to its children pins more sites.
We first show in \cref{lem:sample-tree-property} that the expectation of $Q_{v'}$ over the children $v'$ of a node $v$, with respect to the distribution given by the sampling algorithm \cref{algo:single-step-2}, is close to the matrix $Q_v$.
When $\eps = 0$, this expectation is exact by \cref{lem:sep-expectation}, but we have to take a non-zero $\eps$ to get an efficient algorithm.
Then, by iterating this lemma, we show that the average over the leaves is close to the root matrix, $e^{-\beta H}$.
In other words, iterating \cref{algo:single-step-2} until $S = \varnothing$ gives a configuration $\mathcal{X}$ such that $\sigma(\mathcal{X}) \approx e^{-\beta H}$.

We would like to sample the Gibbs state $e^{-\beta H}/\tr(e^{-\beta H})$ by sampling a leaf $v$ from a certain distribution, and then preparing the corresponding configuration $\sigma(\mathcal{X}_v) / \tr(\sigma(\mathcal{X}_v))$; we define this distribution in \cref{def:true-weight}.
However, this distribution is not the same as the distribution naturally induced by running \cref{algo:single-step-2}; this distribution is defined in \cref{def:natural-weight}.
To prove our main result, we show that the ratio between these two distributions is bounded and can be efficiently approximated.
We then use \cref{lem:sample-from-tree} to sample. 
Throughout this argument, we use the lemmas in \cref{subsec:separability}, which describe the behavior of \cref{algo:single-step-2}, to analyze the sample tree.

We begin by recording a few basic observations about the structure of the sample tree.
In particular, we observe that, throughout, at most one element of $\mathcal{X}_v$ ever intersects with $S_v$, so that $Q_v$ can be written as a tensor product of a set of ``pinned'' sites with a set of ``active'' sites.

\begin{lemma}\label{fact:basic-observations}
The sample tree has the following properties.
For a node $v = (S_v, \mathcal{X}_v)$ at depth $d$:
\begin{enumerate}
    \item $\abs{S_v} \leq \qubits - d$, so the depth of the tree is at most $\qubits$;
    \item For any two children $u,\,u'$ of $v$, $S_{u} = S_{u'}$;
    \item For the root node $u$, $Q_{u} = e^{-\beta H}$, and for a leaf node $v$, $Q_{v} = \sigma(\mathcal{X}_v)$;
    \item For an interior node $v$, $Q_v$ decomposes into a tensor product $Q_v = (Y_v)_{\supp(Y_v)} \otimes (e^{-\frac{\beta}{2} H^{(S_v)}} (\id + c X) e^{-\frac{\beta}{2} H^{(S_v)}})_{[\qubits] \setminus \supp(Y_v)}$, where $(c, X) \in \mathcal{X}_v$ or $(c, X) = (0, \id)$;
    \item When $\beta \leq \frac{1}{200 \degree \locality}$, for every node $v$ and every $(c, X) \in \mathcal{X}_v$, $\abs{c} \leq \frac{1}{2}$; in particular, $Q_v$ is PSD.
\end{enumerate}
\end{lemma}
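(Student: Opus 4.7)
The plan is to derive all five properties directly from the specification of \cref{algo:single-step-2}, combined with the structural guarantees already proven in \cref{lem:well-defined,lem:sep-expectation,lem:coeff-potential}.

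Properties 1--3 are essentially book-keeping. For Property 1, I would induct on depth: the root has $\abs{S} = \qubits$, and \cref{lem:coeff-potential} asserts that every invocation of \cref{algo:single-step-2} returns $\wh{S} \subsetneq S$, so $\abs{S_v} \leq \qubits - d$. For Property 2, note that within \cref{algo:single-step-2} the only modification of $S$ is the deterministic assignment $\wh{S} = S \setminus \supp(H_{a^*})$, and $a^*$ itself is selected by a deterministic rule from the input $(S_v, \mathcal{X}_v)$, so every child of $v$ inherits the same $\wh{S}$. Property 3 is immediate from unfolding the definition of $Q_v$: at the root, $\mathcal{X}_v = \varnothing$ makes $\sigma(\mathcal{X}_v) = \id$ and $H^{(S_v)} = H$, while at a leaf $S_v = \varnothing$ makes $H^{(S_v)} = 0$.

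For Property 4, the invariant \eqref{eq:config-invariant} maintained by \cref{lem:well-defined} does the work. Writing $\mathcal{X}_v = \braces{(c_1, X_1), \dots, (c_\ell, X_\ell)}$, the invariant states that for each $i < \ell$, no term associated to $X_i$ is a neighbor of $\mathcal{E}^{(S_v)}$ in $\graph$; unpacking the adjacency definition, this is exactly $\supp(X_i) \cap \supp(H^{(S_v)}) = \varnothing$. Because the $\supp(X_i)$ are pairwise disjoint (by the definition of a configuration) and all $\supp(X_i)$ for $i < \ell$ are disjoint from $\supp(H^{(S_v)})$, the first $\ell - 1$ factors of $\sigma(\mathcal{X}_v)$ commute with and pull out of the conjugation by $e^{-\beta H^{(S_v)}/2}$. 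Setting $Y_v = \bigotimes_{i < \ell}(\id + c_i X_i)$ and $(c, X) = (c_\ell, X_\ell)$---or $(0, \id)$ when $\ell = 0$, in which case $Y_v$ lives on empty support---yields the claimed decomposition.

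Property 5 is the main obstacle, since \cref{lem:coeff-potential} only directly yields $\abs{c} \leq (1 - \gamma)^{\abs{S \cap \supp(X)}}(\beta / \beta_c)^t$; under our tighter $\beta \leq 1/(200\degree\locality)$ this simplifies to $\abs{c} \leq (1/2)^t$, which is at most $1/2$ for $t \geq 1$ but vacuous at $t = 0$. The plan is to close the gap with a short parallel induction on the run of \cref{algo:separability} showing that any degree-$0$ entry $(c, X) \in \mathcal{X}$ must in fact have $c = 0$: new entries enter as $(0, \id)$, and a case analysis of \cref{line:six-parts} shows that the only case which preserves degree $0$ while modifying $c$ is $\xi = 0$, which merely scales the coefficient and so fixes $c = 0$; in every other case, $\wh{X}$ being degree $0$ forces $E_i = \id$, i.e.\ $t_i = 0$, which by \cref{claim:sample-term-unbiased} forces $b_i = 0$ and hence $\wh{c} = 0$. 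This gives $\abs{c} \leq 1/2$ uniformly. Combined with $\norm{X}_\op \leq 1$ for every Hermitian monomial (a direct induction on \cref{def:herm-monomial}, using $\norm{H_a}_\op \leq 1$ and $\norm{(A + A^\dagger)/2}_\op \leq \norm{A}_\op$), this gives $\id + cX \succeq \tfrac{1}{2}\id$; thus $\sigma(\mathcal{X}_v)$ is PSD as a tensor product of PSDs, and $Q_v$ is PSD since conjugation by the Hermitian matrix $e^{-\beta H^{(S_v)}/2}$ preserves PSDness.
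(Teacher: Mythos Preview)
Your proof is correct and follows the same approach as the paper, which simply points to \cref{lem:well-defined} for Property~4 and \cref{lem:coeff-potential} for Property~5, with Properties~1--3 left to inspection. Your treatment of Property~5 is in fact more careful than the paper's: you explicitly handle the degree-$0$ case (where the potential bound $\abs{c}\le(\beta/\beta_c)^t$ is vacuous) by a short induction showing such entries always have $c=0$, a detail the paper's one-line citation glosses over.
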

\begin{proof}
1 and 2 follow from inspection of \cref{algo:single-step-2}; it always removes elements from $S_v$, and it makes this choice deterministically.
3 follows from the definition of $Q_v$.
4 follows from \cref{lem:well-defined}.
5 follows from \cref{lem:coeff-potential}.
\end{proof}

We will also need a bound on the running time of \cref{algo:single-step-2}.

\begin{lemma}[Algorithm complexity]\label{lem:one-step-runtime}
    Let $H = \sum_{a \in [\terms]} H_a$ be a Hamiltonian and we assume access to it as in \cref{rmk:ham-input}.
    Then \cref{algo:single-step-2} can be implemented to run in time $\bigOt{\log(\qubits/\eps) \locality \degree}$, where the output configuration is described in terms of the descriptions of its Hermition monomials.
    Further, the number of possible outputs is $\bigO{(40\log(\qubits/\eps)(\degree + 1))^{20 \log(\qubits/\eps)}}$.
\end{lemma}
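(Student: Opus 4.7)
The plan is to prove the two parts of the lemma---the runtime bound and the output-count bound---by tracking the work done, and the branching, inside the two calls to \cref{algo:sample-term} that dominate \cref{algo:single-step-2}. Throughout, set $t_{\max} = 10\log(\qubits/\eps)$.

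For the runtime, I would argue that each iteration of the inner loop in \cref{algo:recursive-sample-term} costs $\bigOt{\locality\degree}$, so that one call to \cref{algo:sample-term} costs $\bigOt{t_{\max}\locality\degree}$. The main work per iteration is to maintain the set $\mathcal{R}_t$ of neighbors of the current monomial's terms and to sample uniformly from either $\mathcal{R}_t$ or $\mathcal{Q}$. Using an indexed hash set for $\mathcal{R}_t$, appending a newly sampled term $H_b$ contributes at most $\degree+1$ new neighbors, each accessible in polylogarithmic time via the pre-processing of \cref{rmk:ham-input}, and a uniform sample from $\mathcal{R}_t$ costs $\bigO{\log(t_{\max}\degree)}$. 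The set $\mathcal{Q} = \mathcal{E}_{(a^*)} \cap \mathcal{E}^{(S)}$ is computed once per call in time $\bigO{\locality\degree}$: for each of the at most $\degree+1$ neighbors of $a^*$, check in $\bigO{\locality}$ time whether its support lies in $S$. The search for $a^*$ in \cref{line:extract-out} performs the analogous $\supp \subseteq S$ check across the $\bigO{t_{\max}\degree}$ neighbors of the terms in the active monomial, contributing $\bigOt{\log(\qubits/\eps)\locality\degree}$. Everything else---updating the monomial's symbolic description, updating $S$, sampling $\xi$---is absorbed in this bound. Summing yields the claimed $\bigOt{\log(\qubits/\eps)\locality\degree}$.

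For the output count, I would enumerate the branching choices. Each iteration of \cref{algo:recursive-sample-term} has at most $4 t_{\max}(\degree+1)$ outcomes: a coin flip, then either a bit $\xi \in \{0,1\}$ together with a uniform sample from $\mathcal{R}_t$ (of size at most $t_{\max}(\degree+1)$), or a uniform sample from $\mathcal{Q}$ (of size at most $\degree+1$). Over at most $t_{\max}$ iterations plus the initial choice of $t \in \{0,1,\dots,t_{\max}\}$, one call to \cref{algo:sample-term} has at most $(t_{\max}+1)(4 t_{\max}(\degree+1))^{t_{\max}}$ possible outputs. Squaring for the two independent invocations, multiplying by $7$ for the choice of $\xi \in \{0,\dots,6\}$, and substituting $t_{\max} = 10\log(\qubits/\eps)$ yields the stated $\bigO{(40\log(\qubits/\eps)(\degree+1))^{20\log(\qubits/\eps)}}$.

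I do not foresee a real obstacle---the argument is essentially bookkeeping---but one subtlety is worth flagging: a Hermitian monomial is maintained purely as a symbolic description (a list of construction operations, per \cref{def:herm-monomial}), so every ``multiply and symmetrize'' step is a constant-time append of short term lists rather than an operation on a $2^\locality \times 2^\locality$ matrix. This is what avoids an otherwise lurking exponential-in-$\locality$ blowup in both the runtime and the representation size of the output configuration.
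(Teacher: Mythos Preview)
Your approach is essentially the paper's, and the output-count argument is correct. There is, however, one step in the runtime analysis that fails as written.

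You claim the search for $a^*$ in \cref{line:extract-out} scans ``the $\bigO{t_{\max}\degree}$ neighbors of the terms in the active monomial''. This implicitly assumes the active monomial $X_{\hat\ell}$ has degree $\bigO{t_{\max}}$, which is not true in general: the same monomial can remain active across many successive calls to \cref{algo:single-step-2}, each call adding up to $2t_{\max}$ terms, so its degree can grow to $\Theta(\qubits\, t_{\max})$. A direct scan would then cost $\Theta(\qubits\, t_{\max}\degree\locality)$ per call, which does not meet the claimed bound.

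The paper addresses this by maintaining the list of candidate $a^*$'s---elements of $\mathcal{E}^{(S)}$ neighboring some term of $X_{\hat\ell}$---incrementally across calls. When new terms are appended to $X_{\hat\ell}$ (at most $2t_{\max}$ of them), their $\bigO{t_{\max}\degree}$ neighbors are inserted; when $\supp(H_{a^*})$ is removed from $S$, the at most $\degree+1$ terms neighboring $a^*$ are deleted. This keeps the per-call update cost at $\bigOt{t_{\max}\degree\locality}$ irrespective of the accumulated size of $X_{\hat\ell}$. With this amendment your argument is complete.
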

\begin{proof}
We begin by considering \cref{algo:recursive-sample-term} with input $H$, $k \geq 0$, and $\abs{\mathcal{Q}} \leq \degree + 1$.
Then this algorithm can be run in $\bigOt{k \degree}$ time.
The main technical consideration is maintaining $\mathcal{R}_t$, which can be done by adding the neighbors of the term being added to $\mathcal{R}_t$ to form $\mathcal{R}_{t+1}$.
This costs $\bigOt{\degree}$ time since there are at most that many neighbors.
Further, the number of possible outputs $c_k, b^{(k)}$ is $\bigO{((k-1)!) \cdot (4(\degree + 1))^k}$, since $\abs{\mathcal{R}_t} \leq t(\degree + 1)$.

Next, consider \cref{algo:sample-term}, which calls \cref{algo:recursive-sample-term} for some $k$ between $1$ and $t_{\max}$.
So, its running time is $\bigOt{t_{\max} \degree}$ and the number of possible outputs is $\bigO{(t_{\max}!) \cdot (4(\degree + 1))^{t_{\max}}}$.

Finally, consider \cref{algo:single-step-2}.
This algorithm can be run in $\bigOt{t_{\max} \degree \locality}$ time: it can be seen as taking an element of the configuration $(c_{\wh{\ell}}, X_{\wh{\ell}}) \in \mathcal{X}$ and changing it to have at most $2 t_{\max}$ more terms in it.
The step of finding an element of $\mathcal{E}^{(S)}$ neighboring a term in $X_{\wh{\ell}}$ can be done by maintaining a list of such terms; then, when $X_{\wh{\ell}}$ is changed, this list can be updated by removing elements when $S$ is modified and adding elements corresponding to the terms being added.
The algorithm is deterministic apart from the two calls to \cref{algo:sample-term} and the choice of $\xi$ among seven options.
We know that $\abs{\mathcal{Q}} \leq \degree + 1$, so the number of possible outputs is
\begin{align*}
    \bigO{((t_{\max}!) \cdot (4(\degree + 1))^{t_{\max}})^2}
    = \bigO{(40\log(\qubits/\eps)(\degree + 1))^{20 \log(\qubits/\eps)}}.
\end{align*}
Note that, across all of these algorithms, we never use the matrix representations of the Hamiltonian terms, instead only using the information coming from the dual interaction graph and the support of every term.
\end{proof}

\subsection{Weight function analysis}

Now we define two different weighted trees on the sample tree.  The first is the weight derived from the sampling process.

\begin{definition}[Natural weight]\label{def:natural-weight}
Given a Hamiltonian $H = \sum_a H_a$ over $n$ sites and parameters $\beta, \locality, \degree$ such that $H$ is a $\locality$-local Hamiltonian with degree $\degree$, let $\mathcal{T}$ be its sample tree.
We define the natural weight function $\omega$ to have for each node $v \in \mathcal{T}$, $\omega(v)$ is the probability of reaching $v$ from the root by running the sampling process in \cref{algo:single-step-2} at each intermediate node.
\end{definition}

\begin{remark}
The weight at any intermediate node is equal to the sum of the weights of the leaves of its sub-tree, so $\omega$ indeed defines a valid weighted tree.
\end{remark}

The second weight function, the true weight, is defined by adjusting the natural weight at each leaf by $\tr(\sigma(\mathcal{X}))$.  We will show in \cref{coro:true-weight-error} that to sample from the Gibbs state, it suffices to sample a leaf according to this true weight distribution.

\begin{definition}[True weight]\label{def:true-weight}
Given a Hamiltonian $H = \sum_a H_a$ over $n$ sites and parameters $\beta, \locality, \degree$ such that $H$ is a $\locality$-local Hamiltonian with degree $\degree$, let $T$ be its sample tree.  We define the true weight function $\kappa$ as follows: for each leaf node $v$, we set $\kappa(v) = \tr(\sigma(\mathcal{X}_v)) \omega(v)$ and for each intermediate node $v$, $\kappa(v)$ is the sum of the weights of the leaves in its sub-tree.
\end{definition}
\begin{remark}
By \cref{fact:basic-observations}, when $\beta \leq 1/(200 \degree \locality)$, all of the $\sigma(\mathcal{X}_v)$ are PSD and thus this is a valid weighted tree.  
\end{remark}

First, we prove that $Q_v$ is close to the average of $Q_{v'}$ over its children $v'$, according to the natural weight.
Since these matrices are not trace-normalized, it will be important to ensure that our error bounds are ``at the right scale''.
We do this by bounding our errors multiplicatively in PSD ordering.

\begin{lemma}\label{lem:sample-tree-property}
Given a Hamiltonian $H = \sum_a H_a$ over $n$ sites and parameters $\beta, \locality, \degree$ such that $H$ is a $\locality$-local Hamiltonian with degree $\degree$ and $\beta \leq 1/(200 \degree \locality)$, let $\tree$ be its sample tree as defined in \cref{def:ham-sample-tree}.  Let $v \in \tree$ be a node in the tree.  Then we have
\[
    \parens[\Big]{1 - \frac{\eps}{20 n}} Q_v \preceq \sum_{v' \text{ child of } v} \frac{\omega(v')}{\omega(v)} Q_{v'} \preceq \parens[\Big]{1 + \frac{\eps}{20 n}} Q_v.
\]
\end{lemma}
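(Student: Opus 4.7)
The plan is to combine \cref{lem:sep-expectation}, which gives an exact formula for the expected output of one iteration of \cref{algo:single-step-2}, with \cref{claim:error-bound-2}, which says that sandwiching by $e^{-\frac{\beta}{2}(H - H_{(S)})} T_{t_{\max},\beta/2}$ and its conjugate is multiplicatively close to sandwiching by $e^{-\frac{\beta}{2} H}$. Crucially, by parts~(1)--(2) of \cref{fact:basic-observations}, \cref{algo:single-step-2} picks its next unpinned set $\wh{S} = S_v \setminus \supp(H_{a^*})$ deterministically from the input, so every child $v'$ of $v$ shares the same $S_{v'} = \wh{S}$. Hence
\begin{align*}
    \sum_{v' \text{ child of } v} \frac{\omega(v')}{\omega(v)} Q_{v'}
    = e^{-\frac{\beta}{2} H^{(\wh{S})}} \E[\sigma(\wh{\mathcal{X}})] e^{-\frac{\beta}{2} H^{(\wh{S})}},
\end{align*}
where the expectation is over one invocation of \cref{algo:single-step-2} on $(S_v, \mathcal{X}_v)$. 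Substituting $\E[\sigma(\wh{\mathcal{X}})] = T^\dagger \sigma(\mathcal{X}_v) T$ with $T = T_{t_{\max},\beta/2}(H^{(S_v)}, H^{(S_v)}_{(\supp H_{a^*})})$ from \cref{lem:sep-expectation} puts the right-hand side in precisely the form of \cref{claim:error-bound-2} applied to the Hamiltonian $H^{(S_v)}$ and set $S \leftarrow \supp(H_{a^*})$, using $H^{(S_v)} - H^{(S_v)}_{(\supp H_{a^*})} = H^{(\wh{S})}$.

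The hypotheses of \cref{claim:error-bound-2} must then be verified. The temperature bound $\beta \leq 1/(200 \degree \locality)$ comfortably meets $\beta < 1/(2C(\degree+1))$ for a large constant $C$ (say $C = 50$). The subtle condition is the sandwich bound $0.5 \id \preceq P \preceq 2\id$ on the inserted matrix $P = \sigma(\mathcal{X}_v)$, which need not hold for a general configuration. The main obstacle is dispatching this, which we handle via part~(4) of \cref{fact:basic-observations}: $\sigma(\mathcal{X}_v)$ factors on disjoint supports as $Y_v \otimes (\id + c X)$, where the active monomial $(c, X)$ is the only element of $\mathcal{X}_v$ whose support can intersect $S_v$. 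Since $H^{(S_v)}$ is supported inside $S_v$ and commutes trivially with $Y_v$, the PSD factor $Y_v$ cleanly tensors out of both sides of the target PSD inequality, so it suffices to apply \cref{claim:error-bound-2} with $P \leftarrow \id + cX$ on the active subsystem.

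Part~(5) of \cref{fact:basic-observations} now delivers the required spectral bound: at $\beta \leq 1/(200\degree\locality)$, every coefficient satisfies $\abs{c} \leq 1/2$, which gives $0.5 \id \preceq \id + cX \preceq 1.5 \id$. With this in hand, \cref{claim:error-bound-2} yields a multiplicative error of $100/C^{t_{\max}}$. The choice $t_{\max} = 10\log(\qubits/\eps)$ baked into \cref{algo:single-step-2} drives this below $\eps/(20\qubits)$, giving the claimed two-sided bound. The fact that the critical temperature is halved relative to the separability argument (from $1/(100\degree\locality)$ to $1/(200\degree\locality)$) is precisely what buys us the strict $\abs{c} \leq 1/2$ needed to certify the PSD sandwich condition on $P$.
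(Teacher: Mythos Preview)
Your proposal is correct and follows essentially the same route as the paper: invoke \cref{lem:sep-expectation} to write the average over children as $e^{-\frac{\beta}{2}H^{(\wh S)}} T^\dagger \sigma(\mathcal{X}_v) T e^{-\frac{\beta}{2}H^{(\wh S)}}$ (using that all children share the same $\wh S$), then apply \cref{claim:error-bound-2} to the Hamiltonian $H^{(S_v)}$ with the identification $H^{(S_v)} - H^{(S_v)}_{(\supp H_{a^*})} = H^{(\wh S)}$, and finish with $t_{\max} = 10\log(n/\eps)$. If anything you are more careful than the paper on one point: the paper applies \cref{claim:error-bound-2} with $P = \sigma(\mathcal{X}_v)$ without checking $0.5\id \preceq P \preceq 2\id$, whereas you correctly factor out the inactive tensor component $Y_v$ via part~(4) of \cref{fact:basic-observations} so that the remaining $P = \id + cX$ genuinely satisfies the spectral hypothesis by part~(5).
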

\begin{proof}
Consider the execution of \cref{algo:single-step-2} with input $(S_v, \mathcal{X}_v)$.
Then by \cref{lem:sep-expectation}, over the expectation of the children $v'$ of $v$,
\begin{align*}
    \E_{v'}[\sigma(\mathcal{X}_{v'})] = T_{t_{\max}, \beta/2}(H^{(S_v)}, H_{(S_v \setminus S_{v'})}^{(S_v)})^\dagger \sigma(\mathcal{X}_v) T_{t_{\max}, \beta/2}(H^{(S_v)}, H_{(S_v \setminus S_{v'})}^{(S_v)}) \, .
\end{align*}
Here, we are using \cref{fact:basic-observations}, that every child $v'$ of $v$ has the same $S_{v'}$.
By applying \cref{claim:error-bound-2}, we have that
\begin{align*}
    \E_{v'}[\sigma(\mathcal{X}_{v'})] \preceq \parens[\Big]{1 + \frac{100}{C^{t_{\max}}}} e^{\frac{\beta}{2}(H^{(S_v)} - H_{(S_v \setminus S_{v'})}^{(S_v)})} e^{-\frac{\beta}{2}H^{(S_v)}} \sigma(\mathcal{X}_v) e^{-\frac{\beta}{2}H^{(S_v)}} e^{\frac{\beta}{2}(H^{(S_v)} - H_{(S_v \setminus S_{v'})}^{(S_v)})}
\end{align*}
Using that $t_{\max} = 10 \log(\qubits / \eps)$ and left and right multiplying the above by $e^{-\frac{\beta}{2} (H^{(S_v )}  - H_{(S_v \setminus S_{v'})}^{(S_v )})}$ and applying \cref{claim:error-bound-2}, we get
\begin{align*}
    \E_{v'}[Q_{v'}] \preceq \parens[\Big]{1 + \frac{\eps}{10\qubits}} Q_v.
\end{align*}
Here, we use that $H^{(S_v)} - H_{(S_v \setminus S_{v'})}^{(S_v)} = H^{(S_{v'})}$.
This gives us the upper bound, since averaging over the children of $v$ with weights $\omega(v')/\omega(v)$ is exactly the same as taking the expectation over the execution of \cref{algo:single-step-2}: $\E_{v'}[Q_{v'}] = \sum_{v' \text{ child of } v} \frac{\omega(v')}{\omega(v)} Q_{v'}$.
By applying the other side of \cref{claim:error-bound-2}, we get the corresponding lower bound.
\end{proof}

By iterating \cref{lem:sample-tree-property}, we can relate the average at the leaves to the Gibbs state. 
Specifically, since the true weight distribution on the leaves is exactly defined to be equal to the natural weight distribution arising from the sampling process and then distorted by a factor of $\alpha_v$ at each leaf $v$, we get that the average of the (normalized) states at the leaves according to the true weight distribution is close to the Gibbs state $e^{-\beta H}/\tr(e^{-\beta H})$.

\begin{corollary}[Average of the leaves is close to the Gibbs state]\label{coro:true-weight-error}
Given a Hamiltonian $H = \sum_a H_a$ over $n$ sites and parameters $\beta, \locality, \degree$ such that $H$ is a $\locality$-local Hamiltonian with degree $\degree$ and $\beta \leq 1/(200 \degree \locality )$, let $\tree$ be its sample tree.
Then 
\[
    \norm[\Big]{ \frac{e^{-\beta H}}{\tr(e^{-\beta H})} - \sum_{v \text{ leaf of } \tree} \frac{\kappa(v)}{\sum_{v' \text{ leaf of } \tree } \kappa(v')} \frac{\sigma(\mathcal{X}_v)}{\tr(\sigma(\mathcal{X}_v))}}_1  \leq \frac{\eps}{2} \,.
\]
\end{corollary}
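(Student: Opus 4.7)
The plan is to iterate the one-step bound of \cref{lem:sample-tree-property} from the root of $\tree$ down to the leaves, and then convert the resulting multiplicative PSD-ordering bound on unnormalized matrices into a trace-distance bound on the two normalized states. Setting $\delta = \eps/(20n)$, I would show by induction on $t \geq 0$ that
\begin{equation*}
    (1-\delta)^t\, e^{-\beta H} \preceq \sum_{v \in L_t} \omega(v)\, Q_v \preceq (1+\delta)^t\, e^{-\beta H},
\end{equation*}
where $L_t$ consists of every leaf of $\tree$ at depth at most $t$ together with every interior node at depth exactly $t$. The base case $t = 0$ is immediate because the root $r$ satisfies $\omega(r) = 1$ and $Q_r = e^{-\beta H}$.

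For the inductive step, I would multiply both sides of \cref{lem:sample-tree-property} applied at each interior $v \in L_t$ through by $\omega(v) \geq 0$, giving $(1-\delta)\omega(v) Q_v \preceq \sum_{v' \text{ child of } v} \omega(v') Q_{v'} \preceq (1+\delta)\omega(v) Q_v$, and then sum these inequalities across such $v$, keeping the leaf summands of $L_t$ untouched. This is legal because PSD ordering is preserved under nonnegative linear combinations, and because every $Q_v$ is PSD by \cref{fact:basic-observations}. Since the tree has depth at most $n$, taking $t = n$ gives $L_n$ equal to the entire set of leaves, at which $Q_v = \sigma(\mathcal{X}_v)$. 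Using $\eps \in (0,1)$, we have $(1 \pm \delta)^n \leq 1 \pm \eps/10$, so with $A \coloneqq e^{-\beta H}$ and $B \coloneqq \sum_{v \text{ leaf}} \omega(v) \sigma(\mathcal{X}_v) = \sum_{v \text{ leaf}} \kappa(v) \cdot \sigma(\mathcal{X}_v)/\tr(\sigma(\mathcal{X}_v))$, the induction yields
\begin{equation*}
    (1 - \tfrac{\eps}{10})\, A \preceq B \preceq (1 + \tfrac{\eps}{10})\, A.
\end{equation*}

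The remaining step translates this multiplicative PSD bound into the required trace-distance estimate. Diagonalizing $A^{-1/2}(A - B)A^{-1/2}$, whose eigenvalues lie in $[-\eps/10, \eps/10]$, and noting that $\|A^{1/2} D A^{1/2}\|_1 \leq \|D\|_{\op} \tr(A)$ for any Hermitian $D$, we obtain $\|A - B\|_1 \leq (\eps/10)\tr(A)$, which also gives $|\tr(A) - \tr(B)| \leq (\eps/10)\tr(A)$. A triangle inequality then bounds
\begin{equation*}
    \norm[\Big]{\tfrac{A}{\tr(A)} - \tfrac{B}{\tr(B)}}_1 \leq \tfrac{\|A-B\|_1}{\tr(A)} + \tr(B)\,\abs[\Big]{\tfrac{1}{\tr(A)} - \tfrac{1}{\tr(B)}} \leq \tfrac{\eps}{10} + \tfrac{|\tr(A) - \tr(B)|}{\tr(A)} \leq \tfrac{\eps}{5},
\end{equation*}
which, since $B/\tr(B) = \sum_{v\text{ leaf}} (\kappa(v)/\sum_{v'}\kappa(v'))\, \sigma(\mathcal{X}_v)/\tr(\sigma(\mathcal{X}_v))$, is exactly the desired inequality (with $\eps/5 \leq \eps/2$ to spare).

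I do not expect a significant obstacle: the induction telescopes cleanly because the sample tree has depth at most $n$ and PSD ordering composes through positive linear combinations. The only step requiring a little care is the final PSD-to-trace-norm conversion, since the two states being compared are normalizations of unnormalized matrices with slightly different traces; handling the normalization mismatch by the triangle inequality above, and absorbing the resulting two error contributions into the $\eps/2$ budget, is what forces the factor of $20$ in the choice $\delta = \eps/(20n)$ (rather than, say, $\eps/n$) made in \cref{lem:sample-tree-property}.
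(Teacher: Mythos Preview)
Your proposal is correct and follows essentially the same route as the paper: iterate \cref{lem:sample-tree-property} from the root down to the leaves to obtain $(1-\tfrac{\eps}{10})\,e^{-\beta H}\preceq \sum_{v\text{ leaf}}\kappa(v)\,\sigma(\mathcal{X}_v)/\tr(\sigma(\mathcal{X}_v))\preceq(1+\tfrac{\eps}{10})\,e^{-\beta H}$, then convert the multiplicative PSD bound into a trace-distance bound between the two normalizations. Your write-up is simply more explicit than the paper's---you spell out the frontier-set induction (handling leaves at different depths, which does need the PSD-ness of each $Q_v$), the conjugation by $A^{-1/2}$ to pass from the PSD sandwich to $\|A-B\|_1\leq\tfrac{\eps}{10}\tr A$, and the triangle-inequality normalization step---whereas the paper compresses the last part into the single inequality $\|A/\tr A - B/\tr B\|_1 \leq 2\|A/\tr A - B/\tr A\|_1$.
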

\begin{proof}
By repeatedly applying \cref{lem:sample-tree-property} starting from the root to the leaves, we have
\[
    \parens[\Big]{1 - \frac{\eps}{10}} e^{-\beta H}
    \preceq \sum_{v \text{ leaf of } \tree} \kappa(v) \frac{\sigma(\mathcal{X}_v)}{\tr(\sigma(\mathcal{X}_v))}
    \preceq \parens[\Big]{1 +\frac{\eps}{10}} e^{-\beta H} \,,
\]
which follows from recalling that $\kappa(v) = \tr(\sigma(\mathcal{X}_v))\omega(v)$.
Thus
\begin{align*}
    & \norm[\Big]{ \frac{e^{-\beta H}}{\tr(e^{-\beta H})} - \sum_{v \text{ leaf of } \tree} \frac{\kappa(v)}{\sum_{v' \text{ leaf of } \tree } \kappa(v')} \frac{\sigma(\mathcal{X}_v)}{\tr(\sigma(\mathcal{X}_v))}}_1 \\
    &\leq 2\norm[\Big]{ \frac{e^{-\beta H}}{\tr(e^{-\beta H})} - \sum_{v \text{ leaf of } \tree} \frac{\kappa(v)}{\tr(e^{-\beta H})} \frac{\sigma(\mathcal{X}_v)}{\tr(\sigma(\mathcal{X}_v))}}_1
    \leq \frac{\eps}{2}
\end{align*}
as desired.
\end{proof}

To sample from the distribution induced by $\kappa$ on the leaves, we will need to approximate the ratio $\kappa(v)/\omega(v)$ in order to then apply \cref{lem:sample-from-tree} (since we can run \cref{algo:single-step-2} to sample from the weighted tree $\omega$).  We do this below.

\begin{corollary}[Bounded weight ratio]\label{coro:weight-ratio}
Given a Hamiltonian $H = \sum_a H_a$ over $n$ sites and parameters $\beta, \locality, \degree$ such that $H$ is a $\locality$-local Hamiltonian with degree $\degree$ and  $\beta \leq 1/(200 \degree \locality )$, let $\tree$ be its sample tree.
Let $v$ be a node with associated label $(S_v, \mathcal{X}_v)$.
Then we have
\[
0.9 \tr(Q_v) \leq \frac{\kappa(v)}{\omega(v)} \leq 1.1 \tr(Q_v) \,.
\]
\end{corollary}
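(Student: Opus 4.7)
\emph{Plan.} My approach is to unroll the recursive definition of $\kappa(v)$ down to the leaves of the subtree rooted at $v$, and then compare the resulting average of leaf traces to $\tr(Q_v)$ via iterated application of \cref{lem:sample-tree-property}. The main (but mild) technical point is ensuring that the multiplicative $1 \pm \eps/(20n)$ error per level, when compounded over the depth of the subtree, stays inside the $[0.9, 1.1]$ window.

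\textbf{Step 1 (unroll $\kappa$).} Because $\kappa(u) = \tr(\sigma(\mathcal{X}_u))\,\omega(u)$ at every leaf $u$ and $\kappa$ on interior nodes is by definition the sum of leaf weights in the subtree, I would write
\[
    \frac{\kappa(v)}{\omega(v)} \;=\; \sum_{u \text{ leaf desc.\ of } v} \frac{\omega(u)}{\omega(v)}\,\tr(\sigma(\mathcal{X}_u)).
\]
At any leaf $u$ of the sample tree, the while loop of \cref{algo:separability} has terminated, so $\mathcal{E}^{(S_u)} = \varnothing$ and hence $H^{(S_u)} = 0$. Thus $Q_u = \sigma(\mathcal{X}_u)$ and $\tr(\sigma(\mathcal{X}_u)) = \tr(Q_u)$, which lets me replace every leaf trace in the sum above by $\tr(Q_u)$.

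\textbf{Step 2 (iterate the one-step bound).} Next, I would apply \cref{lem:sample-tree-property} repeatedly starting from $v$. Each application contributes a multiplicative factor of $1 \pm \eps/(20n)$ in PSD ordering on the weighted average of $Q$'s across children, and the weights $\omega(u)/\omega(v)$ factor as a product of per-edge transition probabilities along the root-to-$u$ path inside $v$'s subtree, so iterating and telescoping gives, for $d \leq n$ the depth of the subtree (by \cref{fact:basic-observations}),
\[
    \bigl(1 - \tfrac{\eps}{20n}\bigr)^{d} Q_v \;\preceq\; \sum_{u \text{ leaf desc.\ of } v} \frac{\omega(u)}{\omega(v)}\,Q_u \;\preceq\; \bigl(1 + \tfrac{\eps}{20n}\bigr)^{d} Q_v.
\]

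\textbf{Step 3 (conclude).} Since trace preserves PSD inequalities among positive operators, and $Q_v \succeq 0$ by \cref{fact:basic-observations}, combining Steps 1 and 2 yields
\[
    \bigl(1 - \tfrac{\eps}{20n}\bigr)^{n} \tr(Q_v) \;\leq\; \frac{\kappa(v)}{\omega(v)} \;\leq\; \bigl(1 + \tfrac{\eps}{20n}\bigr)^{n} \tr(Q_v).
\]
For $\eps \in (0,1)$, $\bigl(1 \pm \tfrac{\eps}{20n}\bigr)^{n} \in [e^{-\eps/20}, e^{\eps/20}] \subseteq [0.9, 1.1]$, which gives the desired bound. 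I do not foresee a substantive obstacle: the definitions essentially force the telescoping in Step 1, \cref{lem:sample-tree-property} is precisely the one-step estimate needed, and compounding depth-$n$ multiplicative errors is standard.
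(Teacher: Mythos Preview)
Your proposal is correct and follows essentially the same route as the paper: unroll $\kappa(v)/\omega(v)$ as the $\omega$-weighted sum of $\tr(Q_u)$ over leaf descendants, iterate \cref{lem:sample-tree-property} down the subtree to get the PSD sandwich $(1\pm\eps/(20n))^{|S_v|} Q_v$, and take traces. The only cosmetic differences are that the paper uses $|S_v|$ rather than your subtree depth $d$ as the exponent (both are $\leq n$), and that you spell out why $Q_u=\sigma(\mathcal{X}_u)$ at leaves whereas the paper simply cites \cref{fact:basic-observations}.
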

\begin{proof}
By repeatedly applying \cref{lem:sample-tree-property}, we have
\[
    \left(1 - \frac{\eps}{20 n} \right)^{\abs{S_v}} Q_v \preceq \sum_{\substack{v' \text{ a leaf } \\ v' \text{ descendant of } v}} \frac{\omega(v')}{\omega(v)} \sigma(\mathcal{X}_{v'}) \preceq \left(1 + \frac{\eps}{20 n} \right)^{\abs{S_v}} Q_v \,.
\]
Now take the trace of the above to get the desired bound:
\[
    \left(1 - \frac{\eps}{20 n} \right)^{\abs{S_v}} \tr(Q_v) \leq \frac{\kappa(v)}{\omega(v)} \leq \left(1 + \frac{\eps}{20 n} \right)^{\abs{S_v}} \tr(Q_v) \qedhere
\]
\end{proof}

Now we can put everything together to prove our our main theorem, \cref{thm:main-gibbs}.

\begin{proof}[Proof of \cref{thm:main-gibbs}]
We will apply \cref{lem:sample-from-tree} on the sample-tree with $w' \leftarrow \omega$ and $w \leftarrow \kappa$.
First, we verify the hypotheses of \cref{lem:sample-from-tree}.
By \cref{lem:one-step-runtime}, the number of children of each node is at most $k = \bigO{(40\log(\qubits/\eps)(\degree + 1))^{20\log(\qubits/\eps)}}$.
Consider a node $u$ with child $v$, and consider their associated labels $(S_u, \mathcal{X}_u)$ and $(S_v, \mathcal{X}_v)$.
Note this means $S_v = S_u \backslash R$ for some $R = \supp(H_a)$.
By \cref{coro:weight-ratio}, 
\[
    0.8 \frac{\tr(Q_u)}{\tr(Q_v)} \leq \frac{w_u w_v'}{w_v w_u'} \leq 1.25 \frac{\tr(Q_u)}{\tr(Q_v)}.
\]
Recall that $Q_u = e^{-\frac{\beta}{2} H^{(S_u)}} \sigma(\mathcal{X}_u) e^{-\frac{\beta}{2} H^{(S_u)}}$, and respectively for $Q_v$.
$(S_v, \mathcal{X}_v)$ is a possible outcome of running \cref{algo:single-step-2} on $(S_u, \mathcal{X}_u)$.
So, because of \cref{lem:well-defined}, we can write $Q_u = Y_{\supp(Y)} \otimes (e^{-\frac{\beta}{2} H^{(S_u)}} (\id + cX) e^{-\frac{\beta}{2} H^{(S_u)}})_{[\qubits] \setminus \supp(Y)}$ and $Q_v = Y_{\supp(Y)} \otimes (e^{-\frac{\beta}{2} H^{(S_v)}} (\id + \wh{c}\wh{X}) e^{-\frac{\beta}{2} H^{(S_v)}})_{[\qubits] \setminus \supp(Y)}$ for $(c, X)$ and $(\wh{c}, \wh{X})$ corresponding to the $(c_{\wh{\ell}}, X_{\wh{\ell}})$ and $(\wh{c}, \wh{X})$ in \cref{algo:single-step-2}.
Here, we are treating the unchanged part of the configuration of $\mathcal{X}_u$ as $Y$.
\begin{align*}
    \frac{\tr(Q_u)}{\tr(Q_v)}
    = \frac{\tr(e^{-\frac{\beta}{2} H^{(S_u)}} (\id + cX) e^{-\frac{\beta}{2} H^{(S_u)}})}{\tr(e^{-\frac{\beta}{2} H^{(S_v)}} (\id + \wh{c}\wh{X}) e^{-\frac{\beta}{2} H^{(S_v)}})}
    \leq 3 \frac{\tr(e^{-\beta H^{(S_u)}})}{\tr(e^{-\beta H^{(S_v)}})}
    \leq 4.5
\end{align*}
The first inequality follows from \cref{fact:basic-observations}, which implies that $\abs{c}, \abs{\wh{c}} \leq 1/2$, so $0.5\id \preceq \id + cX \preceq 1.5 \id$.
The second follows from \cref{claim:psd-ordering} which implies that $\frac{3}{4} \leq \tr(e^{-\beta H^{(S_u)}}) / \tr(e^{-\beta H^{(S_v)}}) \leq \frac{3}{2}$.
Combining this with the analogous lower bound, we deduce
\[
0.1 \leq \frac{w_u w_v'}{w_vw_u'} \leq 10 \,.
\]
Next, we show how to implement the types of queries required in \cref{lem:sample-from-tree}.
For the first type of query, estimating $w_v / w_v'$ to multiplicative error, by \cref{coro:weight-ratio} it suffices to be able to estimate $\tr(Q_v)$.
By \cref{lem:sample-tree-property}, we can write $\tr(Q_v) = \tr((Y_v)_{\supp(Y_v)}) \tr((e^{-\frac{\beta}{2} H^{(S_v)}} (\id + cX) e^{-\frac{\beta}{2} H^{(S_v)}})_{[\qubits] \setminus \supp(Y_v)})$.
Computing the trace associated with $Y_v$ can be done exactly, as $Y_v$ is a tensor product of Hermitian monomials from the configuration $\mathcal{X}_v$.
Using that $\abs{c} \leq 1/2$, the second trace is between $0.5 \tr((e^{-\beta H^{(S_v)}})_{[\qubits] \setminus \supp(Y_v)})$ and $1.5 \tr((e^{-\beta H^{(S_v)}})_{[\qubits] \setminus \supp(Y_v)})$, and this can be estimated via \cref{thm:estimation-log-partition} with $\eta \leftarrow 0.01$.  The running time of answering this query is
\[
O\left( \qubits \cdot (100\qubits)^{ \frac{ 4+\log(\degree)} {\log(\beta_c/\beta) } }  \cdot \locality \cdot \degree^2 \cdot \polylog(\qubits)\right)
\]
For the third type of queries, we simply run \cref{algo:single-step-2}.
By \cref{lem:one-step-runtime}, the running time is $\bigOt{\log(n/\eps) \poly(\locality, \degree)}$.
Note that when running the Markov chain in \cref{lem:sample-from-tree}, we can store the labels $(S_u, \mathcal{X}_u)$ of all of the nodes that we have visited so far.
When visiting a new node, computing the new label $(S, \mathcal{X})$ involves just one execution of \cref{algo:single-step-2} on its parent, which we must have already visited.
Thus, whenever we visit a leaf $v$, can compute $\tr(\sigma(\mathcal{X}_v)) = \kappa(v)/\omega(v)$ exactly, allowing us to answer the second type of query whenever we need to (which is only when the Markov chain visits a leaf).
Thus, we have verified all of the hypotheses of \cref{lem:sample-from-tree}.
Putting everything together, we get that with probability $1 - \delta$, we get a sample from a distribution that is $\eps/4$-close in TV to the leaf distribution of $\kappa$ in time 
\[ 
\bigOt*{(100\qubits)^{ 5 +\frac{  \cdot (4+\log(\degree))}{\log(\beta_c/\beta)} } \log^2(1/\eps) \log(1/\delta) \poly(\locality, \degree) } \,.
\]
By setting $\delta \gets \eps/4$, we get that the output distribution including failure is $\eps/2$-close in TV distance to the leaf distribution of $\kappa$.

A leaf $v$ corresponds to a configuration $\mathcal{X}_v$; by applying \cref{lem:pauli-monomial} to each $(c, X) \in \mathcal{X}_v$, in $\bigOt{\qubits \log(1/\eps)\locality}$ time we can prepare a corresponding stabilizer product state $\ket{\psi}$ such that $\E[\ketbra{\psi}{\psi}] = \sigma(\mathcal{X}_v) / \tr(\sigma(\mathcal{X}_v))$.
Note that, here we use that the sum of the degrees of the monomials involved in $\mathcal{X}_v$ is bounded by $\bigO{\qubits t_{\max}}$, the number of iterations multiplied by the largest the monomial degree can increase per-iteration.
Finally,
\begin{multline*}
    \norm[\Big]{ \frac{e^{-\beta H}}{\tr(e^{-\beta H})} 
    - \E[\ketbra{\psi}{\psi}]}_1
    \leq \norm[\Big]{\frac{e^{-\beta H}}{\tr(e^{-\beta H})}  -  \sum_{v \text{ leaf of } T} \frac{ \kappa(v)}{\sum_{v' \text{ leaf of } T} \kappa(v')} \frac{\sigma(\mathcal{X}_{v})}{\tr(\sigma(\mathcal{X}_{v'}))}}_1 \\
    + \norm[\Big]{\sum_{v \text{ leaf of } T} \frac{ \kappa(v)}{\sum_{v' \text{ leaf of } T} \kappa(v')} \frac{\sigma(\mathcal{X}_{v})}{\tr(\sigma(\mathcal{X}_{v'}))} - \E\bracks[\Big]{\frac{\sigma(\mathcal{X}_v)}{\tr(\sigma(\mathcal{X}_v))}}}_1
    \leq \eps .
\end{multline*}
where the expectation is over the execution of the Markov chain in \cref{lem:sample-from-tree}, conditioned on a valid output, and the call to \cref{lem:pauli-monomial}.
Note that in the last step above we used \cref{coro:true-weight-error} and the fact that the distribution of our output is $\eps/2$ close to the distribution induced by $\kappa$ on the leaves.
Thus, we can simply output $\wh{\rho} = \ketbra{\psi}{\psi}$ and this completes the proof.
The bound we required on the temperature is that $\beta \leq 1/(200 \degree \locality)$; we set $\beta_c$ sufficiently small so that the exponent in the running time can be bounded $5 + \frac{4 + \log(\degree)}{\log(\beta_c / \beta)} \leq 6 + \frac{\log(\degree)}{\log(\beta_c / \beta)}$, and that the $100^{\frac{\log(\degree)}{\log(\beta_c / \beta)}}$ can be folded into the $\poly(\degree)$.
\end{proof}

\section*{Acknowledgments}
\addcontentsline{toc}{section}{Acknowledgments}

The authors thank Nikhil Srivastava, Álvaro Alhambra, Cambyse Rouzé, Daniel Stilck França, and Chi-Fang Chen for enlightening discussions.
In particular, we thank Álvaro for highlighting the structural implications of our work and for drawing a connection to sudden death of entanglement.

AB is supported by Ankur Moitra's ONR grant and the NSF TRIPODS program (award DMS-2022448). AL is supported in part by an NSF GRFP and a Hertz Fellowship.  AM is supported in part by a Microsoft Trustworthy AI Grant, an ONR grant and a David and Lucile Packard Fellowship.  ET is supported by the Miller Institute for Basic Research in Science, University of California Berkeley. 
This work was done in part while the authors were at the Simons Institute for the Theory of Computing.

\printbibliography

\appendix

\section{Cluster expansion}
\label{appendix-cluster-exp}

We give bounds on the Taylor series expansion of the log-partition function of a local Hamiltonian, closely following the cluster expansion argument of Mann and Helmuth~\cite{mh21,kp86}.
We reproduce this for clarity, increased generality, and completeness, when merged this with results on computing the cluster expansion \cite{hkt21}.
We work with the dual interaction graph, as opposed to the interaction hypergraph as in \cite{mh21}, to give slightly tighter bounds.
The arguments here straightforwardly extend to qudit systems, but for the sake of simplicity we present only the qubit setting.

\paragraph{Notation.}
For this section, we will be working with multisets.
For a base set $S$, a multiset of $S$, $\cluster T \colon S \to \mathbb{N}$, is defined in terms of its multiplicity function.
The support of $\cluster T$ is the set of $s \in S$ with $\cluster T(s) > 0$, and the size of $\cluster T$ is denoted $\abs{\cluster T} = \sum_{s \in S} \cluster T(s)$.
We can take unions of multisets, $\cluster T_1 \cup \cluster T_2$, in the usual way, where its multiplicity function is the sum of the two individual multiplicities.
We use the factorial notation $\cluster{T}! = \prod_{s \in S} (\cluster{T}(s)!)$.
For a collection of complex numbers indexed by $S$, $\{x_s\}_{s \in S}$, $x^{\cluster T} = \prod_{s \in S} x_s^{\cluster T(s)}$ is the product of variables associated with $\cluster T$ with multiplicity.

We use the Iverson bracket: for a proposition $P$, $\iver{P}$ equals one when $P$ is true, and zero otherwise.
We denote the normalized trace by $\ntr(A) = \tr(A) / \tr(\id)$.
The symmetric group with order $\qubits!$ is denoted $\sym_\qubits$.

\subsection{Abstract polymer model}

We first describe cluster expansion for abstract polymer models, along with the Kotecký--Preiss condition for convergence of the expansion~\cite{kp86}.
This is fully standard; we follow the formulation of Friedli and Velenik~\cite{fv17}, which in turn follows the analysis of Ueltschi~\cite{ueltschi03}.

\begin{definition}[Abstract polymer model]
    An \emph{abstract polymer model} is a collection of objects $K$ which we refer to as \emph{polymers}, along with associated weights $\{w_\gamma\}_{\gamma \in K}$ in $\C$ and a symmetric relation $\sim$ on $K$ such that, for all $\gamma \in K$, $\gamma \nsim \gamma$.
    Here, $\sim$ denotes compatibility: for example, $\gamma \nsim \gamma$ means that a polymer is incompatible with itself.

    For an ordered list of polymers $\Gamma = (\gamma_1,\dots,\gamma_\ell)$, its \emph{incompatibility graph} $H_\Gamma$ has $\ell$ vertices indexed by $\Gamma$ and an edge between $\gamma_i$ and $\gamma_j$ if and only if $\gamma_i \nsim \gamma_j$.
    $\Gamma$ is a \emph{cluster} if its incompatibility graph is connected.\footnote{
        The definition of a cluster is different from \cite{hkt21}, but this is the definition that is consistent with the rest of the cluster expansion literature.
    }
\end{definition}

\begin{definition}[Partition function of a polymer model, {\cite[Definition 5.2]{fv17}}] \label{def:fv-part}
    A polymer model has an associated \emph{partition function}.
    \begin{align*}
        Z = \sum_{S \subseteq K} \parens[\Big]{\prod_{\gamma \in S} w_\gamma}\iver{S \text{ is pairwise compatible}}.
    \end{align*}
    We say that a subset $S \subseteq K$ is pairwise compatible if, for all distinct $\gamma, \gamma' \in S$, $\gamma \sim \gamma'$.
\end{definition}

\begin{definition}[Ursell function]
    The \emph{Ursell function} $\varphi$ is a function mapping graphs to real numbers.
    For $H = (V, E)$, 
    \begin{align*}
        \varphi(H) = \frac{1}{\abs{V}!}\sum_{\substack{A \subseteq E \\ \text{spanning} \\ \text{connected} }} (-1)^{\abs{A}}
    \end{align*}
\end{definition}

\begin{proposition}[Formal expansion for the log-partition function, {\cite[Proposition 5.3]{fv17}}] \label{prop:fv-series-defn}
    The following formal equality holds.
    \begin{align*}
        \log Z &= \sum_{k = 1}^{\infty} \sum_{(\gamma_1,\dots,\gamma_k) \in K^{k}} \varphi(H_{(\gamma_1,\dots,\gamma_k)}) \prod_{i \in [k]} w_{\gamma_i}
    \end{align*}
\end{proposition}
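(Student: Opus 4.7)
The plan is to establish this as a formal power series identity in the independent indeterminates $\{w_\gamma\}_{\gamma \in K}$, so convergence is not an issue. Since $\gamma \nsim \gamma$ for every $\gamma$, any pairwise-compatible subset $S \subseteq K$ automatically has distinct elements, and I can symmetrize \cref{def:fv-part} into a sum over ordered tuples:
\begin{align*}
Z \;=\; \sum_{k \geq 0} \frac{1}{k!} \sum_{\Gamma = (\gamma_1,\dots,\gamma_k) \in K^k} w^{\Gamma} \prod_{1 \leq i < j \leq k} [\gamma_i \sim \gamma_j].
\end{align*}
I will then write each indicator as $[\gamma_i \sim \gamma_j] = 1 - [\gamma_i \nsim \gamma_j]$, expand the product over pairs, and observe that only the terms corresponding to edges of the incompatibility graph $H_\Gamma$ survive:
\begin{align*}
\prod_{1 \leq i < j \leq k} [\gamma_i \sim \gamma_j] \;=\; \sum_{F \subseteq E(H_\Gamma)} (-1)^{|F|}.
\end{align*}

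Next, I will re-index this sum over $F$ by first choosing the set partition $\pi$ of $[k]$ induced by the connected components of the spanning subgraph $([k], F)$, and then independently choosing a connected spanning subgraph on each block. This regrouping is exactly what the multiplicativity of $(-1)^{|F|}$ over disjoint edge sets allows, and by the definition of the Ursell function the inner sum over connected spanning subgraphs of $H_{\Gamma|_B}$ equals $|B|!\, \varphi(H_{\Gamma|_B})$. Writing $\Gamma|_B$ for the sub-tuple indexed by $B \in \pi$, this gives
\begin{align*}
Z \;=\; \sum_{k \geq 0} \frac{1}{k!} \sum_{\Gamma \in K^k} w^\Gamma \sum_{\pi \,\vdash\, [k]} \prod_{B \in \pi} |B|!\, \varphi(H_{\Gamma|_B}).
\end{align*}

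Finally, I will invoke the multivariate exponential formula for labeled structures, which says that for any symmetric family $\{U_k\}_{k \geq 1}$, setting $W = \sum_{k \geq 1} \tfrac{1}{k!} \sum_{\Gamma \in K^k} U_k(\Gamma) w^\Gamma$ yields
\begin{align*}
\exp(W) \;=\; \sum_{k \geq 0} \frac{1}{k!} \sum_{\Gamma \in K^k} w^\Gamma \sum_{\pi \,\vdash\, [k]} \prod_{B \in \pi} U_{|B|}(\Gamma|_B).
\end{align*}
Specializing $U_k(\Gamma) = k!\, \varphi(H_\Gamma)$ (which is symmetric in its arguments because relabeling permutes the vertices of $H_\Gamma$ without changing its isomorphism class), the right-hand side matches the displayed expression for $Z$, so $\log Z = W = \sum_{k \geq 1} \sum_{\Gamma \in K^k} \varphi(H_\Gamma) w^\Gamma$ after the factorials cancel. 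The conceptually delicate step is the middle one: one must verify carefully that the partition-into-components regrouping is a bijection between pairs $(\Gamma, F)$ and pairs $(\Gamma, \pi, \{F_B\}_{B \in \pi})$ with each $F_B$ a connected spanning subgraph on $\Gamma|_B$. Once this is in hand, the rest is a direct application of the labeled exponential formula, and I expect no further obstacles since we are working entirely at the level of formal power series.
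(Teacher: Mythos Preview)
Your proof is correct and follows the standard derivation of the cluster expansion via inclusion--exclusion and the exponential formula; this is essentially the argument given in the cited reference \cite{fv17}. The paper itself does not supply a proof of this proposition---it is quoted directly from \cite[Proposition~5.3]{fv17} as background---so there is no independent argument in the paper to compare against.
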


\begin{theorem}[Kotecký--Preiss convergence condition, {\cite[Theorem 5.4]{fv17}}] \label{thm:kp-convergence}
    Suppose that there exists a function $\xi \colon K \to \R_{> 0}$ and weights $\{\wh{w}_\gamma\}_{\gamma \in K}$ such that, for each $\gamma_{*} \in K$,
    \begin{align}
        \sum_{\gamma \in K} \abs{\wh{w}_\gamma} e^{\xi(\gamma)} \iver{\gamma \nsim \gamma_{*}} \leq \xi(\gamma_*).
    \end{align}
    Then, for all $\gamma_1 \in K$,
    \begin{align}
        1 + \sum_{k \geq 2} k \sum_{\gamma_2, \dots, \gamma_k} \abs{\varphi(H_{\gamma_1,\dots,\gamma_k})} \prod_{j=2}^k \abs{\wh{w}_{\gamma_j}} \leq e^{\xi(\gamma_1)}.
    \end{align}
\end{theorem}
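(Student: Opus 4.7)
The proof would follow Dobrushin's approach based on Penrose's tree-graph identity: use Penrose to replace each Ursell coefficient by a count of spanning trees, organize the resulting sum by a depth truncation of rooted labeled trees, and then close the loop via induction using the Kotecký--Preiss hypothesis. First, I would apply Penrose's tree-graph inequality
\[
    k! \cdot \abs{\varphi(H_{\gamma_1,\dots,\gamma_k})} \leq \#\braces{\text{spanning trees of } H_{\gamma_1,\dots,\gamma_k}}.
\]
The factor $k$ in the theorem statement combined with the $k!$ from Penrose gives $1/(k-1)!$. Rooting each spanning tree at vertex $1$ (whose label is the fixed $\gamma_1 = \gamma$) then bounds the LHS by
\[
    A(\gamma) := \sum_{k \geq 1} \frac{1}{(k-1)!} \sum_{\substack{T \text{ labeled rooted tree} \\ \text{on } [k], \text{ root } 1}} \sum_{\gamma_2, \dots, \gamma_k} \prod_{(i,j) \in E(T)} \iver{\gamma_i \nsim \gamma_j} \prod_{j=2}^k \abs{\wh{w}_{\gamma_j}}.
\]

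Next, I would restrict to trees of depth at most $d$, obtaining the truncation $A_d(\gamma)$ with base case $A_0(\gamma) = 1$. A rooted tree of depth $\leq d$ with root labeled $\gamma$ decomposes as the root together with $c \geq 0$ unordered child-subtrees, each of depth $\leq d - 1$. Distributing the labels $\{2,\dots,k\}$ across $c$ subtrees of sizes $k_1,\dots,k_c$ with multinomial $\binom{k-1}{k_1,\dots,k_c}$, dividing by $c!$ for unordered children, and matching the subtree-level $(k_i - 1)!$ factor to each $A_{d-1}(\gamma'_i)_{k_i}$ (the size-$k_i$ contribution to $A_{d-1}(\gamma'_i)$), the $1/(k-1)!$ normalization compresses to $\prod_i 1/k_i$. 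Applying the identity $\sum_c X^c/c! = e^X$ yields
\[
    A_d(\gamma) = \exp\parens[\Big]{\sum_{\gamma' \nsim \gamma} \abs{\wh{w}_{\gamma'}} \sum_{k' \geq 1} \frac{A_{d-1}(\gamma')_{k'}}{k'}} \leq \exp\parens[\Big]{\sum_{\gamma' \nsim \gamma} \abs{\wh{w}_{\gamma'}} A_{d-1}(\gamma')},
\]
using $1/k' \leq 1$ in the last step.

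I would then induct on $d$ to conclude $A_d(\gamma) \leq e^{\xi(\gamma)}$. The base case $A_0 \equiv 1 \leq e^{\xi(\gamma)}$ is immediate since $\xi > 0$. Assuming $A_{d-1}(\gamma') \leq e^{\xi(\gamma')}$ for all $\gamma'$ and combining with the previous display and the Kotecký--Preiss hypothesis,
\[
    A_d(\gamma) \leq \exp\parens[\Big]{\sum_{\gamma' \nsim \gamma} \abs{\wh{w}_{\gamma'}} e^{\xi(\gamma')}} \leq e^{\xi(\gamma)}.
\]
Since $A_d \nearrow A$ as $d \to \infty$, monotone convergence lifts this to $A(\gamma) \leq e^{\xi(\gamma)}$, completing the proof.

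The main obstacle is the combinatorial bookkeeping in the second step: one must verify that the interplay between the $1/(k-1)!$ normalization, the label-distribution multinomial, the $1/c!$ symmetry factor for unordered children, and the $(k_i - 1)!$ absorbed into each child's $A_{d-1}$ contribution collapses to the clean factor $\prod_i 1/k_i$. The crucial observation is that these factors are each at most $1$, so they can be discarded to produce the clean recursive inequality $A_d(\gamma) \leq \exp(\sum_{\gamma' \nsim \gamma} \abs{\wh{w}_{\gamma'}} A_{d-1}(\gamma'))$. Once this reduction is in place, the remaining induction and the limit step are routine.
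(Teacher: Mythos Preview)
The paper does not provide its own proof of this theorem; it is cited as \cite[Theorem~5.4]{fv17} and used as a black box. Your outline is the standard argument (Penrose tree--graph bound, then a recursive decomposition of rooted labeled trees and induction on a truncation parameter), which is precisely the Ueltschi-style proof that Friedli--Velenik present. So the approach is correct and matches the cited source.

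One remark on your bookkeeping: you claim the factors collapse to $\prod_i 1/k_i$, which you then discard via $1/k_i \leq 1$. You are missing a factor of $k_i$ per child-subtree, arising from the choice of which of the $k_i$ labels in $S_i$ serves as the root (i.e., the child of vertex~$1$). Once this is included, the computation
\[
    \frac{1}{(k-1)!}\cdot\binom{k-1}{k_1,\dots,k_c}\cdot\frac{1}{c!}\cdot\prod_{i} k_i\,(k_i-1)! \;=\; \frac{1}{c!}
\]
shows the combinatorics collapse \emph{exactly}, yielding the identity
\[
    A_d(\gamma) \;=\; \exp\parens[\Big]{\sum_{\gamma' \nsim \gamma} \abs{\wh{w}_{\gamma'}}\, A_{d-1}(\gamma')}
\]
rather than merely an inequality. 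Your weaker bound still closes the induction, so your argument is valid as written, but the equality is what the cited proof actually obtains.
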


If $\wh{w} = w$, then this implies convergence of the cluster expansion series (provided the set of polymers is finite).
We will take $\wh{w}_\gamma = w_\gamma e^{b(\gamma)}$ for some function $b$ to get a stronger bound on the convergence of the series.

\subsection{Quantum spin system as an abstract polymer model}

This section follows the work of Mann and Helmuth~\cite{mh21}, which in turn follows Neto\v{c}n\'{y} and Redig~\cite{nr04}.

\begin{lemma}[{Following \cite[Lemma 2]{mh21}}] \label{lem:mh2}
    Let $H = \sum_{a=1}^\terms \lambda_a E_a$ be a low-intersection Hamiltonian with dual interaction graph $\graph$ (see \cref{def:low-intersection-ham}).
    Then $\ntr \exp(-\beta H)$ equals the partition function of the polymer model where $K$ is the space of non-empty multisets of $[\terms]$ whose support is connected on $\graph$; $\gamma \sim \gamma'$ if and only if their distance on $\graph$ is at least two; and the weight function is, for a polymer $\gamma$ with $\abs{\gamma}$ elements,
    \begin{align*}
        w_\gamma = \frac{(-\beta)^{\abs{\gamma}}}{\abs{\gamma}! \gamma!} \ntr\parens[\Big]{\sum_{\sigma \in \sym_{\abs{\gamma}}} \prod_{i=1}^{\abs{\gamma}} E_{\gamma_{\sigma(i)}}} \lambda^\gamma
    \end{align*}
\end{lemma}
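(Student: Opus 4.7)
The plan is to start from the Taylor expansion of $\exp(-\beta H)$, reindex the terms by multisets of $[\terms]$ (so that we can factor out all the symmetries of the Paulis), and then factorize the trace over the connected components of the graph induced on $\graph$. This matches the definition of the polymer partition function in \cref{def:fv-part}.

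First, I would expand
\begin{equation*}
    \ntr(e^{-\beta H}) = \sum_{k \geq 0}\frac{(-\beta)^k}{k!}\sum_{(a_1,\dots,a_k) \in [\terms]^k} \lambda_{a_1}\cdots\lambda_{a_k}\ntr(E_{a_1}\cdots E_{a_k}).
\end{equation*}
Next, I would reorganize the inner sum by the multiset $\cluster{T}$ of terms that appears in $(a_1,\dots,a_k)$. For any fixed multiset $\cluster{T}$ of size $k$, the $k!/\cluster{T}!$ distinct ordered tuples $(a_1,\dots,a_k)$ with $\operatorname{mset}(a) = \cluster T$ are in bijection with the elements of $\sym_k / (\text{symmetries of }\cluster T)$, so
\begin{equation*}
    \sum_{(a_1,\dots,a_k):\,\operatorname{mset}(a)=\cluster T} E_{a_1}\cdots E_{a_k} = \frac{1}{\cluster T!}\sum_{\sigma \in \sym_k} \prod_{i=1}^{k} E_{\cluster T_{\sigma(i)}}.
\end{equation*}
All factors $\lambda_{a_i}$ can be absorbed into a single $\lambda^{\cluster T}$. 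Summing over all multisets $\cluster T$ (including the empty one, which contributes $1$), this gives
\begin{equation*}
    \ntr(e^{-\beta H}) = \sum_{\cluster T} \frac{(-\beta)^{\abs{\cluster T}}\lambda^{\cluster T}}{\abs{\cluster T}!\,\cluster T!} \ntr\!\parens[\Big]{\sum_{\sigma \in \sym_{\abs{\cluster T}}}\prod_{i=1}^{\abs{\cluster T}}E_{\cluster T_{\sigma(i)}}}.
\end{equation*}

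Now I would decompose $\cluster T$ into its connected components $\cluster T_1,\dots,\cluster T_r$ in $\graph$ (where connectedness is with respect to the support of the multiset). Since terms from different components have disjoint supports, their Paulis commute, so the product $\prod_i E_{\cluster T_{\sigma(i)}}$ factorizes across the components, keeping only the relative order of indices within each component. Counting the $\sigma \in \sym_{\abs{\cluster T}}$ that induce a fixed tuple of within-component orderings $(\tau_1,\dots,\tau_r)$ gives a multinomial $\abs{\cluster T}!/\prod_c \abs{\cluster T_c}!$, and the normalized trace factors over disjoint supports as $\ntr(A \otimes B) = \ntr(A)\ntr(B)$. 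Combined with $\cluster T! = \prod_c \cluster T_c!$ and $\lambda^{\cluster T} = \prod_c \lambda^{\cluster T_c}$, the full summand factorizes exactly as $\prod_c w_{\cluster T_c}$ with $w_\gamma$ as defined in the statement.

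Finally, I would observe that multisets $\cluster T$ of $[\terms]$ are in bijection with finite pairwise-compatible sets of polymers $\{\gamma_1,\dots,\gamma_r\}$ from $K$: given $\cluster T$, take the connected components (which have connected support by construction, and are pairwise at distance $\geq 2$ in $\graph$ by maximality, so are pairwise compatible and distinct); conversely, pairwise compatibility ensures that the union $\gamma_1 \cup \dots \cup \gamma_r$ has exactly these $\gamma_i$ as its connected components. Under this bijection, the empty multiset corresponds to the empty set and contributes the factor $1$. Substituting gives $\ntr(e^{-\beta H}) = \sum_{S \subseteq K} (\prod_{\gamma \in S} w_\gamma)\iver{S\text{ pairwise compatible}} = Z$, as desired. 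The main bookkeeping obstacle is keeping the combinatorial factors $1/\abs{\cluster T}!$, $1/\cluster T!$, and the multinomial from the component split all consistent; everything else is mechanical once the Paulis-commute-across-components observation is in place.
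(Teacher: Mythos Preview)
Your proposal is correct and follows essentially the same approach as the paper's proof sketch: expand the exponential, use $\ntr(XY)=\ntr(X)\ntr(Y)$ for disjointly supported $X,Y$ to factor over connected components, and then match the resulting combinatorial factors to the polymer weights. Your version is in fact more detailed than the paper's sketch (which defers the bookkeeping of the $1/\abs{\cluster T}!$, $1/\cluster T!$, and multinomial factors), and your explicit bijection between multisets and pairwise-compatible sets of polymers is exactly the ``grouping terms appropriately and counting'' step the paper alludes to.
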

Combining this with the formal equality in \cref{prop:fv-series-defn}, we deduce the following expression for the log-partition function of the Hamiltonian,
\begin{align*}
    & \log \ntr \exp(-\beta H) \\
    &= \sum_{k = 1}^{\infty} \sum_{(\gamma_1,\dots,\gamma_k) \in K^{k}} \varphi(H_{(\gamma_1,\dots,\gamma_k)}) \prod_{i \in [k]} w_{\gamma_i} \\
    &= \sum_{k = 1}^{\infty} \sum_{(\gamma_1,\dots,\gamma_k) \in K^{k}} \varphi(H_{(\gamma_1,\dots,\gamma_k)}) \prod_{i \in [k]} \parens[\Big]{\frac{(-\beta)^{\abs{\gamma_i}}}{\abs{\gamma_i}! \gamma_i!} \ntr\parens[\Big]{\sum_{\sigma \in \sym_{\abs{\gamma_i}}} \prod_{j=1}^{\abs{\gamma_i}} E_{(\gamma_i)_{\sigma(j)}}} \lambda^{\gamma_i}} \\
    &= \sum_{k = 1}^{\infty} \sum_{(\gamma_1,\dots,\gamma_k) \in K^{k}} (-\beta \lambda)^{\gamma_1 \cup \dots \cup \gamma_k} \varphi(H_{(\gamma_1,\dots,\gamma_k)}) \prod_{i \in [k]} \parens[\Big]{\frac{1}{\abs{\gamma_i}! \gamma_i!} \ntr\parens[\Big]{\sum_{\sigma \in \sym_{\abs{\gamma_i}}} \prod_{j=1}^{\abs{\gamma_i}} E_{(\gamma_i)_{\sigma(j)}}}}.
\end{align*}
This cluster expansion series is closely related to the multivariate Taylor series expansion of $\log \ntr \exp(-\beta H)$: if we group together all the terms with the same monomial $\lambda^{\gamma}$, we get the Taylor series expansion around $\lambda = (0,\dots,0)$.

\begin{proof}[Proof sketch.]
We can write
\begin{align*}
    \ntr \exp(-\beta H)
    &= \sum_{k \geq 0} \frac{1}{k!}\ntr((-\beta H)^k) \\
    &= \sum_{k \geq 0} \frac{(-\beta)^k}{k!} \sum_{a_1,\dots,a_k} \ntr( E_{a_1}\dots E_{a_k}) \lambda_{a_1}\dots\lambda_{a_k}.
\end{align*}
We want to show that this is equal to the expression for the partition function from \cref{def:fv-part}:
\begin{align*}
    \sum_{S \subseteq K} \iver{S \text{ is pairwise compatible}}\prod_{\gamma \in S} \frac{(-\beta)^{\abs{\gamma}}}{\abs{\gamma}! \gamma!} \ntr\parens[\Big]{\sum_{\sigma \in \sym_{\abs{\gamma}}} \prod_{i=1}^{\abs{\gamma}} E_{\gamma_{\sigma(i)}}} \lambda^\gamma.
\end{align*}
Since $\ntr(XY) = \ntr(X)\ntr(Y)$ when $X$ and $Y$ have disjoint supports, we can split up $\ntr(E_{a_1}\dots E_{a_k})$ into a product of vertex-disjoint parts, each of which corresponds to a polymer $\gamma$.
Because these parts are vertex-disjoint, their distance on $\graph$ is at least two, and so these polymers are compatible.
There is still an underlying ordering $a_1,\dots,a_k$, but after grouping terms appropriately and counting, we can get a sum over ``unordered'' polymers which take the desired form.
\end{proof}

Quantum many-body systems satisfy the Kotecký--Preiss condition at high temperature.\footnote{
    We note here a small discrepancy in the proof of \cite[Lemma 4]{mh21}: there, they use that the number of vertices in a polymer $\gamma$ is at most $\abs{\gamma} + 1$, but this fact is only true for two-local Hamiltonians.
    The analysis presented here accounts for this, getting a critical temperature which is slightly weaker (by a factor of about $\locality$) than what is claimed in \cite{mh21}.
}

\begin{fact} \label{fact:cluster-count}
By \cite[Eqs.\ 29, 36]{hkt21}, there are at most
\begin{align*}
    & \sum_{v=1}^w \binom{v(\degree - 1) + 1}{v - 1} \frac{\degree}{v(\degree - 1) + 1}\binom{w-1}{v-1} \\
    & = \sum_{v=1}^w \binom{v(\degree - 1)}{v - 1} \frac{\degree}{v(\degree - 2) + 3}\binom{w-1}{v-1} \\
    & \leq \sum_{v=1}^w \binom{v(\degree - 1)}{v - 1} \binom{w-1}{v-1} \\
    & \leq e \sum_{v=1}^w (e(\degree - 1))^{v-1} \binom{w-1}{v-1} \\
    & = e (1 + e(\degree - 1))^{w-1}
\end{align*}
clusters $\gamma$ such that $a \in \gamma$ and $\abs{\gamma} = w$.
\end{fact}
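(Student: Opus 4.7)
The plan is to reduce the bound to the standard cluster-counting estimate from the polymer literature and then push through the chain of inequalities by elementary binomial manipulations.  First I would observe that a cluster $\gamma$ with $a \in \gamma$ and $\abs{\gamma} = w$ is fully determined by its underlying set $U$ of distinct terms---which must be connected in $\graph$ and contain $a$---together with a multiplicity vector on $U$ whose entries are positive integers summing to $w$.  If $\abs{U} = v$, the number of valid multiplicity vectors is $\binom{w-1}{v-1}$ by stars and bars, and the number of $v$-vertex connected subsets of $\graph$ through $a$ is bounded by the analogous count in a $\degree$-regular infinite tree rooted at $a$, for which the Fuss--Catalan-type expression $\binom{v(\degree-1)+1}{v-1}\cdot\frac{\degree}{v(\degree-1)+1}$ is classical; this is the content cited from \cite[Eqs.\ 29, 36]{hkt21}.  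Summing $v$ from $1$ to $w$ and multiplying the two counts yields the first line of the displayed chain.

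The three subsequent steps I would handle as routine algebra.  For the first equality I would apply the Pascal-style identity $\binom{v(\degree-1)+1}{v-1} = \binom{v(\degree-1)}{v-1}\cdot\frac{v(\degree-1)+1}{v(\degree-2)+2}$, so that the factor of $v(\degree-1)+1$ in the denominator cancels (up to a minor off-by-one in the printed formula that does not affect the subsequent bound).  For the next inequality I would note that $\degree/(v(\degree-2)+O(1)) \leq 1$ for $v \geq 1$ whenever $\degree \geq 2$, and that in the degenerate cases $\degree \leq 1$ the binomial $\binom{v(\degree-1)}{v-1}$ already vanishes for $v \geq 2$, so dropping this factor is valid uniformly.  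For the third inequality I would apply the standard estimate $\binom{n}{k} \leq (en/k)^{k}$ with $n = v(\degree-1)$ and $k = v-1$, producing $\binom{v(\degree-1)}{v-1} \leq (e(\degree-1))^{v-1}\cdot(v/(v-1))^{v-1}$, and then use $(1+1/(v-1))^{v-1} \leq e$ to pull out the leading factor of $e$.  The final equality is the binomial theorem $\sum_{v=1}^{w}\binom{w-1}{v-1}x^{v-1} = (1+x)^{w-1}$ applied at $x = e(\degree-1)$.

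The only conceptually substantive ingredient is the first line, namely the Fuss--Catalan-type bound on the number of connected $v$-vertex subsets of a bounded-degree graph through a fixed vertex.  This bound is typically proved either by encoding each such subset as a walk on the $\degree$-ary tree or via a branching-process argument, and is exactly what \cite{hkt21} carries out in their Eqs.\ 29 and 36.  Once this input is granted, the remaining three inequalities are purely algebraic and require no further combinatorial work.
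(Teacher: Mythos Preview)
Your proposal is correct and follows exactly the approach implicit in the displayed chain: the first line is the cited Fuss--Catalan tree count times stars-and-bars for multiplicities, and the remaining steps are the elementary binomial manipulations you describe. You also correctly spotted that the denominator in the second line should read $v(\degree-2)+2$ rather than $v(\degree-2)+3$; this is a typo in the paper and, as you note, does not affect the subsequent inequality since the factor is still at most $1$.
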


\begin{lemma}[{\cite[Lemma 4]{mh21}}] \label{lem:mh4}
    The polymer model in \cref{lem:mh2} satisfies the condition in \cref{thm:kp-convergence} for $\wh{w}_\gamma = w_\gamma (\frac{\beta_c}{\beta})^{\abs{\gamma}}$ and $\xi(\gamma) = \abs{\gamma}$, provided $\beta < \beta_c = \frac{1}{e(e+1)(1 + e(\degree - 1))}$:
    \begin{align} \label{eq:kp-assumption}
        \sum_{\gamma \in K} \abs{\wh{w}_\gamma} e^{\xi(\gamma)} \iver{\gamma \nsim \gamma_{*}} \leq \xi(\gamma_*).
    \end{align}
    This means that, for all $\gamma_1 \in K$,
    \begin{align} \label{eq:kp-conclusion}
        1 + \sum_{k \geq 2} k \sum_{\gamma_2, \dots, \gamma_k} \abs{\varphi(H_{\gamma_1,\dots,\gamma_k})} \prod_{j=2}^k \abs{w_{\gamma_j}}\parens[\big]{\frac{\beta_c}{\beta}}^{\abs{\gamma_j}} \leq e^{\abs{\gamma_1}}.
    \end{align}
\end{lemma}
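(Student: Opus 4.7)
My plan is to verify the Kotecký--Preiss hypothesis \eqref{eq:kp-assumption} directly; the tree-function bound \eqref{eq:kp-conclusion} then follows immediately from \cref{thm:kp-convergence}. The strategy mirrors the cluster-expansion analysis of Mann--Helmuth~\cite{mh21}, adapted to the dual interaction graph $\graph$ rather than the interaction hypergraph.

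First, I would prove a crude weight bound. Each $E_a$ is a tensor product of Paulis and hence unitary, so each of the $|\gamma|!$ summands $\prod_i E_{\gamma_{\sigma(i)}}$ has operator norm $1$ and normalized trace bounded by $1$ in absolute value. With $|\lambda^\gamma| \leq 1$, this yields
\[
|w_\gamma| \;\leq\; \frac{\beta^{|\gamma|}}{\gamma!}, \qquad |\wh w_\gamma|\,e^{\xi(\gamma)} \;=\; |w_\gamma|\Bigl(\tfrac{e\beta_c}{\beta}\Bigr)^{|\gamma|} \;\leq\; \frac{(e\beta_c)^{|\gamma|}}{\gamma!}.
\]

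Second, I would reorganize the sum in \eqref{eq:kp-assumption} by a canonical witness. The incompatibility $\gamma \nsim \gamma_*$ is equivalent to $\supp(\gamma) \cap N[\supp(\gamma_*)] \neq \varnothing$, where $N[\cdot]$ denotes the closed neighborhood on $\graph$. For each such $\gamma$ I would pick the lexicographically smallest $b$ in this intersection; since $|N[\supp(\gamma_*)]| \leq |\gamma_*|(1+\degree)$, the left-hand side of \eqref{eq:kp-assumption} is at most $|\gamma_*|(1+\degree) \cdot \sup_{b \in [\terms]} S_b$, where
\[
S_b \;:=\; \sum_{\substack{\gamma \colon b \in \supp(\gamma) \\ \supp(\gamma) \text{ connected on } \graph}} \frac{(e\beta_c)^{|\gamma|}}{\gamma!}.
\]

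Third, I would bound $S_b$ by stratifying on the support $T := \supp(\gamma)$ of size $v$. Decoupling over multiplicities gives $\sum_{\gamma : \supp(\gamma) = T}(e\beta_c)^{|\gamma|}/\gamma! = (e^{e\beta_c}-1)^{v}$, and \cref{fact:cluster-count} bounds the number of connected $T \ni b$ of size $v$ by $e(1+e(\degree-1))^{v-1}$. This exhibits $S_b$ as a geometric series in $v$ of common ratio $(1+e(\degree-1))(e^{e\beta_c}-1)$. Using $e^x - 1 \leq x/(1-x)$ for $x < 1$ to control the exponential, the stated threshold $\beta_c = 1/(e(e+1)(1+e(\degree-1)))$ is calibrated so that the ratio is strictly less than $1$, and the whole estimate collapses to a single algebraic inequality in $\degree$. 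The main obstacle I anticipate is the arithmetic in this last step: the three factors $(1+\degree)$, $(1+e(\degree-1))$, and $(e^{e\beta_c}-1)$ must balance so that $|\gamma_*|(1+\degree) S_b \leq |\gamma_*|$ holds \emph{uniformly} in $\degree \geq 1$. The crude neighborhood bound $|N[\supp(\gamma_*)]| \leq |\gamma_*|(1+\degree)$ loses an extra factor at small $\degree$; I expect that tightening the witness choice — placing $b$ inside $\supp(\gamma_*)$ itself and deferring the extra $(1+\degree)$ to the cluster-counting step, where it is naturally absorbed by the already-present tree-function bound — will provide the slack needed to cover the small-$\degree$ regime at the stated $\beta_c$.
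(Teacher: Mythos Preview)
Your overall plan is the paper's plan, and your weight bound $|w_\gamma|\le \beta^{|\gamma|}/\gamma!$ is exactly what the paper derives. The difference lies in how you handle the witness step, and this is where your proposal currently has a real (though small) gap.

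As you yourself flag, the bound $|N[\supp(\gamma_*)]|\le |\gamma_*|(1+\degree)$ puts a stray $(1+\degree)$ in front of $S_b$, and the arithmetic then does not close at the stated $\beta_c$ for small $\degree$ (for $\degree=1,2$ one gets a number strictly larger than $1$). Your suggested fix---putting the witness $b$ in $\supp(\gamma_*)$ rather than in its neighborhood---is exactly right, and is precisely what the paper does. The paper proves \eqref{eq:kp-assumption} first for a \emph{singleton} polymer $\gamma_{\{a\}}$, then observes that $\gamma\nsim\gamma_*$ implies $\gamma\nsim\gamma_{\{a\}}$ for some $a\in\gamma_*$, so the general case is at most $\sum_{a\in\gamma_*}(\cdots)\le|\gamma_*|$ with no $(1+\degree)$ prefactor.

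The remaining ingredient you are missing is the clean way to count polymers incompatible with a single vertex $a$. Rather than keeping the $\gamma!$ and passing through $(e^{e\beta_c}-1)^v$, the paper drops the $\gamma!$ immediately (so $|\wh w_\gamma|e^{\xi(\gamma)}\le (e\beta_c)^{|\gamma|}$) and uses the injection $\gamma'\mapsto \gamma'\cup\{a\}$: every connected multiset adjacent to $a$ of size $w-1$ becomes, upon adding one copy of $a$, a connected multiset of size $w$ \emph{containing} $a$, and the latter are counted directly by \cref{fact:cluster-count}. This absorbs the ``extra neighborhood'' into the polymer size, exactly the effect you were hoping for, and the sum becomes $e\sum_{w\ge 1}\bigl(e\beta_c(1+e(\degree-1))\bigr)^w = e\sum_{w\ge 1}(e+1)^{-w}=1$, with no further inequalities needed.
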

\begin{proof}
First, we bound the weight $w_\gamma$ of a polymer $\gamma$.
\begin{align} \label{eq:polymer-weight-bound}
    \abs{w_\gamma} &= \abs[\Big]{\frac{(-\beta)^{\abs{\gamma}}}{\abs{\gamma}! \gamma!} \ntr\parens[\Big]{\sum_{\sigma \in S_{\abs{\gamma}}} \prod_{i=1}^{\abs{\gamma}} E_{\gamma_{\sigma(i)}}} \lambda^\gamma}
    \leq \frac{\beta^{\abs{\gamma}}}{\abs{\gamma}! \gamma!} \abs{\gamma}! \abs{\lambda^\gamma}
    \leq \frac{\beta^{\abs{\gamma}}}{\gamma!}
    \leq \beta^{\abs{\gamma}}
\end{align}
This bound implies that $\abs{\wh{w}_\gamma} = \abs{w_\gamma}(\frac{\beta_c}{\beta})^{\abs{\gamma}} \leq \beta_c^{\abs{\gamma}}$.
Now, we bound the expression in the lemma statement.
First consider a polymer $\gamma_{\{a\}}$ which corresponds to the multiset containing a single term $a \in [\terms]$.
\begin{align*}
    \sum_{\gamma \in K} \abs{\wh{w}_\gamma} e^{\xi(\gamma)} \iver{\gamma \nsim \gamma_{\{a\}}}
    &\leq \sum_{\gamma \in K} (e\beta_c)^{\abs{\gamma}} \iver{\gamma \nsim \gamma_{\{a\}}}
\end{align*}
The set of polymers which are incompatible with $\gamma_{\{a\}}$ are the set of connected multisets of $[\terms]$ which are adjacent to $a$.
By \cref{fact:cluster-count}, there are at most $e (1 + e(\degree - 1))^{w-1}$ polymers $\gamma$ such that $a \in \gamma$ and $\abs{\gamma} = w$, and this is equal to the number of polymers $\gamma'$ with $\abs{\gamma'} = w - 1$ and which are adjacent to $a$ (provided $w > 1$).
So,
\begin{align*}
    \sum_{\gamma \in K} \abs{\wh{w}_\gamma} e^{\xi(\gamma)} \iver{\gamma \nsim \gamma_{\{a\}}}
    &\leq \sum_{w = 2}^{\infty} (e\beta_c)^{w-1} e(1 + e(\degree - 1))^{w - 1} \\
    &= e \sum_{w = 1}^{\infty} (e\beta_c(1 + e(\degree - 1)))^{w} \\
    &= e \sum_{w = 1}^{\infty} \frac{1}{(e+1)^w} = 1
\end{align*}
where the last line uses that $\beta_c = \frac{1}{e(e+1)(1 + e(\degree - 1))}$.
Consequently, if we take a general $\gamma_*$, then we get the desired bound:
\begin{equation*}
    \sum_{\gamma \in K} \abs{\wh{w}_\gamma} e^{\xi(\gamma)} \iver{\gamma \nsim \gamma_*}
    \leq \sum_{a \in \gamma_*} \sum_{\gamma \in K} \abs{\wh{w}_\gamma} e^{\xi(\gamma)} \iver{\gamma \nsim \gamma_{\{a\}}} \leq \abs{\gamma_*}. \qedhere
\end{equation*}
\end{proof}

\begin{corollary} \label{cor:cluster-truncation}
    When $\beta < \beta_c = \frac{1}{e(e+1)(1 + e(\degree - 1))}$, the cluster expansion of the polymer model in \cref{lem:mh2} can be truncated.
    Then the error term from truncation at degree $d$ is bounded,
    \begin{align*}
        \abs[\Big]{\sum_{\ell = d+1}^\infty \sum_{k = 1}^{\infty} \sum_{\substack{(\gamma_1,\dots,\gamma_k) \in K^{k} \\ \abs{\gamma_1} + \dots + \abs{\gamma_k} = \ell}} \varphi(H_{(\gamma_1,\dots,\gamma_k)}) \prod_{i \in [k]} w_{\gamma_i}}
        \leq \qubits \frac{(\beta/\beta_c)^{d+1}}{1 - \beta/\beta_c}.
    \end{align*}
\end{corollary}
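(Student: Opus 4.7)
The plan is to follow the standard Kotecký--Preiss tail argument for cluster expansions. First, combining \cref{prop:fv-series-defn} with \cref{lem:mh2}, I would write $\log \ntr e^{-\beta H}$ as the cluster expansion, so that the quantity in the statement is the tail sum over ordered clusters $\Gamma = (\gamma_1,\dots,\gamma_k)$ of total size $|\Gamma| = \sum_i |\gamma_i| = \ell > d$. The key substitution is $|w_\gamma| = (\beta/\beta_c)^{|\gamma|} |\wh{w}_\gamma|$ coming from the choice $\wh{w}_\gamma = w_\gamma (\beta_c/\beta)^{|\gamma|}$ made in \cref{lem:mh4}; this factors a common $(\beta/\beta_c)^\ell$ out of each size-$\ell$ slice, reducing the task to bounding the per-slice sum
\[
A_\ell \;:=\; \sum_{k\ge 1}\sum_{\Gamma:\,|\Gamma|=\ell} |\varphi(H_\Gamma)| \prod_{i=1}^k |\wh{w}_{\gamma_i}|
\]
uniformly in $\ell$ by $\qubits$ (up to constants absorbed into $\beta_c$).

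Second, to bound $A_\ell$, I would distinguish the first polymer $\gamma_1$ in each ordered tuple and apply \eqref{eq:kp-conclusion} of \cref{lem:mh4}; since $k \ge 1$, dropping the explicit factor of $k$ on the left-hand side yields $\sum_{k\ge 1}\sum_{\gamma_2,\dots,\gamma_k} |\varphi(H_\Gamma)| \prod_{j=2}^k |\wh{w}_{\gamma_j}| \le e^{|\gamma_1|}$. This reduces the problem to controlling $\sum_{\gamma_1} |\wh{w}_{\gamma_1}| e^{|\gamma_1|}$. Every polymer has non-empty site support in $[\qubits]$, so I would anchor each $\gamma_1$ to its smallest-indexed site $s$; for each such $s$, the KP assumption \eqref{eq:kp-assumption} applied with $\gamma_\ast = \gamma_{\{a\}}$ gives $\sum_{\gamma \ni a}|\wh{w}_\gamma|e^{|\gamma|} \le 1$, and a union bound over the terms $a$ with $s \in \supp(H_a)$---such terms pairwise share $s$ and thus form a clique in the dual interaction graph, hence number at most $\degree+1$---gives a constant bound on the contribution of polymers anchored at $s$. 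Summing over $s \in [\qubits]$ yields $A_\ell = O(\qubits)$. Finally, I would close by summing the geometric series $\sum_{\ell > d}(\beta/\beta_c)^\ell = (\beta/\beta_c)^{d+1}/(1-\beta/\beta_c)$ to recover the stated tail estimate.

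The main obstacle is the anchoring step: the raw KP bound naturally anchors polymers to \emph{terms} and therefore gives an upper bound linear in $\terms$ (the number of Hamiltonian terms), which can exceed $\qubits$. Re-anchoring to \emph{sites} introduces a site-degree correction, which one handles via the clique structure of terms meeting at a common site; the resulting $\degree+1$ factor can be absorbed into a slightly smaller choice of $\beta_c$, producing the clean $\qubits$-factor in the statement. The remainder of the argument---interchanging sums, taking absolute values term by term inside $\varphi$, and extracting the geometric tail---is mechanical bookkeeping of prefactors.
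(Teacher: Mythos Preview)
Your overall architecture matches the paper's proof exactly: pull out $(\beta/\beta_c)^\ell$ via the substitution $|w_\gamma| = (\beta/\beta_c)^{|\gamma|}|\wh w_\gamma|$, isolate $\gamma_1$, bound the inner sum by $e^{|\gamma_1|}$ using \eqref{eq:kp-conclusion}, then control $\sum_{\gamma_1}|\wh w_{\gamma_1}|e^{|\gamma_1|}$ via \eqref{eq:kp-assumption}, and finish with a geometric series.

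The one substantive discrepancy is the anchoring step, and your proposed fix there does not quite close the gap. Anchoring each polymer to its smallest site and then union-bounding over the $\leq \degree+1$ terms touching that site gives $\sum_{\gamma_1}|\wh w_{\gamma_1}|e^{|\gamma_1|}\leq (\degree+1)\qubits$, hence a tail of $(\degree+1)\qubits\,(\beta/\beta_c)^{d+1}/(1-\beta/\beta_c)$. You cannot ``absorb the $\degree+1$ into a smaller $\beta_c$'': the statement fixes $\beta_c$, and in any case shrinking $\beta_c$ would \emph{increase} $\beta/\beta_c$ and worsen the bound rather than cancel a prefactor.

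The paper removes this factor by anchoring directly to terms rather than to sites. Choose $T\subseteq[\terms]$ with $|T|\leq\qubits$ by picking, for each site $i$, one term whose support contains $i$; then every term shares a site with some element of $T$, so $T$ is a dominating set in the dual interaction graph $\graph$. Consequently every polymer $\gamma_1$ is incompatible with $\gamma_{\{a\}}$ for some $a\in T$, and \eqref{eq:kp-assumption} with $\gamma_* = \gamma_{\{a\}}$ gives $\sum_{\gamma_1}|\wh w_{\gamma_1}|e^{|\gamma_1|}\leq |T|\leq\qubits$ on the nose. Replacing your site-anchoring paragraph with this dominating-set argument yields the stated bound exactly; everything else in your plan is correct.
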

\begin{proof}
This follows from both parts of \cref{lem:mh4}.
First, let $T \subseteq [\terms]$ be a set of terms such that $\abs{T} \leq \qubits$ and every term is a neighbor of some element of $T$ in $\graph$.
\begin{align*}
    & \abs[\Big]{\sum_{\ell = d+1}^\infty \sum_{k = 1}^{\infty} \sum_{\substack{(\gamma_1,\dots,\gamma_k) \in K^{k} \\ \abs{\gamma_1} + \dots + \abs{\gamma_k} = \ell}} \varphi(H_{(\gamma_1,\dots,\gamma_k)}) \prod_{i \in [k]} w_{\gamma_i}} \\
    &\leq \sum_{\ell = d+1}^\infty \sum_{k = 1}^{\infty} \sum_{\substack{(\gamma_1,\dots,\gamma_k) \in K^{k} \\ \abs{\gamma_1} + \dots + \abs{\gamma_k} = \ell}} \abs{\varphi(H_{(\gamma_1,\dots,\gamma_k)})} \prod_{i \in [k]} \abs{w_{\gamma_i}} \\
    &= \sum_{\ell = d+1}^\infty \sum_{\substack{\gamma_1 \in K \\ \abs{\gamma_1} \leq \ell}} \abs{w_{\gamma_1}} \parens[\big]{\frac{\beta_c}{\beta}}^{-\ell + \abs{\gamma_1}} \sum_{k = 1}^{\infty} \sum_{\substack{(\gamma_2,\dots,\gamma_k) \in K^{k-1} \\ \abs{\gamma_1} + \dots + \abs{\gamma_k} = \ell}} \abs{\varphi(H_{(\gamma_1,\dots,\gamma_k)})} \prod_{i = 2}^k \abs{w_{\gamma_i}}\parens[\big]{\frac{\beta_c}{\beta}}^{\abs{\gamma_i}} \\
    &\leq \sum_{\ell = d+1}^\infty \sum_{\substack{\gamma_1 \in K \\ \abs{\gamma_1} \leq \ell}} \abs{w_{\gamma_1}} \parens[\big]{\frac{\beta_c}{\beta}}^{-\ell + \abs{\gamma_1}} e^{\abs{\gamma_1}}
    \tag*{by \cref{eq:kp-conclusion}}\\
    &= \sum_{\ell = d+1}^\infty \parens[\big]{\frac{\beta_c}{\beta}}^{-\ell} \sum_{\substack{\gamma_1 \in K \\ \abs{\gamma_1} \leq \ell}} \abs{\wh{w}_{\gamma_1}} e^{\abs{\gamma_1}} \\
    &\leq \sum_{\ell = d+1}^\infty \parens[\big]{\frac{\beta_c}{\beta}}^{-\ell} \sum_{a \in T} \sum_{\substack{\gamma_1 \in K \\ \abs{\gamma_1} \leq \ell}} \abs{\wh{w}_{\gamma_1}} e^{\abs{\gamma_1}} \iver{\gamma_1 \nsim \gamma_{\{a\}}} \tag*{by property of $T$} \\
    &\leq \sum_{\ell = d+1}^\infty \parens[\big]{\frac{\beta_c}{\beta}}^{-\ell} \abs{T} \tag*{by \cref{eq:kp-assumption}}\\
    &= \abs{T}\frac{(\beta/\beta_c)^{d+1}}{1 - \beta/\beta_c}
    \leq \qubits\frac{(\beta/\beta_c)^{d+1}}{1 - \beta/\beta_c} \tag*{by property of $T$}
\end{align*}
\end{proof}

\subsection{Computation of the cluster expansion}

\begin{lemma}[{Version of \cite[Theorem 3.1]{hkt21} for the log-partition function}]
    \label{lem:cluster}
    Let $H = H(\lambda) = \sum_{a=1}^\terms \lambda_a E_a$ be a Hamiltonian on $\qubits$ qubits with Pauli terms, locality $\locality$, and dual interaction graph $\graph$ with max degree $\degree$ (\cref{def:hamiltonian,def:low-intersection-ham}).
    Let $0 \leq \beta < \beta_c = 1/(e(e+1)(1 + e(\degree - 1)))$.
    Then the log-partition function of $H$ can be expressed as a power series in $\beta$,
    \[
        \logpart \deq \log(\tr(e^{-\beta H})) = \sum_{\ell \geq 0} \beta^\ell p_\ell(\lambda),
    \]
    where $p_\ell$ is a degree-$\ell$ homogeneous polynomial in $\lambda$ with the following properties:
    \begin{enumerate}
        \item $p_\ell$ consists of at most $\qubits (e\degree)^\ell$ monomials.
        \item The series decays as $\abs{\sum_{\ell \geq d} \beta^\ell p_\ell(\lambda)} \leq \qubits\frac{(\beta/\beta_c)^d}{1 - \beta/\beta_c}$.
    \end{enumerate}
    Further, after $\bigO{\locality \terms \degree \log \degree}$ pre-processing time, we have the following form of access to $p_\ell$:
    \begin{enumerate}[label=\Alph*.]
        \item The list of monomials that appear in $p_\ell$ can be enumerated in time $\bigO{\ell\degree \mu}$, where $\mu$ is the number of monomials.
        \item The coefficient of any monomial in $p_\ell$ can be computed exactly in $\bigO{\locality \ell^3 + 8^\ell \ell^5\log^2\ell} = (8^\ell + \locality)\poly(\ell)$ time.
    \end{enumerate}
\end{lemma}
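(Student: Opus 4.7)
The plan is to derive the stated series by combining the polymer partition function identity (\cref{lem:mh2}) with the formal log-expansion (\cref{prop:fv-series-defn}), then regroup the resulting cluster series by total polymer size. Writing $\tr(e^{-\beta H}) = 2^\qubits \cdot \ntr(e^{-\beta H}) = 2^\qubits Z$, we get $\logpart = \qubits \log 2 + \log Z$; expanding $\log Z$ and collecting all tuples $(\gamma_1,\dots,\gamma_k) \in K^k$ with $\abs{\gamma_1} + \dots + \abs{\gamma_k} = \ell$ into a single coefficient yields $\logpart = \sum_{\ell \geq 0} \beta^\ell p_\ell(\lambda)$ with $p_0 = \qubits \log 2$. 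Since every surviving tuple contributes a term proportional to $\lambda^{\gamma_1 \cup \dots \cup \gamma_k}$ of total degree $\ell$, $p_\ell$ is homogeneous of degree $\ell$, and the decay bound (Property 2) is then exactly \cref{cor:cluster-truncation} applied at truncation level $d$.

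For Property 1, I count the monomials $\lambda^\gamma$ appearing in $p_\ell$. Only multisets $\gamma$ of size $\ell$ with connected support on $\graph$ can contribute, since if $\varphi(H_{(\gamma_1,\dots,\gamma_k)}) \neq 0$ the incompatibility graph is connected, and two incompatible polymers have supports at distance at most one in $\graph$, so the union support is connected as well. To count such multisets, I choose a dominating set $T \subseteq [\terms]$ with $\abs{T} \leq \qubits$ by selecting, for each site, one term whose support contains it. Every connected multiset must contain a term in the closed neighborhood of some $a \in T$, and \cref{fact:cluster-count} bounds the number of connected multisets of size $\ell$ containing a fixed term by $e(1 + e(\degree-1))^{\ell-1}$. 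Multiplying by $\abs{T}$ and the at most $\degree+1$ choices of an adjacent term yields the claim of $\qubits (e\degree)^\ell$.

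For the algorithmic statements, the pre-processing builds adjacency lists for $\graph$ together with, for each site, the list of terms containing it; this takes $\bigO{\locality \terms \degree \log \degree}$ as claimed. For Property A, I enumerate connected multisets of size $\ell$ by a canonical depth-first search that extends the current multiset either by incrementing the multiplicity of an existing term or by attaching a new term adjacent to the current support, deduplicating with a smallest-index rule; each extension step costs $O(\degree)$. For Property B, given a target multiset $\gamma$ of size $\ell$, I sum contributions $\varphi(H_{(\gamma_1,\dots,\gamma_k)}) \cdot (\text{weight factor})$ over all ordered tuples $(\gamma_1, \dots, \gamma_k)$ whose union is $\gamma$ and whose incompatibility graph is connected. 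The weight factor contains a normalized trace of a product of Paulis, which I compute in $\bigO{\locality \ell}$ time per permutation by tracking commutation signs (the trace vanishes unless the product reduces to $\pm\id$ on its support), giving the $\locality \ell^3$ contribution; the Ursell function is evaluated on an at most $\ell$-vertex incompatibility graph by enumerating connected spanning subgraphs, contributing $8^\ell \poly(\ell)$.

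The main obstacle is the $8^\ell \poly(\ell)$ factor in Property B: naively evaluating $\varphi$ over spanning subgraphs has cost $2^{\binom{\ell}{2}}$, which is far too large, so one must exploit that the incompatibility graph has at most $\ell$ vertices and apply a deletion-contraction or direct connected-subgraph enumeration. Ensuring that the combined count of contributing tuples $(\gamma_1,\dots,\gamma_k)$ times the per-tuple cost of computing the Ursell value and the symmetrized Pauli trace fits into $8^\ell \ell^5 \log^2 \ell$, and separating this from the locality-dependent $\locality \ell^3$ term arising from the trace computation, is the fiddliest part of the proof.
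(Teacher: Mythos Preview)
Your derivation of the series, Property~1, and Property~2 matches the paper's approach: both go through the polymer model of \cref{lem:mh2}, the formal expansion of \cref{prop:fv-series-defn}, and the decay bound of \cref{cor:cluster-truncation}, with Property~1 coming from the connected-multiset count (your dominating-set argument is a minor variant of what the paper cites from \cite{hkt21}). Your sketch for Property~A is also essentially what the paper defers to \cite[Section~3.4]{hkt21}.

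The substantive divergence is Property~B. The paper identifies the coefficient of $\lambda^{\cluster V}$ as $\frac{1}{\cluster V!}\,\mdiff_{\cluster V}\logpart$ and invokes \cite[Proposition~3.13]{hkt21}, which computes this derivative directly (via the moment--cumulant relation applied to $\log Z$) within the stated $\bigO{\locality\ell^3 + 8^\ell \ell^5\log^2\ell}$ budget. You instead propose to compute the coefficient by summing, over all ordered tuples $(\gamma_1,\dots,\gamma_k)$ of connected polymers with multiset union equal to the target $\gamma$, the product $\varphi(H_{(\gamma_1,\dots,\gamma_k)})\prod_i w_{\gamma_i}$. This is correct as a formula, but the running time is not under control: when the $\ell$ elements of $\gamma$ are distinct, the number of such ordered tuples is the ordered Bell (Fubini) number, which grows like $\ell!/(2(\ln 2)^{\ell+1})$ and overtakes $8^\ell$ already for moderate $\ell$. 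Restricting to connected incompatibility graphs and connected $\gamma_i$ does not bring this down to $8^\ell$ in general. Separately, each polymer weight $w_{\gamma_i}$ contains a sum over $|\gamma_i|!$ permutations of Pauli traces, so your $\locality\ell^3$ accounting for the trace part is too optimistic as stated. You flag this as ``the fiddliest part'' but do not resolve it; the paper sidesteps it entirely by working with the Taylor-derivative formulation rather than the Ursell sum.
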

\begin{proof}
Most of this directly follows from the results in Section 3 of \cite{hkt21}, with property 2.\ following from \cref{cor:cluster-truncation}.
We direct the reader to \cite{hkt21} for the proofs of those results.
Here, we only identify where our statements come from.
 
First, we observe that the $\logpart = \log(\tr(e^{-\beta H}))$ has a formal multivariate Taylor series expansion around $\lambda = (0, \dots, 0)$ \cite[Eqs.\ 24 and 25]{hkt21},
\begin{align*}
    \logpart = \sum_{\ell \geq 0} \underbrace{\sum_{\cluster V : \abs{\cluster V} = \ell} \frac{\lambda^{\cluster V}}{\cluster V!} \mdiff_{\cluster V} \logpart}_{\beta^\ell p_\ell(\lambda)},
\end{align*}
where $\cluster V$ denotes a multiset over terms $[\terms]$; $\lambda^{\cluster V} = \prod_{a \in \cluster V} \lambda^a$ is the product of all coefficients associated to the terms in $\cluster V$ with multiplicity; and $\mdiff_{\cluster V}\logpart = \prod_{a \in \cluster V} \frac{\partial}{\partial \lambda_a} \logpart |_{\lambda = (0,\dots,0)}$ is the log-partition function, with derivatives taken for every $\lambda_a$ with $a \in \cluster V$ with multiplicity, evaluated at $\lambda = (0,\dots,0)$.
Note that $\mdiff_{\cluster V} \logpart$ is a constant in $\lambda$.
This formal expression becomes a true equality whenever the right-hand side series converges.

The coefficient, $\mdiff_{\cluster V} \logpart$, is only non-zero when $\cluster V$ is connected \cite[Proposition 3.5]{hkt21}.
Because of the degree bound $\degree$, the number of such ``clusters'' (connected $\cluster V$) of size $\ell$ is merely exponential, bounded by $\qubits e\degree (1 + e(\degree - 1))^{\ell - 1} \leq \qubits (e \degree)^\ell$ \cite[Proposition 3.6]{hkt21}.
This gives the monomial bound.

As discussed in the previous section, the cluster expansion of $\log(\ntr(e^{-\beta H})) = \log(\tr(e^{-\beta H})) - \log(2^\qubits)$ is equal to its multivariate Taylor series expansion when terms associated to a particular monomial are collected together.
In particular, we have that, in the notation from \cref{lem:mh2},
\begin{align*}
    \sum_{\ell \geq d+1} \beta^\ell p_\ell(\lambda)
    = \sum_{\ell \geq d+1} \sum_{k = 1}^{\infty} \sum_{\substack{(\gamma_1,\dots,\gamma_k) \in K^{k} \\ \abs{\gamma_1} + \dots + \abs{\gamma_k} = \ell}} \varphi(H_{(\gamma_1,\dots,\gamma_k)}) \prod_{i \in [k]} w_{\gamma_i}.
\end{align*}
Thus, we can apply the bound in \cref{cor:cluster-truncation} to get property 2.

As for the running time statements, enumerating monomials amounts to enumerating clusters, which is done in \cite[Section 3.4]{hkt21}.
Computing a coefficient amounts to computing $\mdiff_{\cluster V}\logpart$, which is done in \cite[Proposition 3.13]{hkt21}.
\end{proof}

\end{document}